\newtheorem{theorem}{Theorem}[section]
\newtheorem{lemma}[theorem]{Lemma}
\newtheorem{prop}[theorem] {Proposition}
\newtheorem{cor}[theorem]  {Corollary}
\newtheorem{definition}[theorem] {Definition}
\theoremstyle{definition}
\theoremstyle{remark}
\newtheorem*{remark}{Remark}
\newtheorem*{example}{Example}
\newcommand{\e}{\mathrm{e}} % exponential e
\newcommand{\N}{\mathbb{N}}
\newcommand{\R}{\mathbb{R}}
\newcommand{\Z}{\mathbb{Z}}
\renewcommand{\P}{\mathbb{P}}
\newcommand{\E}{\mathbb{E}}
\newcommand{\dd}{\mathrm{d}} %integration d
\newcommand{\be}{\begin{equation}}
\newcommand{\ee}{\end{equation}}
\def\1{{\mathchoice {1\mskip-4mu\mathrm l}      % Blackboard bold 1
{1\mskip-4mu\mathrm l}
{1\mskip-4.5mu\mathrm l} {1\mskip-5mu\mathrm l}}}
\begin{document}

\title{Hierarchical cubes: Gibbs measures and decay of correlations} 
\date{\today}

\author[S.~Jansen]{Sabine Jansen}
\address{Mathematisches Institut, Ludwig-Maximilians-Universit{\"a}t, 80333 M{\"u}nchen; Munich Center for Quantum Science and Technology (MCQST), Schellingstr.~4, 80799 M{\"u}nchen, Germany.}
\email{jansen@math.lmu.de}

\author[J.~P.~Neumann]{Jan Philipp Neumann}
\address{Mathematisches Institut, Ludwig-Maximilians-Universit{\"a}t, 80333 M{\"u}nchen, Germany.}
\email{neumann@math.lmu.de}

\begin{abstract}
	We study a hierarchical model of non-overlapping cubes of sidelengths $2^j$, $j\in \Z$. The model allows for cubes of arbitrarily small size and the activities need not be translationally invariant. It can also be recast as a spin system on a tree with a long-range hard-core interaction. We prove necessary and sufficient conditions for the existence and uniqueness of Gibbs measures, discuss fragmentation and condensation, and prove bounds on the decay of two-point correlation functions.
	
	\bigskip 
		
    \noindent \emph{Mathematics Subject Classification}: 82B20
    
	\medskip     
    
    \noindent \emph{Keywords}: existence and uniqueness of Gibbs measures; exponential decay of correlations; condensation; hierarchical model
\end{abstract}

\maketitle

\section{Introduction}  \label{sec:intro}

\noindent The present article is an addendum to the article \cite{jansen2020hierarchical}. There, the first author introduced a discrete polymer model of non-overlapping cubes of unbounded size. At every scale $j \in \N_0$, one places a tiling of $\Z^d$ consisting of translates of the prototypical $j$-block $\{1, \ldots, 2^j\}^d$ in such a way that different blocks only overlap if one is contained in the other. The model is inspired by hierarchical structures in the theory of the renormalisation group \cite{dyson1996, brydges2009}.

Here we complement the analysis from \cite{jansen2020hierarchical} by addressing infinite-volume Gibbs measures and decay of correlations. In addition, we extend the model in two directions: (a) the activities need not be scale-wise constant (same-size cubes may have different activities); (b) we allow for arbitrarily small cubes of side-lengths $2^{-j}$, $j\in \N$. The second extension is motivated by Mandelbrot's fractal percolation model, see e.g.\ \cite{mandelbrot1982, chayes^2durret1988, klattwinter2020, klattwinter2023}, and the question whether the random sets in that model can be  constructed with Gibbs measures.

Our central result, Theorem~\ref{thm:existence}, gives necessary and sufficient criteria for the existence of a Gibbs measure and says that, if a Gibbs measure exists, it is unique. The criteria are formulated as summability conditions on (effective) activities. Propositions~\ref{prop:fragmentation} and~\ref{prop:condensation} link these conditions to the physical phenomena of fragmentation and condensation in which mass escapes to microscopic cubes (fragmentation) or macroscopic cubes (condensation). Roughly, an infinite-volume Gibbs measure exists if and only if there is neither fragmentation nor condensation. In particular, at the point $\mu=\mu_c$ of first-order phase transitions in  \cite[Section~5]{jansen2020hierarchical}, the Gibbs measure is unique -- non-differentiability of the pressure and non-uniqueness of Gibbs measures are not equivalent, contrary to the Ising model, see \cite[Theorem~3.34]{friedlivelenik2017}.

An interesting subtlety shows up for small cubes and inhomogeneous activities: Theorem~\ref{thm:existence} covers situations in which some finite-volume partition functions are infinite but nevertheless a Gibbs measure exists. Clearly, ``finite volumes'', as in bounded regions of $\R^d$, generally contain an infinitude of small cubes and may have infinite partition function but we can still define a Gibbs measure with the GNZ conditions from the theory of Gibbs point processes (\cite[Theorems~2 and 2']{nguyenzessin1979}, \cite[Equation~(1)]{betsch2023gibbspointprocess}), see Definition~\ref{def:gibbs}. (An alternative point of view, based on a reinterpretation of our model as a spin system on a tree, is sketched below.) With this definition already the existence of finite-volume Gibbs measures becomes a non-trivial issue.

As a minor observation, Mandelbrot's fractal percolation is not given by one of our Gibbs measures (Proposition~\ref{prop:mandelbrot}).

Turning back to the scale-wise homogeneous setting, we investigate decay of correlations (Theorems~\ref{thm:decay-hom-general} and~\ref{thm:decay-hom-parametric}). This is motivated by standard results for spin systems on $\Z^d$, where we expect exponential decay of truncated correlations away from phase transitions and slower decay at the point of phase transition, e.g.\ algebraic at critical points (\cite{friedlivelenik2017, duminilcopin2020isingsurvey, benesetal2020decorrelation}); ``exponential'' refers to bounds of the type $C\exp(-\gamma||i-j||)$ as the Euclidean distance $||i-j||$ between the sites indexing spins $\sigma_i$, $\sigma_j$ goes to infinity. In our model the Euclidean distance is replaced by an ultrametric distance $D(B,B')= 2^{d\,\mathrm{lcs}(B,B')}$ between blocks, with $\mathrm{lcs}(B,B')\in \Z$ the scale of the smallest block containing both $B$ and $B'$. In the gas phase, correlations decay as $\exp(-\mathrm{const} \, D(B,B'))$ (Theorem~\ref{thm:decay-hom-general}); for a concrete class of models with a first-order phase transition, the decay at the point of phase transition is instead subexponentially bounded by $\exp(- \mathrm{const} \, D(B,B')^\alpha)$, $\alpha \in (0, 1)$, (Theorem~\ref{thm:decay-hom-parametric}).

To conclude we highlight an alternative point of view on our model as a spin system on a tree. The set of admissible cubes has a natural tree structure. By drawing edges from any block to those it contains one scale below, our cube set becomes a regular $2^d$-ary tree (also called Cayley tree or Bethe lattice of degree/branching number $2^d$) whose edges we prefer to think of as oriented toward decreasing scales, see Figure~\ref{fig:1d} (cf.\ also \cite{kendallwilson2003quadtrees} with regard to this tree structure arising via the successively refined tiling of $\R^d$ with cubes). Each vertex of the tree comes with a $\{0,1\}$-valued spin variable. The interaction is a long-range hard-core interaction: if a vertex (block) is occupied (spin~$1$), then all descendants and ancestors in the tree must be unoccupied (spin~$0$).

There is an extensive body of literature related to the theory of Gibbs measures on trees and their relationship with tree-indexed Markov chains, the fundamentals of which can be found in Georgii \cite[Chapter~12]{georgii2011gibbslatticespin}. More recent developments include the articles \cite{coquillekuelskeleny2023, coquillekuelskeleny2023+} by Coquille, K\"ulske and Le Ny. The general theory is not  applicable to our context because of the interaction's long range. Our Gibbs measures are not described as tree-indexed Markov chains, instead we provide a concrete characterisation as \emph{hierarchical measures} (defined in Lemma~\ref{lem:hierarchical}).

Our results easily generalise to similar trees. The most obvious  modification simply replaces the cube sidelength base/subdivision parameter $M = 2$ by any larger integer. 
All our results then carry over mutatis mutandis and, in particular, the relationship of scaling limits and fractal/Mandelbrot percolation with its Galton-Watson construction remains intact.

The remainder of this article is organised as follows. In Section~\ref{sec:results} we first give the formal definition of our model and then formulate our main result, Theorem~\ref{thm:existence}, and the observation on Mandelbrot percolation not being representable via one of our Gibbs measures (Subsection~\ref{subsec:existence}). Next we explain the relation with condensation and fragmentation in Subsection~\ref{subsec:cond-frag} and analyse decay of correlations in Subsection~\ref{subsec:decay}. Proofs are given in Section~\ref{sec:proofs}.

\section{Main results} \label{sec:results}

\subsection{Existence and uniqueness of infinite-volume Gibbs measures} \label{subsec:existence}

Fix a dimension $d\in \N$. We define the set of admissible blocks to be the union $\mathbb B = \bigcup_{j \in \Z} \mathbb B_j$ 
where 
\[
	\mathbb B_j : = \bigl\{ \mathbf k + [0,2^j)^d \mid \mathbf k \in 2^j \N_0^d \bigr\}
\] 
denotes the set of blocks of scale $j$ or simply $j$-blocks. Thus $\mathbb B_j$ consists of non-overlapping cubes of sidelength $2^j$ all contained in the non-negative orthant $\R_+^d = [0,\infty)^d$. Allowing for $\mathbf k \in 2^j \Z^d$ instead results in a model on $\R^d$ in which different orthants do not interact. Cubes may be arbitrarily small or large -- we do not impose any bounds on $j$. 

Blocks interact through a hard-core potential that forbids overlap. Thus $B \in \mathbb B$ cannot occur alongside any additional block $B'$ from
\[
	\mathbb I_B := \{ B' \in \mathbb B \mid B' \cap B \neq \varnothing \},
\] 
the set of blocks that intersect $B$. Prohibiting, in particular, block multiplicities other than $0$ or $1$, our configuration space $\Omega = \mathcal P(\mathbb B)$ consists of the collection of all subsets $\mathcal B \subset \mathbb B$. It is equipped with the $\sigma$-algebra generated by the sets $\{\mathcal B \in \Omega \mid \mathcal B \ni B\}$, $B\in \mathbb B$. 

\begin{figure}
\centering
\begin{tikzpicture}
	\draw [help lines] (0,1) -- ++(0,-8) -- ++(10,0) -- ++(0,8) -- cycle;
	\draw[->] (-.5,-6.5) -- (10.5,-6.5); % draw real line
	\begin{scope}[color=gray!50] % draw admissible intervals and tree structure
		\draw[dash dot] (10,1) -- (5,0); % edge coming from higher scale
		\foreach \j in {0,...,5} % iterate over (negative) scales
			{
				\foreach \k [parse=true] in {0,...,2^\j-1} % iterate over left interval boundary (pre-scaling)
					{
						\draw[[-), line width=1 pt] (10*\k*.5^\j,-\j) -- +(10*.5^\j,-0); % draw half-open interval
						\fill (10*\k*.5^\j+5*.5^\j,-\j) circle (2 pt); % draw node representing interval in tree structure
						\draw[dash dot] (10*\k*.5^\j+5*.5^\j,-\j) -- +(-2.5*.5^\j,-1); % draw edge to left child
						\draw[dash dot] (10*\k*.5^\j+5*.5^\j,-\j) -- +(2.5*.5^\j,-1); % draw edge to right child
					}
			}
		\end{scope}
		% draw active intervals manually
		\begin{scope}[color=green] % choose color
			\tikzmath{\j=2;} % declare (negative) scale
			\foreach \k in {1} % iterate over manually chosen intervals
				{
					\draw[[-), line width=1 pt] (10*\k*.5^\j,-\j) -- +(10*.5^\j,-0); % color avtivated interval
					\fill (10*\k*.5^\j+5*.5^\j,-\j) circle (2 pt); % color node
					\fill[opacity=.1] (10*\k*.5^\j,1) -- ++(0,-8) -- ++(10*.5^\j,0) -- ++(0,8) -- cycle; % shade ``excluded volume''
					\draw[[-), line width=1 pt] (10*\k*.5^\j,-6.5) -- +(10*.5^\j,-0); % draw projection onto real line
				}
		\end{scope}
		\begin{scope}[color=cyan]
			\tikzmath{\j=3;}
			\foreach \k in {0,4,6}
				{
					\draw[[-), line width=1 pt] (10*\k*.5^\j,-\j) -- +(10*.5^\j,-0);
					\fill (10*\k*.5^\j+5*.5^\j,-\j) circle (2 pt);
					\fill[opacity=.1] (10*\k*.5^\j,1) -- ++(0,-8) -- ++(10*.5^\j,0) -- ++(0,8) -- cycle;
					\draw[[-), line width=1 pt] (10*\k*.5^\j,-6.5) -- +(10*.5^\j,-0);
			}
		\end{scope}
		\begin{scope}[color=blue]
			\tikzmath{\j=4;}
			\foreach \k in {2,11}
				{
					\draw[[-), line width=1 pt] (10*\k*.5^\j,-\j) -- +(10*.5^\j,-0);
					\fill (10*\k*.5^\j+5*.5^\j,-\j) circle (2 pt);
					\fill[opacity=.1] (10*\k*.5^\j,1) -- ++(0,-8) -- ++(10*.5^\j,0) -- ++(0,8) -- cycle;
					\draw[[-), line width=1 pt] (10*\k*.5^\j,-6.5) -- +(10*.5^\j,-0);
			}
		\end{scope}
		\begin{scope}[color=violet]
			\tikzmath{\j=5;}
			\foreach \k in {7,21,30,31}
				{
					\draw[[-), line width=1 pt] (10*\k*.5^\j,-\j) -- +(10*.5^\j,-0);
					\fill (10*\k*.5^\j+5*.5^\j,-\j) circle (2 pt);
					\fill[opacity=.1] (10*\k*.5^\j,1) -- ++(0,-8) -- ++(10*.5^\j,0) -- ++(0,8) -- cycle;
					\draw[[-), line width=1 pt] (10*\k*.5^\j,-6.5) -- +(10*.5^\j,-0);
			}
		\end{scope}
\end{tikzpicture}
\caption{A hierarchical cube configuration in dimension $d=1$. The top part shows a truncated portion of the set $\mathbb B$ of admissible cubes (intervals) with mid-point nodes and dashed lines emphasising the implicit binary tree structure. Occupied cubes are highlighted in scale-dependent colours, both in the tree structure (top) and on the real line (bottom). The shaded areas visualise excluded volume effects.} \label{fig:1d}
\end{figure}

\begin{figure}
\centering
\begin{tikzpicture}
	\draw [help lines] (0,0) -- ++(10,0) -- ++(0,10) -- ++(-10,0) -- cycle;
	\draw[->] (-.5,0) -- (10.5,0); % draw x-axis
	\draw[->] (0,-.5) -- (0,10.5); % draw y-axis
	% draw active squares manually
	\begin{scope}[color=yellow] % choose color
		\tikzmath{\j=1;} % declare (negative) scale
		\foreach \k [parse=true] in {(0*10*.5^\j,1*10*.5^\j)} % iterate over manually chosen squares
			{
				\fill \k circle (2pt); % mark bottom left corner
				\draw[line width=2 pt] \k -- +(10*.5^\j,0); % mark bottom side
				\draw[line width=2 pt] \k -- +(0,10*.5^\j); % mark left side
				\fill[opacity=.5] \k -- ++(10*.5^\j,0) -- ++(0,10*.5^\j) -- ++(-10*.5^\j,0) -- cycle; % fill square
			}
	\end{scope}
	
	\begin{scope}[color=green]
		\tikzmath{\j=2;}
		\foreach \k [parse=true] in {(0*10*.5^\j,0*10*.5^\j), (10*.5^\j,0*10*.5^\j), (3*10*.5^\j,1*10*.5^\j)}
			{
				\fill \k circle (2pt);
				\draw[line width=2 pt] \k -- +(10*.5^\j,0);
				\draw[line width=2 pt] \k -- +(0,10*.5^\j);
				\fill[opacity=.5] \k -- ++(10*.5^\j,0) -- ++(0,10*.5^\j) -- ++(-10*.5^\j,0) -- cycle;
			}
	\end{scope}
	
	\begin{scope}[color=cyan]
		\tikzmath{\j=3;}
		\foreach \k [parse=true] in {(5*10*.5^\j,1*10*.5^\j), (4*10*.5^\j,4*10*.5^\j), (6*10*.5^\j,4*10*.5^\j), (4*10*.5^\j,6*10*.5^\j), (7*10*.5^\j,6*10*.5^\j), (5*10*.5^\j,7*10*.5^\j)}
			{
				\fill \k circle (2pt);
				\draw[line width=2 pt] \k -- +(10*.5^\j,0);
				\draw[line width=2 pt] \k -- +(0,10*.5^\j);
				\fill[opacity=.5] \k -- ++(10*.5^\j,0) -- ++(0,10*.5^\j) -- ++(-10*.5^\j,0) -- cycle;
			}
	\end{scope}
	
	\begin{scope}[color=blue]
		\tikzmath{\j=4;}
		\foreach \k [parse=true] in {(13*10*.5^\j,1*10*.5^\j), (14*10*.5^\j,1*10*.5^\j), (3*10*.5^\j,4*10*.5^\j), (2*10*.5^\j,5*10*.5^\j), (5*10*.5^\j,5*10*.5^\j), (8*10*.5^\j,5*10*.5^\j), (3*10*.5^\j,6*10*.5^\j), (6*10*.5^\j,6*10*.5^\j), (15*10*.5^\j,10*10*.5^\j), (12*10*.5^\j,11*10*.5^\j), (11*10*.5^\j,12*10*.5^\j), (14*10*.5^\j,14*10*.5^\j), (13*10*.5^\j,15*10*.5^\j)}
			{
				\fill \k circle (2pt);
				\draw[line width=2 pt] \k -- +(10*.5^\j,0);
				\draw[line width=2 pt] \k -- +(0,10*.5^\j);
				\fill[opacity=.5] \k -- ++(10*.5^\j,0) -- ++(0,10*.5^\j) -- ++(-10*.5^\j,0) -- cycle;
			}
	\end{scope}
\end{tikzpicture}
\caption{A hierarchical cube configuration in dimension $d=2$ consisting of occupied cubes (squares) on finitely many colour-coded scales. The small points in the squares' bottom left corners mark the respective $\mathbf k$-parameters from the definition of $\mathbb B$, the squares' closed sides (bottom and left) are emphasised with darker lines.} \label{fig:2d}
\end{figure}

In the following, the symbol $\omega$ stands not only for elements $\omega \in \Omega$ but also for the configuration-valued random variable given by the identity map $\omega \mapsto \omega$. This allows us to use notational shorthand from probability theory, e.g., the above generating events are written as $\{\omega \ni B\}$. 

We define infinite-volume Gibbs measures through the GNZ equation (\cite[Theorem~2, Eq.~(3.3) or Theorem~2, Cond.~(3)]{nguyenzessin1979}, \cite[Eq.~(1)]{betsch2023gibbspointprocess}), borrowed from the theory of Gibbs point processes rather than the usual DLR conditions. We discuss the relation between the definitions toward the end of this subsection. 

\begin{definition} \label{def:gibbs}
	A probability measure $\mathbb P$ on $\Omega$ is a Gibbs measure for the activity $z: \mathbb B \to \R_+$ if 
	\[
		\E \Bigl[ \sum_{B \in \omega} F(B, \, \omega) \Bigr] = \sum_{B \in \mathbb B} z(B) \, \E \Bigl[ \1_{ \{ \omega \cap \mathbb I_B = \varnothing \} } \, F(B, \, \omega \cup \{B\}) \Bigr]. 
\] 
	for all measurable $F:\mathbb B\times \Omega\to \R_+$. The set of Gibbs measures is denoted $\mathcal G(z)$. 
\end{definition} 

\noindent By standard arguments, it is enough to consider functions $F$ that are indicators of measurable sets of the form $\{B_0\} \times \mathcal A \subset \mathbb B \times \Omega$. Moreover, $F$ is only ever evaluated on pairs $(B, \, \mathcal B)$ with $B \in \mathcal B$ so $\mathcal A$ can be chosen to only depend on the configuration outside $\{B_0\}$. Thus, $\P$ is in $\mathcal G(z)$ if and only if 
\begin{equation} \label{eq:gnz2}
		\P \bigl( \omega \ni B, \, \omega \setminus \{B\} \in \mathcal A \bigr) = z(B) \, \P \bigl( \omega \cap \mathbb I_B = \varnothing, \, \omega \setminus \{B\} \in \mathcal A \bigr)
\end{equation} 
for all $B \in \mathbb B$ and all measurable subsets $\mathcal A \subset \Omega$. Of course, every Gibbs measure is supported on the non-overlap event 
\[
	\Delta : = \{ \mathcal B \in \Omega \mid \forall B,B' \in \mathcal B: \, B \neq B' \Rightarrow  B \cap B' = \varnothing \}.
\]
The grand-canonical \emph{partition function} of a cube $\Lambda \in \mathbb B$ is a sum over configurations boxed in by $\Lambda$, i.e., elements of the space
\[
	\Omega_\Lambda : = \{\mathcal B \in \Omega \mid \forall B \in \mathcal B: B \subset \Lambda\}.
\] 
Accounting for the activity and the non-overlap prescription, it is given by 
\[
	\Xi_\Lambda(z) : = \sum_{\mathcal B \in \Omega_\Lambda} \1_\Delta(\mathcal B) \prod_{B \in \mathcal B} z(B). 
\]
The empty product is $1$ by convention. In view of 
\begin{equation}\label{eq:xifi-zfi}
	1 + \sum_{B \subset \Lambda} z(B) \leq \Xi_\Lambda(z)  \leq \prod_{B \subset \Lambda}\bigl( 1 + z(B) \bigr),
\end{equation} 
the partition function of $\Lambda$ is finite if and only if $\sum_{B \subset \Lambda} z(B) < \infty$. This also allows restricting the sum defining $\Xi_\Lambda(z)$ to the (countably many) finite configurations in $\Omega_\Lambda$, a fact we use without further comment.

Given a fixed block $B \in \mathbb B$ of scale $j \in \Z$, its \emph{effective activity}, cf.\ \cite[Section~3]{jansen2020hierarchical}, is 
\begin{equation} \label{eq:zhat}
	\widehat z(B) : = \frac{z(B)}{\prod_{B' \in \mathbb B_{j-1}: B' \subset B} \Xi_{B'}(z)}.
\end{equation}
Here and throughout the rest of the paper, we employ the convention that fractions with infinite denominators are zero.

If, on every individual scale $j \in \Z$, the activity assigns the same value to all $j$-blocks, we call the activity \emph{homogeneous} or \emph{scale-wise constant} and write $z_j$ and $\widehat z_j$ rather than $z(B)$ and $\widehat z(B)$ for $B \in \mathbb B_j$, $j \in \Z$. 

\begin{theorem} \label{thm:existence-hom}
	If the activity is homogeneous, then $\mathcal G(z) \neq \varnothing$ if and only if 
	\[
		\sum_{j \in \N_0} 2^{dj}z_{-j} < \infty \quad \text{ and } \quad \sum_{j \in \N_0} \widehat z_j < \infty.
	\] 	
	If these conditions are satisfied, the Gibbs measure is unique: $|\mathcal G(z)| = 1$. 
\end{theorem}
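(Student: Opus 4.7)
By homogeneity, the partition function depends only on the scale: write $\xi_j := \Xi_\Lambda(z)$ for any $\Lambda\in\mathbb B_j$. Partitioning $\Lambda$ into its $2^d$ scale-$(j-1)$ children and separating the single-block contribution $\{\Lambda\}$ yields the renormalisation-type recursion $\xi_j = z_j + \xi_{j-1}^{2^d} = \xi_{j-1}^{2^d}(1+\widehat z_j)$, valid for every $j\in\Z$. Combined with the bracketing~\eqref{eq:xifi-zfi}, the first summability condition is equivalent to $\xi_0<\infty$ and thence inductively to $\xi_j<\infty$ for all $j\in\Z$. The second condition $\sum \widehat z_j<\infty$ is equivalent, via $\log(1+x)\leq x$, to positivity of the tail product $P_0:=\prod_{k=1}^\infty(1+\widehat z_k)^{-1}$, which in the hierarchical picture plays the role of the probability that no strict ancestor of a fixed scale-$0$ block $\Lambda_0$ is occupied.

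For sufficiency I would construct $\P$ directly as a hierarchical measure in the sense of Lemma~\ref{lem:hierarchical}. For a fixed $\Lambda_0\in\mathbb B_0$ with ancestral chain $\Lambda_0\subset\Lambda_1\subset\cdots$, declare $\Lambda_j$ occupied with conditional probability $\widehat z_j/(1+\widehat z_j)$ given that its parent $\Lambda_{j+1}$ is not occupied, independently across siblings, and inside every maximal empty subtree of scale $k$ sample from the finite-volume density $\prod_B z(B)/\xi_k$. Positivity of $P_0$ ensures this specification is Kolmogorov-consistent and assembles into a genuine probability (rather than a sub-probability) measure. A direct computation against~\eqref{eq:gnz2}, splitting into the cases $B=\Lambda_j$ and $B\subset\Lambda_0$, verifies the GNZ identity by repeated use of $\xi_j=\xi_{j-1}^{2^d}(1+\widehat z_j)$ and the definition~\eqref{eq:zhat} of $\widehat z_j$.

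Conversely, let $\P\in\mathcal G(z)$. Iterating~\eqref{eq:gnz2} on a finite compatible family $\mathcal B=\{B_1,\dots,B_n\}\in\Omega_{\Lambda_0}\cap\Delta$ gives $\P(\mathcal B\subseteq\omega)=\bigl(\prod_B z(B)\bigr)\,\P\bigl(\omega\cap\bigcup_i\mathbb I_{B_i}=\varnothing\bigr)$, and a countable inclusion--exclusion on $\Omega_{\Lambda_0}\cap\Delta$ forces $\P$ to decompose on the cylinder $\sigma$-algebra generated by events in $\Lambda_0$ and its ancestors as a product of (i) a finite-volume Gibbs factor $\prod_B z(B)/\xi_0$ inside $\Lambda_0$, necessitating $\xi_0<\infty$ and hence $\sum_{j\geq0} 2^{dj} z_{-j}<\infty$, and (ii) a top-down ancestor factor of the form used in the sufficiency construction, whose total mass equals $P_0$, necessitating $\sum \widehat z_j<\infty$. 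The same decomposition simultaneously identifies $\P$ with the hierarchical measure on a generating $\pi$-system, yielding uniqueness.

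The main obstacle lies in this last step: deriving the tree-like \emph{independence} between disjoint sibling subtrees purely from the local GNZ equation, without a priori assuming any hierarchical structure. This is essentially the content of Lemma~\ref{lem:hierarchical}, and it is where both summability conditions must be used simultaneously to control the infinite product over scales.
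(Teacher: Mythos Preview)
Your overall architecture---recursion $\xi_j=\xi_{j-1}^{2^d}(1+\widehat z_j)$, hierarchical construction for sufficiency, GNZ iteration for necessity---matches the paper's route, which deduces Theorem~\ref{thm:existence-hom} from the general Theorem~\ref{thm:existence} via the chain of Lemmas~\ref{lem:topdown}--\ref{lem:infinite-pf}. But your necessity argument has a real gap.

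You write that inclusion--exclusion ``forces $\P$ to decompose \dots\ as a product of (i) a finite-volume Gibbs factor $\prod_B z(B)/\xi_0$ inside $\Lambda_0$, necessitating $\xi_0<\infty$''. This is circular: the factor $1/\xi_0$ only makes sense once $\xi_0<\infty$, so you are assuming the conclusion. The honest mechanism is the identity
\[
	\P\bigl(\omega\cap\mathbb A_{\Lambda_0}^*=\varnothing,\ |\omega\cap\mathbb B_{\Lambda_0}|<\infty\bigr)=\xi_0\,\P(\omega\cap\mathbb I_{\Lambda_0}=\varnothing),
\]
obtained by summing the multivariate GNZ equation over finite $\mathcal B\subset\mathbb B_{\Lambda_0}$ (this is Eq.~\eqref{eq:h2} in the paper). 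If $\xi_0=\infty$ then $\P(\omega\cap\mathbb I_{\Lambda_0}=\varnothing)=0$; by homogeneity every partition function is infinite, so $\widehat\rho_z\equiv 0$, and the top-down relation forces $\P(\omega\ni B)=0$ for all $B$, whence $\P=\delta_\varnothing$. Feeding this back into GNZ gives $0=z(B)$ for every $B$, contradicting $\xi_0=\infty$. Your sketch skips this and never isolates the key identity.

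A second, smaller issue: you locate the ``main obstacle'' in Lemma~\ref{lem:hierarchical} and say both summability conditions are needed there ``simultaneously''. In fact Lemma~\ref{lem:hierarchical} is the easy part---the hierarchical measure always exists as a push-forward of a Bernoulli field under $\mathcal B\mapsto\max(\mathcal B)$, with no summability hypotheses at all. The substantive steps are (a) showing every Gibbs measure satisfies the top-down condition (Lemma~\ref{lem:topdown}), which is where the displayed identity above enters, and (b) checking that $\mathbb H_{\widehat\rho_z}$ actually solves GNZ, which reduces to the product formula $\Xi_B(z)=(1+\widehat z)^{\mathbb B_B}$ (Lemmas~\ref{lem:rhohat-to-z}--\ref{lem:finite-pf}). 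The two summability conditions enter \emph{separately}: the first governs small scales (finiteness of $\xi_j$), the second large scales (non-degeneracy of the ancestor product $P_0$).
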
 

\noindent The second summability condition in Theorem~\ref{thm:existence-hom} and its relation to the phenomenon of condensation was discussed at length in \cite{jansen2020hierarchical}. There, negative scales were excluded, which amounts to $z_j=0$ for all $j<0$. The first summability condition,  which simply ensures that every block has finite partition function, is trivial in that situation.

Noting that ``anchoring'' the summability conditions of Theorem~\ref{thm:existence-hom} at scales other than $0$ changes nothing, said theorem is an immediate consequence of the following general statement. Indeed, the first summability condition above, implying $\Xi_\Lambda(z) < \infty$ for all $\Lambda \in \mathbb B$, makes Condition~(i) below vacuously true while the second summability condition above translates effortlessly into the subsequent Condition~(ii).

\begin{theorem} \label{thm:existence}
	$\mathcal G(z) \neq \varnothing$ if and only if the following two conditions hold true: 
	\begin{itemize}
		\item [(i)] For all $\Lambda \in \mathbb B$ with $\Xi_\Lambda(z) = \infty$, 
			\[
				\sum_{\substack{ B\subset \Lambda: \\ \Xi_B(z) < \infty }} z(B) = \infty. 
			\] 
		\item [(ii)] For every $B\in \mathbb B$, 
			\[
				\sum_{B' \supset B} \widehat z(B') < \infty. 
			\]
	\end{itemize} 
	If these conditions are satisfied, the Gibbs measure is unique: $|\mathcal G(z)| = 1$. 
\end{theorem}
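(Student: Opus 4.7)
My plan is to use the characterisation of Gibbs measures as hierarchical measures (Lemma~\ref{lem:hierarchical}), which reduces $\mathcal G(z)\neq\varnothing$ to the existence of a coherent family of probabilities $p_B:=\P(\text{no strict ancestor of }B\text{ is occupied})$ together with, for $B\in\mathbb B^{\mathrm{fin}}:=\{B':\Xi_{B'}(z)<\infty\}$, finite-volume Gibbs laws on $\Omega_B$. The identity $\Xi_{B^{(1)}}(z)=(1+\widehat z(B^{(1)}))\prod_{B'}\Xi_{B'}(z)$, with the product running over the $2^d$ children of the parent $B^{(1)}$ of $B$, feeds directly into the telescoping recursion
\[
	p_B=\frac{p_{B^{(1)}}}{1+\widehat z(B^{(1)})},
\]
where $B^{(k)}$ denotes the scale-$k$ ancestor of $B$. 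Both directions of the theorem revolve around this recursion.

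For the sufficient direction I would note that, by non-overlap, at most one ancestor of $B$ can be occupied, so $p_{B^{(K)}}\to 1$ and the recursion pins down
\[
	p_B=\prod_{k\geq 1}\frac{1}{1+\widehat z(B^{(k)})}.
\]
Condition (ii) is exactly the convergence of this product to a positive value. I would then construct the measure by Kolmogorov extension, defining cylinder probabilities over any finite collection of blocks via the finite-volume Gibbs weight inside the smallest common $B^*\in\mathbb B^{\mathrm{fin}}$ multiplied by $p_{B^*}$; for blocks $B$ with $\Xi_B(z)=\infty$ the construction is recursively extended through the children, with $B$ itself forced unoccupied because $\widehat z(B)=0$. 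The GNZ equation for $B\in\mathbb B^{\mathrm{fin}}$ is then immediate from $\P(\omega\ni B)=z(B)p_B/\Xi_B(z)$ and $\P(\omega\cap\mathbb I_B=\varnothing)=p_B/\Xi_B(z)$.

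The principal obstacle is verifying GNZ when $\Xi_B(z)=\infty$ and $z(B)>0$: the construction forces $\P(\omega\ni B)=0$, so one must independently check $\P(\omega\cap\mathbb I_B=\varnothing)=0$. Iterating the hierarchical factorisation through the infinite-$\Xi$ sub-tree down to its boundary $\partial T_B$ (the set of maximal $B'\subseteq B$ with $\Xi_{B'}(z)<\infty$) yields, for the conditional probability that no block inside $B$ is occupied, the product $\prod_{B'\in\partial T_B}1/\Xi_{B'}(z)$. The bound $\log\Xi_{B'}(z)\leq\sum_{B''\subseteq B'}z(B'')$ (from $\Xi\leq\prod(1+z)$) shows this vanishes precisely under condition (i). Running the argument in reverse gives the necessity of (i): failure of (i) at $\Lambda$ would make the product positive, contradicting GNZ for some $B\subset\Lambda$ with $\Xi_B(z)=\infty$ and $z(B)>0$; such a $B$ exists because $\Xi_\Lambda(z)=\infty$ while the good mass $\sum_{B'\subset\Lambda,\,\Xi_{B'}(z)<\infty}z(B')$ is finite.

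For the necessity of (ii) I would combine the disjointness identity $\sum_{k\geq 1}\P(B^{(k)}\in\omega)=1-p_B$ with the telescoped recursion $\sum_{k=1}^K\P(B^{(k)}\in\omega)=p_{B^{(K)}}-p_B$, which together force $\lim_{K\to\infty}p_{B^{(K)}}=1$; should (ii) fail, however, the product formula gives $p_{B^{(K)}}=0$ for every $K$, a contradiction. Uniqueness is built in: every Gibbs measure is hierarchical by Lemma~\ref{lem:hierarchical}, the $p_B$ are determined by the recursion with boundary value $1$ at infinity, and the conditional distributions inside each $B\in\mathbb B^{\mathrm{fin}}$ are the prescribed finite-volume Gibbs laws. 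The most delicate technical point throughout is the consistency of the recursive extension across blocks with $\Xi_B(z)=\infty$: one has to verify that the Kolmogorov extension produces a genuine probability measure and that the GNZ equation holds simultaneously at every scale, not merely block by block.
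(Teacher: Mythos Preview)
Your approach is essentially the paper's: both hinge on the hierarchical-measure characterisation, the ancestor recursion $p_B=p_{B^{(1)}}/(1+\widehat z(B^{(1)}))$ with Condition~(ii) controlling the product $\prod_{k\geq 1}(1+\widehat z(B^{(k)}))$, and Condition~(i) controlling the descendant product $\prod_{B'\in\partial T_B}\Xi_{B'}(z)$. The paper constructs the candidate measure more concretely as the push-forward of a product-Bernoulli field under $\mathcal B\mapsto\max(\mathcal B)$ (Lemma~\ref{lem:hierarchical}), which sidesteps the Kolmogorov-extension bookkeeping you correctly flag as delicate; otherwise the skeleton matches.

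There is, however, one concrete gap. For the \emph{sufficiency} of Condition~(i) you must show $\prod_{B'\in\partial T_B}\Xi_{B'}(z)=\infty$ whenever $\sum_{B''\subset B,\,\Xi_{B''}<\infty}z(B'')=\infty$, but the inequality you invoke, $\log\Xi_{B'}\leq\sum_{B''\subset B'}z(B'')$, points the wrong way: it only shows the product is \emph{finite} when the sum is finite (i.e.\ necessity of~(i)). For sufficiency you need the \emph{lower} bound $\Xi_{B'}\geq 1+\sum_{B''\subset B'}z(B'')$ from Eq.~\eqref{eq:xifi-zfi}, which gives
\[
	\prod_{B'\in\partial T_B}\Xi_{B'}(z)\ \geq\ \prod_{B'\in\partial T_B}\Bigl(1+\sum_{B''\subset B'}z(B'')\Bigr)\ \geq\ 1+\sum_{B'\in\partial T_B}\sum_{B''\subset B'}z(B'')\ =\ \infty.
\]
This is precisely how the paper closes the argument in Lemma~\ref{lem:infinite-pf}. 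A secondary point: your ``iterating the hierarchical factorisation down to $\partial T_B$'' is in general an infinite iteration (the subtree of blocks with $\Xi=\infty$ need not be finite), so the identity you use for the conditional void probability requires a limiting argument; the paper handles this by first proving the product formula $\Xi_B(z)=(1+\widehat z)^{\mathbb B_B}$ for all $B$ (Lemmas~\ref{lem:pf-as-product}--\ref{lem:infinite-pf}) and only then reading off void probabilities.
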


\noindent Condition~(i) is obviously satisfied when all finite-volume partition functions $\Xi_\Lambda(z)$, $\Lambda \in \mathbb B$, are finite. Perhaps surprisingly, however, Condition~(i) does allow for $\Xi_\Lambda(z) = \infty$, as long as the subcubes of $\Lambda$ with finite partition function carry enough mass. In the proof of Lemma~\ref{lem:infinite-pf}, we show that this happens precisely when $\Lambda$ has disjoint subcubes $\Lambda_1, \Lambda_2, \ldots$ with finite partition functions $\Xi_{\Lambda_j}(z) < \infty$ for all $j \in \N$ but infinite product $\prod_{j \in \N} \Xi_{\Lambda_j}(z) = \infty$. 

Regarding Condition~(ii), note that its summability assertion only needs to be checked for a single $B \in \mathbb B$ since the set $\{B' \in \mathbb B \mid B' \supset B\}$ changes only by finitely many blocks when switching between any two choices for $B$. This is because we work on $\R_+^d$. On $\R^d$, we would need to check the condition for $2^d$ blocks, one for each orthant. In particular, Condition~(ii) is automatically satisfied whenever some block $\Lambda \in \mathbb B$ has infinite partition function as one then has $\widehat z(B) = 0$ for all blocks $B \supset \Lambda$. Thus, no activity can fail both Conditions~(i) and (ii) in Theorem~\ref{thm:existence}.

\begin{example}
	Let $d=1$. Partition the unit interval as 
	\[
		[0,1) = \Bigl[0,\frac12\Bigr) \cup \Bigl[\frac12, \frac12+\frac14\Bigr) \cup \cdots = \bigcup_{j\in \N_0} [s_j,s_{j+1})
	\] 
	with $s_j = \sum_{k=1}^j 2^{-k}$ for all $j \in \N_0$. Fixing some sequence $(\lambda_j)_{j \in \N_0}$ in $\R_+$, let $\mathbb P$ be the measure on $\Omega$ under which the events $\{\omega \ni B\}$, $B \in \mathbb B$, are independent with 
	\[
		\P \bigl( \omega \ni [s_j,s_{j+1}) \bigr) = \frac {\lambda_j}{1 + \lambda_j} 
	\] 
	for all $j\in \N_0$, and $\P(\omega \ni B) = 0$ for all other blocks. Then $\mathbb P$ is a Gibbs measure for the activity
	\[
		z(B) = \begin{cases}
			\lambda_j & \text{if } B=[s_j, s_{j+1}), \, j \in \N_0, \\
			0 & \text{else}.
		\end{cases}
	\]
	In particular, $\mathcal G(z) \neq \varnothing$ even if $\Xi_{[0,1)}(z) = \prod_{j \in \N_0} (1 + \lambda_j) = \infty$. In the latter case the activity $z$ is not uniquely determined by the measure $\P$: Changing $z$ in an arbitrary way on the cubes $B$ from $	\{[s_j, 1) \mid j \in \N_0\} \cup \{[0, 2^j) \mid j \in \N\}$ -- i.e., those with infinite partition function $\Xi_B(z)=\infty$ -- results in a new activity $z'$ for which $\P\in \mathcal G(z')$ as well. Of course, similar constructions are possible in any dimension $d \in \N$.
\end{example}

\begin{remark}
	When the infinite-volume Gibbs measure exists, volumes with infinite partition function may be identified via the chain of equivalences 
	\begin{align*}
		\Xi_\Lambda(z) = \infty & \Leftrightarrow \P \bigl( \{ B \in \omega \mid B \subset \Lambda \} = \varnothing \bigr) = 0 \\
		& \Leftrightarrow \P \bigl( |\{ B \in \omega \mid B \subset \Lambda \}| < \infty \bigr) < 1 \\
		& \Leftrightarrow \P \bigl( | \{ B \in \omega \mid B \subset \Lambda \} | = \infty \bigr) = 1
	\end{align*}
	that we mention without proof. The first equivalence is interesting because $1/\Xi_\Lambda(z)$ is often thought of as an emptiness probability, thus infinite partition functions correspond to zero probability for being empty. The last equivalence says that in fact such blocks then automatically contain infinitely many blocks. 
\end{remark} 

\noindent Let us briefly comment on our definition of Gibbs measures. Definition~\ref{def:gibbs} corresponds to the GNZ equation for Gibbs point processes. As noted in the introduction, they can be formulated even when some finite-volume partition function $\Xi_\Lambda(z)$ is infinite, in which case a naive use of the usual DLR conditions would be problematic.

However, the DLR conditions do make sense if we change perspectives, identify a configuration $\mathcal B\in \Omega$ with its indicator function $\1_\mathcal B \in \{0,1\}^\mathbb B$, and view our model as a spin system on a lattice. ``Finite volume'' then no longer refers to bounded regions in $\R_+^d$ but instead to finite subsets $\mathcal L\subset \mathbb B$. The DLR condition, see e.g.\ \cite[Definitions~(1.23) and (2.9)]{georgii2011gibbslatticespin}, associated to a singleton reference volume $\mathcal L = \{B\}$ boils down to the following requirement: 
\begin{align}
	& \P \bigl( \omega \ni B, \, \omega \setminus \{B\} \in \mathcal A \bigr) \notag \\
	& \qquad = \frac{z(B)}{1 + z(B)} \, \P \bigl( \omega \cap \mathbb I_B \setminus \{B\} = \varnothing, \, \omega\setminus \{B\} \in \mathcal A \bigr) \label{eq:dlr-singleton}
\end{align}
for all $B\in \mathbb B$ and all measurable $\mathcal A \subset \Omega$, where we note that the denominator $Z_{\{B\}} = 1 + z(B)$ is just the (finite) partition function in finite (lattice) volume $\mathcal L = \{B\}$ (with empty boundary condition). This is easily seen to be equivalent to Eq.~\eqref{eq:gnz2}: Given $\P$, $z$, $B$, $\mathcal A$, and substituting the latter by
\[
	\{\omega \in \Omega \mid \omega \cap \mathbb I_B = \varnothing\} \cap \mathcal A,
\]
Eq.~\eqref{eq:gnz2} implies
\begin{align*}
	& \P \bigl( \omega \cap \mathbb I_B \setminus \{B\} = \varnothing, \, \omega \setminus \{B\} \in \mathcal A \bigr) \\
	& \qquad = (1 + z(B)) \, \P \bigl( \omega \cap \mathbb I_B = \varnothing, \, \omega \setminus \{B\} \in \mathcal A \bigr)
\end{align*}
while, conversely, Eq.~\eqref{eq:dlr-singleton} yields
\begin{align*}
	& \P \bigl( \omega \cap \mathbb I_B = \varnothing, \, \omega \setminus \{B\} \in \mathcal A \bigr) \\
	& \qquad = \left( 1 - \frac{z(B)}{1 + z(B)} \right) \, \P \bigl( \omega \cap \mathbb I_B \setminus \{B\} = \varnothing, \, \omega \setminus \{B\} \in \mathcal A \bigr) \\
	& \qquad = \frac{1}{1 + z(B)} \, \P \bigl( \omega \cap \mathbb I_B \setminus \{B\} = \varnothing,\, \omega \setminus \{B\} \in \mathcal A \bigr).
\end{align*}
Thus, Definition~\ref{def:gibbs} is equivalent to the single-site DLR conditions for the model treated as a lattice spin system. It is known in some cases that the single-site DLR conditions already imply the full DLR conditions, see \cite[Theorem~1.33]{georgii2011gibbslatticespin} (that general theorem is not applicable here, nevertheless our measures do satisfy the full set of DLR equations). Interestingly Georgii presented the GNZ characterisation as a continuum analogue to that fact, see the first sentence of \cite[Section~3]{georgii1976canonical}.

We conclude this section with an example that is not a Gibbs measure. The example is related to Mandelbrot's percolation process, cf.\ \cite{mandelbrot1982, chayes^2durret1988, klattwinter2020, klattwinter2023}, and it was proposed as a candidate scaling limit for finite-volume Gibbs measures in the context of continuous phase transitions in \cite[Section 5.3]{jansen2020hierarchical}. Let $p\in (0,1]$. Define a measure on configurations of blocks contained in $[0,1)^d$ as follows: Pick the top-level block $[0,1)^d$ with probability $p$. If the block has been chosen, stop. Otherwise, for each of the $2^d$ blocks $B\in \mathbb B_{-1}$ contained in $[0,1)^d$, decide independently for each of them whether to keep them or not, and then move to the next scale below. After $n$ iterations, we obtain a random collection $\mathcal B_n \subset \mathbb B_0\cup \cdots \cup \mathbb B_{-n}$. We let $\mathcal B:= \bigcup_{n\in \N_0} \mathcal B_n$ and let $\mathbb P_p$ be the distribution of $\mathcal B$.

\begin{prop} \label{prop:mandelbrot}
	There is no activity $z: \mathbb B \to \R_+$ such that $\P_p \in \mathcal G(z)$. 
\end{prop}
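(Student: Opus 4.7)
The plan is to test the GNZ equation~\eqref{eq:gnz2} against the single block $B_0 := [0,1)^d$ and derive a contradiction from the fact that Mandelbrot's construction almost surely produces a non-empty configuration. I argue by contradiction: suppose $\P_p \in \mathcal G(z)$ for some activity $z$.

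First, I would note that $\P_p$-almost every $\omega$ consists of blocks $B \subseteq B_0$, all of which intersect $B_0$ and therefore belong to $\mathbb I_{B_0}$. Hence, $\P_p$-almost surely,
\[
	\{\omega \cap \mathbb I_{B_0} = \varnothing\} = \{\omega = \varnothing\},
\]
and Eq.~\eqref{eq:gnz2} applied with $B = B_0$ and $\mathcal A = \Omega$ reduces to
\[
	\P_p(\omega \ni B_0) = z(B_0)\, \P_p(\omega = \varnothing).
\]

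Second, I would verify that $\P_p(\omega = \varnothing) = 0$. By the self-similar construction of $\P_p$, conditioning on rejection at the top level and using independence of the $2^d$ sub-Galton-Watson processes, the quantity $q := \P_p(\omega = \varnothing)$ satisfies the fixed-point equation $q = (1-p)\, q^{2^d}$. For $p \in (0,1]$ the only admissible solution is $q = 0$: $q = 1$ would force $p = 0$, while any $q \in (0,1)$ would require $q^{2^d - 1} = 1/(1-p) > 1$, which is impossible. Alternatively, one can observe that $\{\omega = \varnothing\}$ is contained in the event that each of the $\sum_{k=0}^n 2^{dk}$ independent Bernoulli($p$) coin flips attached to the blocks of scales $0, -1, \ldots, -n$ lands tails, so $\P_p(\omega = \varnothing) \leq (1-p)^{\sum_{k=0}^n 2^{dk}} \to 0$ as $n \to \infty$.

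Combining these two observations, the right-hand side of the reduced GNZ equation equals $z(B_0)\cdot 0 = 0$, while the left-hand side equals $\P_p(\omega \ni B_0) = p > 0$ by construction, a contradiction. There is essentially no obstacle: the whole argument is a one-block consistency check, and no information on $z$ at smaller scales is needed.
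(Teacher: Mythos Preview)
Your proof is correct and follows essentially the same approach as the paper: test the GNZ equation~\eqref{eq:gnz2} at the single block $B_0=[0,1)^d$ with $\mathcal A=\Omega$, observe that $\P_p(\omega\cap\mathbb I_{B_0}=\varnothing)=\P_p(\omega=\varnothing)=0$ (the paper writes this directly as the infinite product $\prod_{B'\subset[0,1)^d}(1-p)$, which is your second argument), and conclude $p=0$, a contradiction. Your additional fixed-point justification $q=(1-p)q^{2^d}$ is a nice alternative but not needed.
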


\begin{proof}
	The GNZ Eq.~\eqref{eq:gnz2} with $\mathcal A = \Omega$ reads
	\[
		\P(\omega \ni B) = z(B) \, \P(\omega \cap \mathbb I_B = \varnothing)
	\]
	which, upon choosing $\P = \P_p$ and $B = [0, 1)^d$, becomes
	\[
		p = z \bigl( [0, 1)^d \bigr) \, \prod_{\substack{B' \in \mathbb B: \\ B' \subset [0, 1)^d}} (1 - p).
	\]
	Since $p \in (0,1]$, the former identity absurdly states $p = 0$, no matter the choice of $z$.
\end{proof}

\noindent The case $p = 0$ is excluded because $\P_0(\omega = \varnothing) = 1$ and, hence, $\P_0$ is trivially in $\mathcal G(0)$.

\subsection{Fragmentation and condensation} \label{subsec:cond-frag}

The standard procedure for constructing Gibbs measures is to consider finite-volume Gibbs measures (or specification kernels) and pass to the limit. 
In this spirit, we  may start from a finite volume $\Lambda \in \mathbb B$ and consider an upward truncation of the activity
\[
	z_\Lambda(B) : = \begin{cases} 
			z(B) & \text{if } B \subset \Lambda, \\
			0 & \text{else},
		\end{cases} 
\]
as well as downward truncations $z^{(n)}$, $n\in \Z$, given by 
\[
	z^{(n)}(B) : = \begin{cases} 
			z(B) & \text{if } B \in \bigcup_{j \geq -n} \mathbb B_j,\\
			0 & \text{else}.
		\end{cases} 
\]
By standard arguments, the (essentially finite-volume) Gibbs measure for the two-sidedly truncated activity $z_\Lambda^{(n)}$ is 
\[
	\P_{\Lambda}^{(n)} \bigl( \omega = \mathcal B \bigr) = \1_\Delta(\mathcal B) \, \frac{\prod_{B \in \mathcal B} z^{(n)}(B)}{\Xi_\Lambda(z ^{(n)})}, \quad \mathcal B \in \Omega_\Lambda.
\]
Our first result in this section addresses the limit $n \to \infty$ at fixed $\Lambda$ when Condition~(i) in Theorem~\ref{thm:existence} fails.

\begin{prop} \label{prop:fragmentation} 
	If the activity $z$ violates Condition~(i) in Theorem~\ref{thm:existence}, then there exists an infinite set $\mathbb V(z) \subset \mathbb B$ such that, for all $\Lambda, B \in \mathbb V(z)$ with $B \subset \Lambda$,
	\[
		\lim_{n\to \infty} \P_\Lambda^{(n)}(\exists B' \subset B: \omega \ni B') = \lim_{n\to \infty} \P_\Lambda^{(n)}(\exists B' \subset B: \omega \cap \mathbb V(z) \ni B') = 1
	\]
	but also
	\[
		\lim_{n\to \infty} \P_\Lambda^{(n)}(\omega \ni B) = 0, \quad \lim_{n \to \infty} \P_\Lambda^{(n)} ( \exists B' \subset B: \omega \setminus \mathbb V(z) \ni B' ) < 1.
	\]
\end{prop}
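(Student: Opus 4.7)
The plan is to extract from the failure of Condition~(i) a witness block $\Lambda_0 \in \mathbb B$ with $\Xi_{\Lambda_0}(z) = \infty$ yet
\[
	M := \sum_{B \subset \Lambda_0, \, \Xi_B(z) < \infty} z(B) < \infty,
\]
and to define $\mathbb V(z) := \{B \subset \Lambda_0 : \Xi_B(z) = \infty\}$. The hierarchical recursion $\Xi_B(z) = z(B) + \prod_i \Xi_{B'_i}(z)$ over immediate subcubes, combined with $z(B) < \infty$, forces each $B \in \mathbb V(z)$ to have at least one immediate subcube still in $\mathbb V(z)$; iterating from $\Lambda_0$ produces an infinite descending chain, so $|\mathbb V(z)| = \infty$. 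Since $\Xi$ is monotone in the volume, $\mathbb V(z)$ is also upward-closed within $\Omega_{\Lambda_0}$, so the four assertions really need to be established for arbitrary pairs $B \subset \Lambda$ in $\mathbb V(z)$.

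\textbf{Two ingredients.} Fix $\Lambda \supset B$ in $\mathbb V(z)$, set $\omega_B := \{B' \in \omega : B' \subset B\}$, and let $E$ be the event that no strict ancestor of $B$ in $\Lambda$ is in $\omega$. Telescoping $\Xi_{B^{(k)}} = z(B^{(k)}) + \Xi_{B^{(k-1)}} \prod_i \Xi_{S^{(k)}_i}$ along the ancestor chain $B = B^{(0)} \subsetneq \cdots \subsetneq B^{(L)} = \Lambda$ with sibling blocks $S^{(k)}_i$ produces the hierarchical independence formula
\[
	\P_\Lambda^{(n)}(\omega_B \in \, \cdot \, \mid E) = \P_B^{(n)}(\omega \in \, \cdot \,), \qquad \P_\Lambda^{(n)}(E) = \frac{\Xi_B(z^{(n)})}{\Xi_B(z^{(n)}) + \sum_{k=1}^L z(B^{(k)})/V_k^{(n)}},
\]
with $V_k^{(n)} := \prod_{k' \leq k, \, i} \Xi_{S^{(k')}_i}(z^{(n)}) \geq 1$; since $\Xi_B(z^{(n)}) \nearrow \infty$ while the finite sum stays uniformly bounded by $\sum_k z(B^{(k)}) < \infty$, one gets $\P_\Lambda^{(n)}(E) \to 1$. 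Separately, splitting any admissible configuration into its $\mathbb V(z)$- and $\mathbb V(z)^c$-parts and relaxing joint non-overlap to individual non-overlap, together with~\eqref{eq:xifi-zfi}, yields
\[
	\Xi_B(z) \leq \Xi_B(z \1_{\mathbb V(z)}) \cdot \Xi_B(z \1_{\mathbb V(z)^c}), \qquad \Xi_B(z \1_{\mathbb V(z)^c}) \leq \prod_{B' \subset B, \, B' \notin \mathbb V(z)} \bigl(1 + z(B')\bigr) \leq \e^M.
\]
In particular $\Xi_B(z\1_{\mathbb V(z)}) = \infty$, so $\P_B^{(n)}(\omega \subset \mathbb V(z)^c) = \Xi_B(z^{(n)} \1_{\mathbb V(z)^c})/\Xi_B(z^{(n)}) \to 0$, while $\P_B^{(n)}(\omega \subset \mathbb V(z)) \geq 1/\Xi_B(z^{(n)} \1_{\mathbb V(z)^c}) \to 1/\Xi_B(z \1_{\mathbb V(z)^c}) \geq \e^{-M} > 0$.

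\textbf{Deducing the four assertions.} The first follows from $\P_\Lambda^{(n)}(\omega_B \neq \varnothing) \geq \P_\Lambda^{(n)}(E)\bigl(1 - 1/\Xi_B(z^{(n)})\bigr) \to 1$; the second from the fact that $E^c$ forces $\omega_B = \varnothing \subset \mathbb V(z)^c$, so
\[
	\P_\Lambda^{(n)}(\omega_B \cap \mathbb V(z) = \varnothing) = \P_\Lambda^{(n)}(E) \, \P_B^{(n)}(\omega \subset \mathbb V(z)^c) + \P_\Lambda^{(n)}(E^c) \to 0.
\]
For the third, the GNZ identity~\eqref{eq:gnz2} applied with $\mathcal A = \Omega$ and the identification $\{\omega \cap \mathbb I_B = \varnothing\} = \{\omega_B = \varnothing\} \cap E$ give
\[
	\P_\Lambda^{(n)}(\omega \ni B) = z(B) \, \P_\Lambda^{(n)}(\omega_B = \varnothing, \, E) = z(B) \, \P_\Lambda^{(n)}(E)/\Xi_B(z^{(n)}) \to 0.
\]
Finally $\P_\Lambda^{(n)}(\omega_B \subset \mathbb V(z)) \geq \P_\Lambda^{(n)}(E) \cdot \P_B^{(n)}(\omega \subset \mathbb V(z))$ has $\liminf \geq \e^{-M} > 0$, so the complementary probability has $\limsup \leq 1 - \e^{-M} < 1$.

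\textbf{Main obstacle.} The delicate point I foresee is upgrading this $\limsup$-bound in the fourth assertion to an honest limit strictly below $1$ as the statement requires. That should follow from convergence of the ratio $\Xi_B(z^{(n)} \1_{\mathbb V(z)})/\Xi_B(z^{(n)})$ as $n \to \infty$, obtained by unrolling the recursion $\Xi_\Lambda = z(\Lambda) + \prod_i \Xi_{B'_i}$ down the $\mathbb V(z)$-tree rooted at $B$ to its $\mathbb V(z)^c$-leaves (the recursion terminates there because $\mathbb V(z)^c$ is downward-closed) combined with dominated convergence.
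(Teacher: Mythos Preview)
Your setup and the arguments for the first three assertions are correct and close in spirit to the paper's proof: both rely on the hierarchical conditioning that, given the event $E$ that no strict ancestor of $B$ inside $\Lambda$ is occupied, the configuration below $B$ has law $\P_B^{(n)}$, together with $\P_\Lambda^{(n)}(E)\to 1$. The paper packages this via the hierarchical measure $\mathbb H_{\widehat\rho}$ and effective activities $\widehat z,\widehat\rho$, whereas you work directly with partition functions; the two are equivalent here, and your choice of $\mathbb V(z)=\{B\subset\Lambda_0:\Xi_B(z)=\infty\}$ coincides with the paper's set restricted to $\mathbb B_{\Lambda_0}$.

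The genuine gap is the fourth assertion. You correctly obtain $\limsup_n \P_\Lambda^{(n)}(\omega_B\not\subset\mathbb V(z))\le 1-\e^{-M}<1$, but the statement asserts an actual limit. Your proposed fix, ``unroll the recursion down the $\mathbb V(z)$-tree to its $\mathbb V(z)^c$-leaves and use dominated convergence'', does not work as stated: you yourself noted earlier that $\mathbb V(z)$ contains an infinite descending chain, so the recursion never terminates along such paths and there is no finite expression to which dominated convergence could be applied. Even with the truncation $z^{(n)}$ the depth of the unrolling grows with $n$, and both the integrand and the reference measure in the implicit expectation depend on $n$, so bounded/dominated convergence is not immediate.

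The paper closes this gap by a squeeze argument (its Lemma~3.9): for the upper bound one restricts to finitely many scales, $\P_B^{(n)}(\omega\cap\mathbb F_B(z)\cap\mathbb B_B^{(m)}=\varnothing)$, passes to the limit $n\to\infty$ on this finite event (where it equals $1/(1+\widehat z)^{\mathbb B_B^{(m)}}$, using that $\widehat z$ vanishes on $\mathbb V(z)$), and then lets $m\to\infty$; for the lower bound one uses exactly your inequality $\P_B^{(n)}(\omega\subset\mathbb V(z))\ge 1/\Xi_B(z^{(n)}\1_{\mathbb V(z)^c})$, rewritten as $(1+\widehat z^{(n)})^{-\mathbb F_B(z)}$ via the identity $\Xi_B(z\1_{\mathbb F_B(z)})=\prod_{M\in\mathbb M_B(z)}\Xi_M(z)=(1+\widehat z)^{\mathbb F_B(z)}$. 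Both bounds meet at $1/(1+\widehat z)^{\mathbb B_B}$, giving the honest limit. Incorporating this squeeze step (phrased either via $\widehat z$ or, equivalently, via the maximal blocks $\mathbb M_B(z)$) would complete your proof.
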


\noindent Roughly, if Condition~(i) in Theorem~\ref{thm:existence} fails, then  a positive fraction of mass is lost to ``infinitely small blocks'' along $\mathbb V(z)$ whose individual contributions all vanish in the limit. The most extreme instances of this phenomenon occur when the latter set contains all blocks $B \subset \Lambda$. In fact, this case always applies to homogeneous activities violating the first summability condition in Theorem~\ref{thm:existence-hom}.

\begin{cor} \label{cor:fragmentation-hom}
	Suppose the activity $z$ is scale-wise constant, given by $(z_j)_{j \in \Z}$, such that $\sum_{j \in \N_0} 2^{dj} z_{- j} = \infty$. Then
	\[
		\lim_{n \to \infty} \P_\Lambda^{(n)}(\omega \ni B) = 0, \quad \lim_{n \to \infty} \P_\Lambda^{(n)}(\exists B' \subset B: \omega \ni B') = 1
	\]
	for every $\Lambda \in \mathbb B$ and all blocks $B \subset \Lambda$.
\end{cor}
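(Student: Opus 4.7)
The plan is to use the homogeneity of $z$ to reduce the problem to an explicit recursion. By the tree-automorphism symmetry of the homogeneous model, the partition function $\Xi_\Lambda(z^{(n)})$ depends only on the scale $j$ of $\Lambda$ and on $n$; denote it $Y_{j,n}$. Conditioning on whether $\Lambda$ itself belongs to the configuration yields
\[
	Y_{j,n} = z_j + Y_{j-1,n}^{2^d} \qquad (j > -n),
\]
with $Y_{-n,n} = 1 + z_{-n}$ and $Y_{j,n} = 1$ for $j < -n$. The lower bound $\Xi_\Lambda(z^{(n)}) \geq 1 + \sum_{B \subset \Lambda} z^{(n)}(B)$ from \eqref{eq:xifi-zfi} then gives
\[
	Y_{j,n} \geq 1 + 2^{dj} \sum_{\ell=-j}^{n} 2^{d\ell} z_{-\ell},
\]
and the hypothesis $\sum_{\ell\in \N_0} 2^{d\ell} z_{-\ell}=\infty$ forces $Y_{j,n} \to \infty$ as $n \to \infty$ for every $j \in \Z$. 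This is the sole quantitative input needed for both conclusions.

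For fixed $B \in \mathbb B_j$ with $B \subset \Lambda \in \mathbb B_{j_\Lambda}$, iterating the hierarchical decomposition ``either a block is occupied, or each of its $2^d$ children is independently a copy of the model'' along the ancestor chain of $B$ gives
\[
	\P_\Lambda^{(n)}(\omega \ni B) = \frac{z_j}{Y_{j_\Lambda,n}} \prod_{k=j}^{j_\Lambda - 1} Y_{k,n}^{2^d - 1},
\]
where the exponent $2^d - 1$ tallies the siblings at each intermediate scale. Substituting the recursion in the form $Y_{k,n}^{2^d} = Y_{k+1,n} - z_{k+1}$ and telescoping yields
\[
	\P_\Lambda^{(n)}(\omega \ni B) = \frac{z_j}{Y_{j,n}} \prod_{k=j+1}^{j_\Lambda} \left( 1 - \frac{z_k}{Y_{k,n}} \right) \leq \frac{z_j}{Y_{j,n}},
\]
which tends to $0$ as $n \to \infty$, proving the first limit.

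For the second limit, let $N(B)$ denote the event that no strict ancestor of $B$ in $\Lambda$ lies in $\omega$. The same factorisation gives $\P_\Lambda^{(n)}(N(B)) = \prod_{k=j+1}^{j_\Lambda}(1 - z_k/Y_{k,n})$, a finite product whose factors each tend to $1$. On $N(B)^c$ some strict ancestor $B^\ast \supsetneq B$ belongs to $\omega$, so the non-overlap rule forbids any $B' \subseteq B$ from lying in $\omega$; thus $E := \{\exists B' \subseteq B: \omega \ni B'\}$ can only occur on $N(B)$. Conditional on $N(B)$, the restriction of $\omega$ to the subtree rooted at $B$ is distributed as $\P_B^{(n)}$, under which $E^c$ corresponds to the empty configuration and has probability $1/Y_{j,n}$. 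Combining,
\[
	\P_\Lambda^{(n)}(E) = \P_\Lambda^{(n)}(N(B)) \left( 1 - \frac{1}{Y_{j,n}} \right) \to 1 \quad (n \to \infty).
\]

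The main obstacle is keeping the ancestor/descendant/sibling bookkeeping clean enough to derive the closed-form expressions for $\P_\Lambda^{(n)}(\omega \ni B)$ and $\P_\Lambda^{(n)}(N(B))$; once they are in place, both limits reduce transparently to the single input $Y_{j,n} \to \infty$, which is the quantitative expression of the failure of the first summability condition of Theorem~\ref{thm:existence-hom}. Should one prefer to interpret $B' \subset B$ strictly, the same argument applies, with $E^c$ conditional on $N(B)$ now having probability $(1 + z_j)/Y_{j,n}$ instead of $1/Y_{j,n}$, still vanishing in the limit.
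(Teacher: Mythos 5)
Your argument is correct. The core quantitative input is the same as the paper's -- namely that the lower bound \eqref{eq:xifi-zfi} together with $\sum_{j}2^{dj}z_{-j}=\infty$ forces $\Xi_B(z^{(n)})\to\Xi_B(z)=\infty$ for \emph{every} block $B$ -- but your route to the two limits is more self-contained than the paper's. The paper obtains the corollary as the special case $\mathbb V(z)=\mathbb B$ of Proposition~\ref{prop:fragmentation}, whose proof runs through the general machinery of Subsection~\ref{subsec:proofs-existence}: the identification $\P_\Lambda^{(n)}=\mathbb H_{\widehat\rho_{z_\Lambda^{(n)}}}$ and the convergence statements of Lemmas~\ref{lem:convergence-void-weak} and~\ref{lem:convergence-F-void}. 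You instead rederive, by direct sibling/ancestor bookkeeping and telescoping the recursion $Y_{k,n}^{2^d}=Y_{k+1,n}-z_{k+1}$ (which is Eq.~\eqref{eq:pfr}), exactly the product formulas that the hierarchical representation would hand you for free: your $z_j/Y_{j,n}\cdot\prod_{k>j}(1-z_k/Y_{k,n})$ is $\widehat\rho_{z^{(n)}}(B)\,(1-\widehat\rho_{z^{(n)}})^{\mathbb A_B^*\cap\mathbb B_\Lambda}$, and your conditional-independence step (configuration below $B$ given $N(B)$ is $\P_B^{(n)}$) is the factorisation underlying Lemma~\ref{lem:hierarchical}. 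What your approach buys is an elementary, explicitly computable proof of the homogeneous corollary that does not require the general proposition or the infinite set $\mathbb V(z)$; what it gives up is reusability -- the paper's argument covers the inhomogeneous Proposition~\ref{prop:fragmentation} in one stroke, whereas your symmetry reduction ($Y_{j,n}$ depending only on the scale) is specific to scale-wise constant activities. Two minor points of hygiene, neither affecting the result: for $j<-n$ your occupation formula should carry $z^{(n)}(B)=0$ rather than $z_j$ (irrelevant for the limit at fixed $B$), and your reading of $B'\subset B$ as non-strict matches the paper's convention, so the $(1+z_j)/Y_{j,n}$ variant is not needed.
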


\noindent Our second result in this section is about the limit $\Lambda\uparrow \R_+^d$ when Condition~(ii) in Theorem~\ref{thm:existence} fails, assuming Condition~(i) holds true. It follows from Theorem~\ref{thm:existence} that, for every $\Lambda\in \mathbb B$, there is a uniquely defined Gibbs measure $\P_\Lambda\in \mathcal G(z_\Lambda)$ -- no  need to assume $\Xi_\Lambda(z)<\infty$ or to downward truncate the activity.

In the following, the limit $\Lambda\uparrow \R_+^d$ refers to limits along growing sequences $\Lambda_1 \subset \Lambda_2 \subset \ldots $  in $\mathbb B$ with $\bigcup_{n\in \N}\Lambda_n = \R_+^d$, and the limit statements hold true for every such sequence $(\Lambda_n)$.

\begin{prop} \label{prop:condensation}
	If the activity satisfies Condition~(i) but not Condition~(ii) in Theorem~\ref{thm:existence}, then, for every block $B\in \mathbb B$, 
	\[
		\lim_{\Lambda \uparrow \R_+^d} \P_\Lambda(\omega \ni B) = 0, \quad \lim_{\Lambda \uparrow \R_+^d} \P_\Lambda(\exists B' \supset B: \omega \ni B') = 1.
	\]
\end{prop}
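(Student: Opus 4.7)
The plan is to invoke the hierarchical characterisation of $\P_\Lambda$ (Lemma~\ref{lem:hierarchical}) along the ancestor chain of $B$, derive closed-form expressions for the two probabilities, and then use the divergence $\sum_{B' \supset B} \widehat z(B') = \infty$ to conclude. As a preliminary, we observe that the failure of Condition~(ii) together with Condition~(i) forces $\Xi_{B'}(z) < \infty$ for every $B' \supset B$: otherwise the recursion $\Xi_{B''}(z) = z(B'') + \prod_{C \in \mathbb B_{j''-1}, C \subset B''} \Xi_C(z)$ (with $j''$ the scale of $B''$) would propagate infiniteness all the way up the chain, and the convention that fractions with infinite denominators vanish would force $\widehat z$ to be zero on all but finitely many ancestors of $B$, contradicting the divergence. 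Now fix a sequence $(\Lambda_n)$ in $\mathbb B$ with $\Lambda_n \uparrow \R_+^d$. Since $\{B' \in \mathbb B \mid B' \supset B\}$ is totally ordered by inclusion, $\Lambda_n$ lies in this chain for all $n$ large enough; enumerating $B = B^{(0)} \subsetneq B^{(1)} \subsetneq \cdots$ we write $\Lambda_n = B^{(N_n)}$ with $N_n \to \infty$.

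By the hierarchical characterisation of $\P_{\Lambda_n}$, conditioning on the event that no block strictly containing $B^{(k)}$ is occupied yields that the restriction of $\omega$ to blocks contained in $B^{(k)}$ is distributed according to the finite-volume Gibbs measure $\P_{B^{(k)}}$. Combining this with the above recursion and definition~\eqref{eq:zhat}, the conditional probability that $B^{(k)}$ is occupied equals $z(B^{(k)})/\Xi_{B^{(k)}}(z) = \widehat z(B^{(k)})/(1 + \widehat z(B^{(k)}))$, and that it is not equals $1/(1+\widehat z(B^{(k)}))$. Iterating downward from $B^{(N_n)}$ and using that the events $\{\omega \ni B^{(k)}\}$ are pairwise disjoint (any two chain blocks overlap), we obtain after a telescoping computation
\[
	\P_{\Lambda_n}(\omega \ni B) = \frac{\widehat z(B)}{1 + \widehat z(B)} \prod_{k=1}^{N_n} \frac{1}{1 + \widehat z(B^{(k)})},
\]
\[
	\P_{\Lambda_n}(\exists B' \supset B: \omega \ni B') = \sum_{k=0}^{N_n} \P_{\Lambda_n}(\omega \ni B^{(k)}) = 1 - \prod_{k=0}^{N_n} \frac{1}{1 + \widehat z(B^{(k)})}.
\]

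Since $\widehat z(B^{(k)}) \geq 0$, the divergence $\sum_k \widehat z(B^{(k)}) = \infty$ is equivalent to $\prod_k (1 + \widehat z(B^{(k)})) = \infty$, so both products tend to zero as $n \to \infty$, yielding the two claimed limits. The main obstacle is extracting precisely the conditional structure of $\P_{\Lambda_n}$ used above from Lemma~\ref{lem:hierarchical}, namely that conditioning on no strict ancestor of $B^{(k)}$ being occupied produces the finite-volume Gibbs measure $\P_{B^{(k)}}$ inside $B^{(k)}$; once this is in hand, the remainder is pure algebra.
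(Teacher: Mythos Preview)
Your proposal is correct and follows essentially the same approach as the paper: identify $\P_\Lambda$ with the hierarchical measure $\mathbb H_{\widehat\rho_{z_\Lambda}}$, derive the closed forms $\P_{\Lambda_n}(\omega\ni B)=\widehat z(B)/\prod_{k=0}^{N_n}(1+\widehat z(B^{(k)}))$ and $\P_{\Lambda_n}(\omega\cap\mathbb A_B\neq\varnothing)=1-\prod_{k=0}^{N_n}(1+\widehat z(B^{(k)}))^{-1}$, and conclude from the divergence of $\sum_k\widehat z(B^{(k)})$. The paper obtains these same formulas more directly by plugging $\mathcal B=\{B\}$ into Eq.~\eqref{eq:hiero-subsets} and invoking Lemma~\ref{lem:rhohat-summable} for the void probability, which bypasses your iterative conditional argument and renders your preliminary about finite partition functions unnecessary.
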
 

\noindent The interpretation is that all mass escapes to ever larger cubes and no mass at all remains with finite-size cubes. Moreover, as our configurations do not allow for infinite blocks, we see that $\P_\Lambda$ at best converges to the measure concentrated on the empty configuration. This measure is not in $\mathcal G(z)$ because the failure of Condition~(ii) from Theorem \ref{thm:existence} excludes the trivial case $z = 0$. 

\subsection{Decay of correlations} \label{subsec:decay}

In this subsection, we take a look at how occurrence events for pairs of distinct blocks correlate under a Gibbs measure when the blocks are, in a certain sense, far from each other.

For the sake of concreteness, we  consider the homogeneous case only and fix an activity $z$ via a sequence $(z_j)_{j\in \Z}$. By Theorem~\ref{thm:existence-hom}, the existence of a (necessarily unique) Gibbs measure for $z$ also implies that both the \emph{pressure}
\[
	p : = \lim_{j \to \infty} 2^{- dj} \log \Xi_{[0,2^j)^d}(z) = \sum_{j \in \Z} 2^{- dj} \log(1 + \widehat z_j)
\]
and the \emph{stability threshold}
\[
	\theta^* : = \limsup_{j \to \infty} 2^{- dj} \log z_j
\]
are finite with $\theta^* \leq p < \infty$, cf.\ \cite[Theorem~3.1]{jansen2020hierarchical}. (The proofs from~\cite{jansen2020hierarchical}, written for models without small scales, can be adapted to the present setup.)

For $B,B'\in \mathbb B$, let 
\[
	\mathrm{lsc}(B,B'):=  \min\{j \in \Z \mid \exists B'' \in \mathbb B_j : B'' \supset B, B'' \supset B' \}
\]
denote the lowest scale at which there is a block covering both $B$ and $B'$, and set
\[
	 D(B,B'):= 2^{d \, \mathrm{lsc}(B,B')}  \1_{\{B \neq B'\}}.
\]
We view $D(B,B')$ as an (ultra-)metric on the set of blocks and call it the \emph{hierarchical distance} (or metric). 

\begin{theorem} \label{thm:decay-hom-general}
	Suppose that $z$ admits a Gibbs measure $\P$. Then $\theta^* \leq p < \infty$ and, for all disjoint $B, B' \in \mathbb B$,
	\[
		\P(\omega \supset \{B, B'\}) - \P(\omega \ni B) \, \P(\omega \ni B') = \P(\omega \ni B) \, \P(\omega \ni B') \, R_{\mathrm{lcs}(B, B')}
	\]
	where $R_j = \prod_{k \geq j} (1 + \widehat z_k) - 1$, $j \in \mathbb Z$, satisfies $R_j \downarrow 0$ as $j \uparrow \infty$ as well as
	\[
		\limsup_{j \to \infty} 2^{- dj} \, \log(R_j) = \limsup_{j \to \infty} 2^{- dj} \, \log(\widehat z_j) = \theta^* - p \in [- \infty, 0].
	\]
\end{theorem}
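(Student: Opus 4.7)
The plan is to compute explicit scale-by-scale formulas for $\P(\omega \ni B)$ and $\P(\omega \supset \{B, B'\})$ in terms of the effective activities $\widehat z_k$, then divide and take a limit. Throughout, for $B \in \mathbb B_j$, write $B^{(k)}$ for the unique scale-$(j+k)$ block containing $B$, $k \in \N_0$.

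First, for a single block $B \in \mathbb B_j$, I would compute the finite-volume probability $\P_{B^{(N)}}(\omega \ni B)$ by a standard tree computation: conditioning on $\omega \ni B$ forces emptiness of all ancestors $B^{(1)}, \ldots, B^{(N)}$ and prohibits all descendants of $B$, while each of the $2^d - 1$ ``uncles'' at each intermediate scale contributes its own partition function. Substituting the recursion $\Xi_{B^{(k)}}(z) = z_{j+k} + \prod_{B'' \subset B^{(k)}, B'' \in \mathbb B_{j+k-1}} \Xi_{B''}(z)$ and using the definition of $\widehat z$, one obtains a telescoping that collapses to
\[
\P_{B^{(N)}}(\omega \ni B) = \widehat z_j \prod_{k=0}^{N} \frac{1}{1 + \widehat z_{j+k}},
\]
so passing to $N \to \infty$ gives $\P(\omega \ni B) = \widehat z_j \prod_{k \geq j}(1 + \widehat z_k)^{-1}$.

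Next, for disjoint $B \in \mathbb B_j$ and $B' \in \mathbb B_{j'}$ with $j_0 := \mathrm{lcs}(B, B')$, I let $\tilde B$ be the unique scale-$j_0$ block containing both and repeat the telescoping on $\tilde B^{(N)}$. The new feature is that inside $\tilde B$ the ancestor chains of $B$ and $B'$ split at two distinct scale-$(j_0 - 1)$ sub-blocks (distinct precisely because $B, B'$ are disjoint). Careful bookkeeping of this branching produces
\[
\P_{\tilde B^{(N)}}(\omega \supset \{B, B'\}) = \widehat z_j \widehat z_{j'} \prod_{k=j}^{j_0-1}\frac{1}{1+\widehat z_k} \prod_{k=j'}^{j_0-1}\frac{1}{1+\widehat z_k} \prod_{k=j_0}^{j_0+N}\frac{1}{1+\widehat z_k}.
\]
Dividing by $\P_{\tilde B^{(N)}}(\omega \ni B)\,\P_{\tilde B^{(N)}}(\omega \ni B')$, the two ``downward'' factors each cancel once, and the ``upward'' factor $\prod_{k=j_0}^{j_0+N}(1+\widehat z_k)^{-1}$ appears twice in the denominator but only once in the numerator, leaving a net ratio of $\prod_{k=j_0}^{j_0+N}(1 + \widehat z_k)$. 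Letting $N \to \infty$ and using $\sum_k \widehat z_k < \infty$ from Theorem~\ref{thm:existence-hom} to identify the limit yields the asserted identity with $R_{j_0} = \prod_{k \geq j_0}(1 + \widehat z_k) - 1$.

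The qualitative properties of $R_j$ then follow easily. Monotone decrease $R_j \downarrow 0$ is immediate since $R_j$ is the tail of a convergent product of factors $> 1$. The inequality $\theta^* \leq p$ comes from $z_j \leq \Xi_{[0,2^j)^d}(z)$ by comparing one-term and full sums in the partition function. For the asymptotic identity, writing $X_j := \Xi_{[0,2^j)^d}(z)$ and using $\widehat z_j = z_j/X_{j-1}^{2^d}$ gives
\[
2^{-dj} \log \widehat z_j = 2^{-dj} \log z_j - 2^{-d(j-1)} \log X_{j-1},
\]
and the second term converges to $p$, so $\limsup_j 2^{-dj}\log \widehat z_j = \theta^* - p$. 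Finally, $R_j \geq \widehat z_j$ yields one inequality between limsups, while $\log(1 + R_j) = \sum_{k \geq j}\log(1 + \widehat z_k) \leq \sum_{k \geq j}\widehat z_k$ combined with the super-exponential decay $\widehat z_k \leq \exp((\theta^* - p + \eps)2^{dk})$ (for large $k$, any $\eps > 0$) makes the tail sum comparable to its leading term because the gap $2^{d(k+1)} - 2^{dk}$ blows up with $k$; consequently $R_j \leq C\widehat z_j$ for large $j$, giving the matching upper bound.

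The main obstacle is the combinatorial bookkeeping of the telescoping for two blocks: one must match index windows carefully to confirm that the ``above-$\tilde B$'' factors cancel exactly once rather than twice, which is what produces the nontrivial factor $1 + R_{j_0}$. Beyond that, the asymptotic analysis is routine; the boundary case $\theta^* - p = -\infty$ is handled by replacing $\theta^* - p + \eps$ with an arbitrarily negative constant in the domination argument.
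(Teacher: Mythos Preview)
Your approach is essentially the paper's: compute $\P(\omega\ni B)$ and $\P(\omega\supset\{B,B'\})$ explicitly in terms of the $\widehat z_k$, then analyse the tail product $R_j$. The only packaging difference is that the paper quotes the hierarchical-measure identity $\P=\mathbb H_{\widehat\rho_z}$ (established earlier) to write the probabilities down directly, whereas you telescope finite-volume partition-function ratios and pass to the limit; both routes yield the same formulas, and the paper's tail estimate (its Lemma on sums $\sum_{k\ge j}\e^{-r b^k}$) is exactly your ``leading term dominates'' argument.

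One correction is needed in your last step. The inequality ``$R_j\le C\,\widehat z_j$ for large $j$'' is false in general: $\limsup_j 2^{-dj}\log\widehat z_j=\theta^*-p$ bounds $\widehat z_j$ only from above, not from below (for instance $\widehat z_j$ may vanish for infinitely many $j$, while $R_j>0$ whenever any later $\widehat z_k$ is positive). What your domination argument actually delivers is $R_j\le C\exp\bigl((\theta^*-p+\eps)\,2^{dj}\bigr)$ for large $j$, which is exactly what is needed for $\limsup_j 2^{-dj}\log R_j\le\theta^*-p+\eps$ and hence for the claimed equality after sending $\eps\downarrow0$. The boundary case $\theta^*=p$ is not covered by that bound (there is no decay), but it is handled directly by $R_j\to0$, which already forces $\limsup_j 2^{-dj}\log R_j\le0$. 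You should also make explicit why the finite-volume limit $\P_{B^{(N)}}\to\P$ is legitimate; in the paper this is immediate from the uniqueness theorem and the identification of both measures as hierarchical measures.
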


\noindent Note that the conclusions of the above Theorem~\ref{thm:decay-hom-general} in particular yield an exponential decorrelation of the form
\[
	|\P(\omega \supset \{B, B'\}) - \P(\omega \ni B) \, \P(\omega \ni B')| \leq C_{\mathrm{lcs}(B, B')}(\theta) \, \mathrm e^{- (p - \theta) \, D(B, B')}
\]
for all $\theta \in (\theta^*, p)$ and disjoint $B, B' \in \mathbb B$ with $\lim_{j \to \infty} C_j(\theta) \to 0$. Assuming now $\theta^* > - \infty$, for $\theta = \theta^*$, i.e.\ for the supremal rate $p - \theta^* \geq 0$, we can only guarantee
\[
	\limsup_{j \to \infty} 2^{- dj} \, \log(C_j(\theta^*)) = 0
\]
but we have no generally available bound on $\limsup_{j \to \infty} C_j(\theta^*)$.

Exponential decay of correlations usually involves some Euclidean distance between lattice sites (for spin systems) or, in stochastic geometry, a Hausdorff distance on the set of non-empty compact subsets of $\R^d$ (``particles'' or ``grains''), cf.\ \cite[Subsection~3.3]{benesetal2020decorrelation}. Our theorem instead works with the hierarchical distance. Note that if the Euclidean or Hausdorff metric between $B$ and $B'$ is large then so is the hierarchical distance,
\[
	2^{\mathrm{lcs}(B, B')} \geq \sup_{\mathbf x \in B, \mathbf x' \in B'} \max\{|x_k - x'_k| \mid k \in \{1, \ldots, d\}\},
\]
but the converse is false; at fixed (positive) Hausdorff distance, $\mathrm{lcs}(B, B')$ can be made arbitrarily large and, for prescribed $\mathrm{lcs}(B, B')$, the blocks can be chosen to have arbitrarily small Hausdorff distance.

The strict inequality $\theta^* < p$, implicitly needed for the described exponential decay, characterises the gas phase in the original paper, cf.\ \cite[Proposition~4.3]{jansen2020hierarchical} in particular. While the condensed phase is marked by the absence of Gibbs measures, the existence of a Gibbs measure with $\theta^* = p < \infty$ defines the coexistence region.

A situation that naturally lends itself to a refinement of the above analysis is the one encountered in \cite[Subsection~5.4]{jansen2020hierarchical} where the activity, tuned via a \emph{chemical potential}, yields a parametrised system possibly undergoing a \emph{phase transition} at a critical chemical potential that may or may not admit a Gibbs measure with $\theta^* = p < \infty$. For simplicity and concreteness, we henceforth only consider parametric homogeneous activities $z = z(\mu, J, \alpha)$ given by
\[
	z_j(\mu, J, \alpha) = \begin{cases}
		\exp \left( 2^{dj} \, \mu - 2^{\alpha dj} \, J \right) & \text{if } j \in \mathbb N_0, \\
		0 & \text{otherwise},
	\end{cases}
\]
with chemical potential $\mu \in \mathbb R$, coupling constant $J \in \mathbb R$ and fractional exponent $\alpha \in (0, 1)$. The fractional exponent reflects surface corrections and should be thought of as $\alpha = (d-1)/d$ (if $d\geq 2$), see \cite[Example 5.8]{jansen2020hierarchical}.

By \cite[Lemma~5.1(b)]{jansen2020hierarchical} and Theorem~\ref{thm:existence-hom}, if $J$ is sufficiently large, the system undergoes a first-order phase transition as $\mu$ increases. Precisely, the following holds true: There exists a unique $\mu_c = \mu_c(J, \alpha) \in \mathbb R \cup \{+\infty\}$ (necessarily positive in fact) such that
\begin{itemize}
	\item[(a)] for all $\mu < \mu_c$, $z$ admits a Gibbs measure with
	\[
		\mu = \theta^*(\mu, J, \alpha) = \theta^* < p = p(\mu, J, \alpha) < \infty,
	\]
	\item[(b)] for $\mu = \mu_c$, $z$ satisfies $\mu = \theta^* = p < \infty$,
	\item[(c)] for all $\mu > \mu_c$, $z$ satisfies $\mu = \theta^* = p < \infty$ and $\mathcal G(z) = \varnothing$.
\end{itemize}
Furthermore, there exist finite thresholds $J_1 = J_1(\alpha) \geq 0$ and $J_2 = J_2(\alpha) \geq J_1$ such that 
\begin{itemize}
	\item[(1)] for all $J < J_1$ (and for $J = 0$), $\mu_c = \infty$ (cf.\ \cite[Theorem~5.3, Corollary~5.4]{jansen2020hierarchical}),
	\item[(2)] for all $J \in (J_1, J_2)$, $\mu_c < \infty$ with $\mathcal G(z(\mu_c, J, \alpha)) = \varnothing$,
	\item[(3)] for all $J > J_2$, $\mu_c < \infty$ with $\mathcal G(z(\mu_c, J, \alpha)) \neq \varnothing$ (cf.\ \cite[Theorem~5.6, Corollary~5.7(b)]{jansen2020hierarchical}).
\end{itemize}
In case (2) the parametric system undergoes a \emph{continuous} phase transition while, in case (3), the phase transition is of \emph{first order} (see \cite[Section 5]{jansen2020hierarchical}). (We do not address the questions of whether or not $J_1 > 0$ or $J_2 > J_1$ and what happens at $J = J_1$ or $J = J_2$.)

\begin{theorem} \label{thm:decay-hom-parametric}
	Let $J \in \mathbb R$ and $\alpha \in (0, 1)$. Suppose that $\mu \leq \mu_c$ is such that $z$ admits a Gibbs measure $\mathbb P$. Then the conclusions of Theorem~\ref{thm:decay-hom-general} hold true and, if additionally $J > 0$, then
	\[
		\lim_{j \to \infty} \left( \log(R_j) + 2^{dj} \, (p - \theta^*) + 2^{\alpha dj} \, J \right) = 0.
	\]
\end{theorem}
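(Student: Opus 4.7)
The first conclusion is a direct application of Theorem~\ref{thm:decay-hom-general}. For the refined identity, the plan is to derive an exact formula for $\log \widehat z_j$ by iterating the hierarchical recursion for $\Xi$, and then transfer it to $\log R_j$. Observe first that $z_k = 0$ for $k < 0$ forces $\theta^* = \lim_{j \to \infty} 2^{-dj} \log z_j = \mu$, so the target identity reads $\log R_j + 2^{dj}(p - \mu) + 2^{\alpha dj} J \to 0$.

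\emph{Step 1 (formula for $\log \widehat z_j$).} The hierarchical recursion $\Xi_\Lambda(z) = (1 + \widehat z_j)\, \Xi_{\Lambda'}(z)^{2^d}$ for $\Lambda \in \mathbb B_j$ and any $(j-1)$-subblock $\Lambda' \subset \Lambda$, iterated down to scale $-1$ where $\Xi \equiv 1$ (because $z$ vanishes on negative scales), yields $2^{-dj}\log \Xi_{[0, 2^j)^d}(z) = \sum_{k=0}^j 2^{-dk}\log(1 + \widehat z_k)$ with limit $p$. Combined with $\log \widehat z_j = \log z_j - 2^d \log \Xi_{[0, 2^{j-1})^d}(z)$ and the parametric form $\log z_j = 2^{dj}\mu - 2^{\alpha dj} J$, this produces the key identity
\[
c_j := \log \widehat z_j + 2^{dj}(p - \mu) + 2^{\alpha dj} J = 2^{dj} \sum_{k \geq j} 2^{-dk} \log(1 + \widehat z_k).
\]
Existence of a Gibbs measure and Theorem~\ref{thm:existence-hom} give $\sum_k \widehat z_k < \infty$; using $2^{-d(k-j)} \leq 1$, this yields $0 \leq c_j \leq \sum_{k \geq j}\log(1 + \widehat z_k) \to 0$.

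\emph{Step 2 (transfer to $\log R_j$).} Expanding $1 + R_j = (1 + \widehat z_j)(1 + R_{j+1})$ gives $R_j = \widehat z_j + (1 + \widehat z_j)\, R_{j+1}$, hence $\log R_j - \log \widehat z_j = \log(1 + (1 + \widehat z_j)\, R_{j+1}/\widehat z_j)$, reducing the claim to $R_{j+1}/\widehat z_j \to 0$. From $\log(1 + R_{j+1}) = \sum_{k \geq j+1}\log(1 + \widehat z_k) \leq \sum_k \widehat z_k$ I get $R_{j+1} \leq C \sum_{k \geq j+1}\widehat z_k$ for a constant $C$, and Step~1 delivers for $n \geq 1$
\[
\frac{\widehat z_{j+n}}{\widehat z_j} = \exp\bigl(c_{j+n} - c_j - (2^{d(j+n)} - 2^{dj})(p - \mu) - (2^{\alpha d(j+n)} - 2^{\alpha dj}) J\bigr).
\]
Using $p \geq \mu$, $c_k \to 0$ and the convexity bound $2^{\alpha dn} - 1 \geq n(2^{\alpha d} - 1)$ for integer $n \geq 1$, the right-hand side is at most $\exp(2\sup_{k \geq j}|c_k|) \, \exp(-n(2^{\alpha d} - 1)\, 2^{\alpha dj} J)$; summing the geometric series in $n$ and invoking $J > 0$ sends the prefactor to $1$ and the ratio of partial sums to $0$ as $j \to \infty$.

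The main obstacle is precisely Step~2: both $\widehat z_j$ and $R_{j+1}$ decay superexponentially with comparable leading behaviour, so comparing them cannot be done by a soft tail estimate from Theorem~\ref{thm:decay-hom-general}; it requires the explicit parametric form together with the convexity bound above, and this is where the hypothesis $J > 0$ is genuinely used to guarantee that the exponent in the geometric bound actually diverges.
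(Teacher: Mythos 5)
Your argument is correct and follows essentially the same route as the paper: you establish the exact identity $\log\widehat z_j + 2^{dj}(p-\theta^*) + 2^{\alpha dj}J = 2^{dj}(p-p_{j-1})\to 0$ (the paper's $\widehat z_j\,\e^{2^{dj}(p-\theta^*)}=\exp(2^{dj}(p-p_{j-1})-2^{\alpha dj}J)$) and then show $\log R_j-\log\widehat z_j\to 0$ by dominating the tail $\sum_{k>j}\widehat z_k$ by $\widehat z_j$ times a vanishing factor. The only difference is the technical device for that last step — your Bernoulli-inequality-plus-geometric-series bound on $\widehat z_{j+n}/\widehat z_j$ in place of the paper's integral-comparison Lemma~\ref{lem:gamma-series-summand-equivalence} with $r=J$, $b=2^{\alpha d}$ — and both correctly use $J>0$ at exactly this point.
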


\noindent If $J$ and $\alpha$ are such that $z$ admits a Gibbs measure at $\mu = \mu_c < \infty$ (in particular $J > 0$, see above), then, for disjoint $B,B'$,
\begin{align*}
	& |\P(\omega \supset \{B, B'\}) - \P(\omega \ni B) \P(\omega \ni B')| \\
	& \qquad \leq \begin{cases}
		C_{\mathrm{lcs}(B, B')}(\theta^*) \, \exp \left( - (p - \theta^*) \, D(B, B') \right) & \text{if } \mu < \mu_c, \\
		C'_{\mathrm{lcs}(B, B')} \, \exp \left( - J \, D(B, B')^\alpha \right) & \text{if } \mu = \mu_c, \\
	\end{cases}
\end{align*}
with $\lim_{j \to \infty} C_j(\theta^*) = 0$ and $\lim_{j \to \infty} C'_j = 1$. Note that the previously problematic supremal rate $p - \theta^*$ is now achieved with constants $C_j(\theta^*)$ that are not only bounded but exhibit an explicit subexponential decay.

The theorem illustrates that the decay of correlations at the point of phase transition can be slower than in the gas phase. In this example, we end up with stretched exponential decay of the order of $\exp( - \mathrm{const} D(B,B')^\alpha)$. 

\section{Proofs} \label{sec:proofs}

\noindent Here we prove our results. 
For the existence result, proven in Subsection~\ref{subsec:proofs-existence}, we first transform the GNZ Eq.~\eqref{eq:gnz2} into a one-sided conditional property along the hierarchy of $\mathbb B$ (Eq.~\eqref{eq:topdown}). Next we observe that Conditions~(i) and (ii) in Theorem~\ref{thm:existence} are indeed necessary and that there is only one candidate for a Gibbs measure: the hierarchical measure (defined by Eq.~\eqref{eq:hiero-subsets}). Plugging this measure back into the GNZ equation, we ultimately derive the sufficiency of Conditions~(i) and (ii) in Theorem~\ref{thm:existence}.

Propositions~\ref{prop:fragmentation} and \ref{prop:condensation} are subsequently proved in Subsection~\ref{subsec:proofs-cond-frag} by expressing the relevant finite-volume Gibbs states as hierarchical measures.

Finally, Subsection~\ref{subsec:proofs-decay} closes with the proofs of Theorems~\ref{thm:decay-hom-general} and \ref{thm:decay-hom-parametric}. By again writing Gibbs measures as hierarchical probabilities, the covariances of interest are easily factorised, after which purely analytical arguments reminiscent of \cite{jansen2020hierarchical} finish the job.

\subsection{Proof of Theorem~\ref{thm:existence}} \label{subsec:proofs-existence}

For the proofs, we introduce some additional notation. The set $\mathbb I_B$ of blocks intersecting $B$ is the union of the sets
\[
	\mathbb A_B : = \{B' \in \mathbb B \mid B' \supset B\} \quad \text{and} \quad \mathbb B_B : = \{B' \in \mathbb B \mid B' \subset B\}
\] 
of blocks above and, respectively, below $B$, including $B$ itself. Starred sets exclude the block $B$,
\[
	\mathbb I_B^* : = \mathbb I_B \setminus \{B\}, \quad \mathbb A_B^* : = \mathbb A_B \setminus \{B\}, \quad \mathbb B_B^* : = \mathbb B_B \setminus \{B\}.
\]
We extend the notation to subsets $\mathcal B\subset \mathbb B$ by
\[
		\mathbb I_\mathcal B : = \bigcup_{B\in\mathcal B} \mathbb I_B,\quad \mathbb I_\mathcal B^* : = \bigcup_{B\in\mathcal B} \mathbb I_B^*, \quad \text{ etc}. 
\] 
Note that the event of non-overlap between blocks can be written as
\[
	\Delta = \{\mathcal B \in \Omega \mid \mathcal B \cap \mathbb I_\mathcal B^* = \varnothing\} = \{\omega \cap \mathbb I_\mathcal B^* = \varnothing\}
\]
and that the same characterisation holds true when replacing $\mathbb I_\mathcal B^*$ by either $\mathbb A_\mathcal B^*$ or $\mathbb B_\mathcal B^*$. For $\zeta : \mathbb B \to \R_+$ and $\mathcal B\subset \mathbb B$ we set 
\[
	\zeta^\mathcal B : = \prod_{B\in \mathcal B} \zeta(B),
\] 
the empty product is $1$ and infinite products are defined as the obvious limits which are going to exist, possibly infinite, in all subsequent uses of this notation.

Next we record a few simple facts. For $\mathbb P\in \mathcal G(z)$, the univariate Eq.~\eqref{eq:gnz2} immediately implies the bound $\P(\omega \ni B) \leq z(B)$ and, via a straightforward induction, also the multivariate equation, cf.\ \cite[Lemma~2.1]{betsch2023gibbspointprocess},
\begin{equation} \label{eq:gnz3}
	\P(\omega \supset \mathcal B, \, \omega \setminus \mathcal B \in \mathcal A) = \1_\Delta(\mathcal B) \, z^\mathcal B \, \P(\omega\cap \mathbb I_{\mathcal B} = \varnothing,\, \omega \setminus \mathcal B \in \mathcal A),
\end{equation} 
valid for all finite subsets $\mathcal B\subset \mathbb B$ and all measurable $\mathcal A\subset \Omega$. 

The partition functions satisfy a recurrence relation, cf.\ \cite[Eq.~(3.1)]{jansen2020hierarchical}: for $\Lambda \in \mathbb B_j$,
\begin{equation} \label{eq:pfr}
	\Xi_\Lambda(z) = z(\Lambda) + \prod_{\substack{\Lambda'\in \mathbb B_{j-1}: \\ \Lambda'\subset \Lambda}} \Xi_{\Lambda'}(z) = \bigl(1 + \widehat z(\Lambda)\bigr) \prod_{\substack{\Lambda'\in \mathbb B_{j-1}: \\ \Lambda'\subset \Lambda}} \Xi_{\Lambda'}(z). 
\end{equation} 
The products are over the $2^d$ subblocks of $\Lambda$ that are exactly one scale below $\Lambda$. Set
\[
	\widehat \rho_z(B):= \frac{\widehat z(B)}{1+\widehat z(B)},
\]
cf.\ \cite[Lemma~4.5(i)]{jansen2020hierarchical}, and note that the recurrence relation~\eqref{eq:pfr} yields the alternative expression
\begin{equation} \label{eq:rhohat2}
	\widehat \rho_z(B) =\frac{z(B)}{\Xi_B(z)}. 
\end{equation}

Gibbs measures satisfy a property that reflects a top-down construction implicit in the proof of \cite[Theorem~3.2]{jansen2020hierarchical}: Conditional on the configuration at scales $k\geq j+1$, the configuration at scale $j$ is that of blocks placed independently with probabilities $\widehat \rho_z(B)$ or zero, depending on the compatibility with higher-up scales. We formulate a slightly different version that relates the conditional probability of seeing a fixed block, given the state of all blocks that do not lie beneath it. (Cf.\ also the definition of $\P_p$ in Proposition~\ref{prop:mandelbrot}.)

Given $\widehat \rho:\mathbb B\to [0,1]$, we say that a measure $\mathbb P$ satisfies the \emph{top-down condition} for $\widehat \rho$ if 
\begin{equation} \label{eq:topdown}
	\P(\omega \ni B, \, \omega \setminus \mathbb B_B \in \mathcal A) = \widehat \rho(B) \, \P(\omega \cap \mathbb A_B^* = \varnothing, \, \omega \setminus \mathbb B_B \in \mathcal A)
\end{equation}
for all $B \in \mathbb B$ and all measurable $\mathcal A \subset \Omega$.

Comparing Eq.~\eqref{eq:topdown} to Eq.~\eqref{eq:gnz2}, the former seems to look like a version of the latter where the void probability on $\mathbb B_B = \mathbb I_B \setminus \mathbb A_B^*$ was somehow absorbed into the prefactor. Let us make rigorous sense out of this intuition.

\begin{lemma} \label{lem:topdown} 
	Every $\P \in \mathcal G(z)$ satisfies the top-down condition for $\widehat \rho_z$. 
\end{lemma}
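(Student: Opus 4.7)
My plan is to reduce both sides of Eq.~\eqref{eq:topdown} to a common multiple of
\[
	P_0 := \P(\omega \cap \mathbb I_B = \varnothing, \, \omega \in \mathcal A).
\]
For the left-hand side: since every $\P\in \mathcal G(z)$ is supported on $\Delta$, the event $\{\omega \ni B\}$ forces $\omega \cap \mathbb B_B^* = \varnothing$, so $\omega \setminus \mathbb B_B = \omega \setminus \{B\}$ on that event. Applying the univariate Eq.~\eqref{eq:gnz2} with the $B$-independent event $\{\omega \setminus \mathbb B_B \in \mathcal A\}$ in the role of $\mathcal A$ then gives $\P(\omega \ni B,\, \omega \setminus \mathbb B_B \in \mathcal A) = z(B) \, P_0$ at once.

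For the right-hand side I would condition on the trace $\mathcal C := \omega \cap \mathbb B_B$. For every finite non-overlapping $\mathcal C \subset \mathbb B_B$, the event $\{\omega \cap \mathbb A_B^* = \varnothing,\, \omega \cap \mathbb B_B = \mathcal C,\, \omega \setminus \mathbb B_B \in \mathcal A\}$ rewrites as $\{\omega \supset \mathcal C\} \cap \{\omega \setminus \mathcal C \in \tilde{\mathcal A}\}$ with $\tilde{\mathcal A}$ depending only on $\omega \cap (\mathbb B \setminus \mathcal C)$, so the multivariate Eq.~\eqref{eq:gnz3} pulls out the prefactor $z^\mathcal C$. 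The resulting conditioning event collapses to $\{\omega \cap \mathbb I_B = \varnothing,\, \omega \in \mathcal A\}$ via the set-theoretic identity
\[
	\mathbb I_\mathcal C \cup (\mathbb B_B \setminus \mathcal C) \cup \mathbb A_B^* = \mathbb I_B,
\]
valid for every $\mathcal C \subset \mathbb B_B$ thanks to the inclusions $\mathbb A_B \subset \mathbb I_{B'} \subset \mathbb I_B$ for each $B' \in \mathbb B_B$ (the case $\mathcal C = \varnothing$ reducing to the trivial $\mathbb B_B \cup \mathbb A_B^* = \mathbb I_B$). Hence $\P(\omega \cap \mathbb A_B^* = \varnothing,\, \omega \cap \mathbb B_B = \mathcal C,\, \omega \setminus \mathbb B_B \in \mathcal A) = z^\mathcal C \, P_0$ for every such $\mathcal C$, and summing produces the finite-trace total $\Xi_B(z) \, P_0$.

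It remains to dispose of two cases. If $\Xi_B(z) < \infty$, then Eq.~\eqref{eq:xifi-zfi} gives $\sum_{B' \subset B} z(B') < \infty$; combined with $\P(\omega \ni B') \leq z(B')$, this yields $\E[|\omega \cap \mathbb B_B|] < \infty$ and hence $|\omega \cap \mathbb B_B| < \infty$ $\P$-almost surely. The full right-hand probability then equals $\Xi_B(z) \, P_0$, and multiplying by $\widehat \rho_z(B) = z(B)/\Xi_B(z)$ (Eq.~\eqref{eq:rhohat2}) recovers $z(B) \, P_0$, matching the left-hand side. If instead $\Xi_B(z) = \infty$, the finite-trace bound $\Xi_B(z) \, P_0 \leq 1$ forces $P_0 = 0$, so the left-hand side vanishes, while $\widehat \rho_z(B) = 0$ (by the convention that fractions with infinite denominators are zero) makes the right-hand side vanish too.

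The main obstacle I anticipate is the careful bookkeeping around the multivariate Eq.~\eqref{eq:gnz3}, in particular verifying the set-theoretic identity and the measurability structure of $\tilde{\mathcal A}$, together with avoiding $0 \cdot \infty$ ambiguities when $\Xi_B(z) = \infty$.
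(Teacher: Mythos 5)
Your proposal is correct and follows essentially the same route as the paper's proof: the univariate GNZ equation handles the left-hand side, the decomposition over finite traces $\mathcal C=\omega\cap\mathbb B_B$ together with the multivariate GNZ equation yields the identity $\P(\omega\cap\mathbb A_B^*=\varnothing,\,|\omega\cap\mathbb B_B|<\infty,\,\omega\setminus\mathbb B_B\in\mathcal A)=\Xi_B(z)\,P_0$, and the two cases $\Xi_B(z)<\infty$ (Borel--Cantelli) and $\Xi_B(z)=\infty$ (forcing $P_0=0$ and $\widehat\rho_z(B)=0$) are disposed of exactly as in the paper. The only cosmetic differences are that you spell out the set-theoretic identity $\mathbb I_{\mathcal C}\cup(\mathbb B_B\setminus\mathcal C)\cup\mathbb A_B^*=\mathbb I_B$ explicitly and close the infinite-partition-function case directly in the top-down equation rather than via the intermediate identity~\eqref{eq:h1}.
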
 

\begin{proof}
	Let $\P \in \mathcal G(z)$, $B \in \mathbb B$ and $\mathcal A \subset \Omega$ measurable. By the GNZ Eq.~\eqref{eq:gnz2},
	\[
		\P(\omega \ni B, \, \omega \setminus \mathbb B_B \in \mathcal A) = z(B) \, \P(\omega \cap \mathbb I_B = \varnothing, \, \omega \setminus \mathbb B_B \in \mathcal A)
	\]
	so, using Eq.~\eqref{eq:rhohat2}, it suffices to prove
	\begin{equation}  \label{eq:h1}
		\P(\omega \cap \mathbb I_B = \varnothing, \, \omega \setminus \mathbb B_B \in \mathcal A) = \frac{1}{\Xi_B(z)} \, \P(\omega \cap \mathbb A_B^* = \varnothing, \, \omega \setminus \mathbb B_B \in \mathcal A).
	\end{equation}
	Essentially choosing $\mathcal A = \Omega$, we henceforth omit the specification $\omega \setminus \mathbb B_B \in \mathcal A$ for better readability but the same argument goes through with the latter event reinserted everywhere. Write
	\[
		\P \bigl( \omega \cap \mathbb A_B^* = \varnothing, \, |\omega \cap \mathbb B_B| < \infty \bigr) = \sum_{ \substack{ \mathcal B\subset \mathbb B_B: \\ |\mathcal B| < \infty } } \P \bigl( \omega \cap \mathbb A_B^* = \varnothing, \, \omega \cap \mathbb B_B = \mathcal B \bigr)
	\]
	and use the multivariate GNZ Eq.~\eqref{eq:gnz3} to express each summand as
	\begin{align*}
		& \P \bigl (\omega \cap \mathbb A_B^* = \varnothing, \, \omega \cap \mathbb B_B = \mathcal B \bigr) \\
		& \qquad = \1_\Delta(\mathcal B) \, z^\mathcal B \, \P \bigl( \omega \cap \mathbb A_B^* = \omega \cap \mathbb I_\mathcal B = \omega \cap \mathbb B_B \setminus \mathcal B = \varnothing \bigr) \\
		& \qquad = \1_\Delta(\mathcal B) \, z^\mathcal B \, \P(\omega \cap \mathbb I_B = \varnothing). 
	\end{align*}
	Thus, we get
	\begin{equation} \label{eq:h2}
		\P \bigl( \omega \cap \mathbb A_B^* = \varnothing, \, |\omega \cap \mathbb B_B| < \infty \bigr) = \Xi_B(z) \, \P(\omega \cap \mathbb I_B = \varnothing).
	\end{equation}
	If $\Xi_B(z)$ is infinite, then the probability on the right side must be zero because, otherwise, the probability on the left would be infinite. Thus, both sides in Eq.~\eqref{eq:h1} vanish and the equation holds true. 
	
	If $\Xi_B(z)$ is finite, then we may omit the requirement $|\omega \cap \mathbb B_B| < \infty$ on the left side in Eq.~\eqref{eq:h2}. Indeed, as seen in Eq.~\eqref{eq:xifi-zfi}, $\Xi_B(z) < \infty$ implies $\sum_{B' \in \mathbb B_B} z(B') < \infty$ and hence also
	\[
		\sum_{B' \in \mathbb B_B} \P(\omega \ni B') < \infty.
	\]
	By the Borel-Cantelli lemma, said finiteness condition holds $\P$-almost surely. 
	Therefore, we may omit it from Eq.~\eqref{eq:h2}, then divide by the (finite) partition function on both sides and obtain Eq.~\eqref{eq:h1}.
\end{proof}

\begin{remark}
	Note that, in the case $\Xi_B(z) = \infty$, Eq.~\eqref{eq:h2} actually necessitates
	\[
		\P \bigl( \omega \cap \mathbb A_B^* = \varnothing, \, |\omega \cap \mathbb B_B| = \infty \bigr) = \P \bigl( \omega \cap \mathbb A_B^* = \varnothing \bigl).
	\]
	and, if the right-hand side is positive, $\sum_{B' \in \mathbb B_B} \P(\omega \ni B') = \infty$ by the Borel--Cantelli lemma. If $\P \in \mathcal G(z)$, the latter sum is bounded from above by
	\[
		\sum_{B' \in \mathbb B_B} \widehat \rho_z(B') \leq \sum_{B' \in \mathbb B_B} \widehat z(B') \leq \sum_{\substack{B' \in \mathbb B_B : \\ \Xi_{B'}(z) < \infty}} z(B'),
	\]
	where we used Lemma~\ref{lem:topdown} and Eq.~\eqref{eq:topdown}, followed by the definitions of $\widehat \rho_z$ and $\widehat z$. In view of the upcoming Lemma~\ref{lem:rhohat-summable}, applied to $\widehat \rho = \widehat \rho_z < 1$, we see that both Conditions~(i) and (ii) in Theorem~\ref{thm:existence} are indeed necessary for $\mathcal G(z) \neq \varnothing$.
\end{remark}

\noindent Naturally, the next question is that of existence of solutions to Eq.~\eqref{eq:topdown}. We start by deducing necessary conditions.

\begin{lemma} \label{lem:rhohat-summable}
	Let $\P$ satisfy the top-down condition for $\widehat \rho:\mathbb B \to[0,1]$.
	Then
	\[
		\P(\omega \cap \mathbb A_\mathcal B = \varnothing) = (1 - \widehat \rho)^{\mathbb A_\mathcal B}
	\]
	and $\sum_{B \in \mathbb A_\mathcal B} \widehat \rho(B) < \infty$ for all finite $ \mathcal B \subset \mathbb B$.
\end{lemma}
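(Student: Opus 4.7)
I would derive both assertions by iterating the top-down condition along ancestor chains. Write $B^+$ for the parent of $B\in\mathbb B$ and $B^{+k}$ for its $k$-th ancestor, so that $\mathbb A_B=\{B^{+k}:k\in\N_0\}$ and $\mathbb A_B^*=\mathbb A_{B^+}$. First I would handle the singleton case $\mathcal B=\{B\}$. Applying Eq.~\eqref{eq:topdown} with $\mathcal A=\{\omega'\in\Omega:\omega'\cap\mathbb A_B^*=\varnothing\}$ (a valid choice since $\mathbb A_B^*\cap\mathbb B_B=\varnothing$) gives
\[
	\P(\omega\ni B,\,\omega\cap\mathbb A_B^*=\varnothing)=\widehat\rho(B)\,\P(\omega\cap\mathbb A_B^*=\varnothing),
\]
and hence $\P(\omega\cap\mathbb A_B=\varnothing)=(1-\widehat\rho(B))\,\P(\omega\cap\mathbb A_{B^+}=\varnothing)$. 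Iterating $n$ times leaves the partial-product identity
\[
	\P(\omega\cap\mathbb A_B=\varnothing)=\prod_{k=0}^{n-1}(1-\widehat\rho(B^{+k}))\,\P(\omega\cap\mathbb A_{B^{+n}}=\varnothing).
\]

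To send the remainder to $1$ as $n\to\infty$, I would use the complementary choice $\mathcal A=\{\omega'\cap\mathbb A_B^*\neq\varnothing\}$ in~\eqref{eq:topdown}, which forces $\P(\omega\ni B,\,\omega\cap\mathbb A_B^*\neq\varnothing)=0$. Summing this identity over pairs of ancestors produces the sparsity statement $|\omega\cap\mathbb A_B|\leq 1$ almost surely. The increasing events $\{\omega\cap\mathbb A_{B^{+n}}=\varnothing\}$ then have union $\{|\omega\cap\mathbb A_B|<\infty\}$ of full measure, so $\P(\omega\cap\mathbb A_{B^{+n}}=\varnothing)\to 1$ and the product formula $\P(\omega\cap\mathbb A_B=\varnothing)=(1-\widehat\rho)^{\mathbb A_B}$ follows. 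The same formula applied at $B^{+n}$ shows that $\prod_{k\geq n}(1-\widehat\rho(B^{+k}))\to 1$, forcing $\widehat\rho(B^{+k})<1$ eventually and, via the standard inequality $-\log(1-x)\geq x$, the summability $\sum_{k\geq 0}\widehat\rho(B^{+k})<\infty$.

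For general finite $\mathcal B$ I would enumerate the countable set $\mathbb A_\mathcal B$ as $B_1,B_2,\dots$ in non-decreasing scale order (feasible because $\mathbb A_\mathcal B$ contains at most $|\mathcal B|$ elements at each scale) and set $S_i:=\mathbb A_\mathcal B\setminus\{B_1,\dots,B_i\}$. Since $B_i$ is of minimal scale in $S_{i-1}$, one has $S_i\cap\mathbb B_{B_i}=\varnothing$; this legitimises the same application of~\eqref{eq:topdown} as in the singleton step and yields
\[
	\P(\omega\cap\mathbb A_\mathcal B=\varnothing)=\prod_{i=1}^n(1-\widehat\rho(B_i))\,\P(\omega\cap S_n=\varnothing).
\]
Subadditivity and the singleton sparsity give $|\omega\cap\mathbb A_\mathcal B|\leq\sum_{B\in\mathcal B}|\omega\cap\mathbb A_B|\leq|\mathcal B|$ almost surely, hence $\P(\omega\cap S_n=\varnothing)\to 1$ and both the product formula and the summability conclusion drop out exactly as before. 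I expect the main technical subtlety to lie in the measurability bookkeeping that justifies each invocation of~\eqref{eq:topdown}: the conditioning event must depend only on $\omega\setminus\mathbb B_{B_i}$, which reduces to the combinatorial observation that removing a minimal-scale element from an ancestor-closed set preserves ancestor-closedness.
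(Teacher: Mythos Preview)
Your proof is correct and follows essentially the same approach as the paper: both iterate the top-down identity to peel off one block at a time from an ancestor-closed set and then pass to the limit using that $\omega$ almost surely meets each ancestor chain in at most one block. The only organizational difference is that the paper first reduces the general finite $\mathcal B$ to a singleton in finitely many steps (by peeling off minimal elements of $\min(\mathcal B)$ until one block remains) and invokes $\P(\Delta)=1$ for the limit, whereas you enumerate all of $\mathbb A_\mathcal B$ at once and use your explicit sparsity bound $|\omega\cap\mathbb A_\mathcal B|\leq|\mathcal B|$; both routes rest on the same application of~\eqref{eq:topdown} and the same combinatorial fact that $\mathbb A_{B_i}^*\subset S_i$ (equivalently, that removing a minimal-scale element preserves ancestor-closedness).
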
 

\begin{proof}
	Fix a finite $\mathcal B \subset \mathbb B$. Without loss, assume that $\mathcal B \in \Delta$ since one may anyway pass to
	\[
		\min(\mathcal B) := \{B \in \mathcal B \mid \forall B' \in \mathcal B: B' \subset B \Leftrightarrow B' = B\} = \mathcal B \setminus \mathbb A_\mathcal B^* \in \Delta,
	\]
	the subset of the minimal cubes in $\mathcal B$ with respect to inclusion.
	Now, for any $B \in \mathcal B = \min(\mathcal B)$, one has $\mathbb A_\mathcal B \cap \mathbb B_B = \{B\}$ and Eq.~\eqref{eq:topdown} yields
	\begin{align*}
		\P(\omega \cap \mathbb A_\mathcal B = \varnothing) & = \P(\omega \cap \mathbb A_\mathcal B \setminus \{B\} = \varnothing) - \P(\omega \ni B, \, \omega \cap \mathbb A_\mathcal B \setminus \{B\} = \varnothing) \\
		& = (1 - \widehat \rho(B)) \, \P(\omega \cap \mathbb A_\mathcal B \setminus \{B\} = \varnothing) \\
		& = (1 - \widehat \rho(B)) \, \P(\omega \cap \mathbb A_{\mathcal B'} = \varnothing)
	\end{align*}
	with $\mathcal B' = \min(\mathbb A_\mathcal B \setminus \{B\})$ differing from $\mathcal B \setminus \{B\}$ by at most the unique block in $\min(\mathbb A_B^*)$. Inductively, we obtain
	\[
		\P(\omega \cap \mathbb A_\mathcal B = \varnothing) = (1 - \widehat \rho)^{\mathbb A_\mathcal B \setminus \mathbb A_B} \, \P(\omega \cap \mathbb A_B = \varnothing)
	\]
	for all $B \in \mathbb A_\mathcal B$ since $\mathbb A_\mathcal B \setminus \mathbb A_B$ is finite.
	
	Hence, it remains to consider the case $\mathcal B = \{B\}$ for a fixed block $B \in \mathbb B_j$. Denote by $B_n$ the unique element of $\mathbb A_B \cap \mathbb B_{j + n}$ for every $n \in \N_0$. Then we have
	\[
		\P(\omega \cap \mathbb A_{B_m} = \varnothing) = (1 - \widehat \rho)^{\mathbb A_{B_m} \setminus \mathbb A_{B_n}} \, \P(\omega \cap \mathbb A_{B_n} = \varnothing)
	\]
	for all $m, n \in \N$ with $m \leq n$ and
	\[
		\lim_{n \to \infty} \P(\omega \cap \mathbb A_{B_n} = \varnothing) = 1
	\]
	because the (increasing) union of the above events contains the event
	\[
		\bigcap_{B' \in \mathbb B} \{\omega \ni B' \Rightarrow \omega \cap \mathbb A_{B'}^* = \varnothing\} = \{\omega \cap \mathbb A_\omega^* = \varnothing\} = \Delta
	\]
	of $\P$-measure one, cf.\ Eq.~\eqref{eq:topdown}. Thus,
	\[
		\P(\omega \cap \mathbb A_B = \varnothing) = (1 - \widehat \rho)^{\mathbb A_B}
	\]
	and $\lim_{n \to \infty} (1 - \widehat \rho)^{\mathbb A_{B_n}} = 1$, yielding
	\[
		\sum_{B' \in \mathbb A_B} \widehat \rho(B') = \sum_{n \in \N_0} \widehat \rho(B_n) < \infty.
	\]
	The lemma follows from combining the latter results with the initial argument for arbitrary finite $\mathcal B \subset \mathbb B$.
\end{proof}

\noindent Combining Lemmas~\ref{lem:topdown} and \ref{lem:rhohat-summable} and the remark in between, we see that, for any Gibbs measure to exist, $\widehat \rho_z$ must be summable over all sets $\mathbb A_B$, $B \in \mathbb B$, which is in fact equivalent to Condition~(ii) in Theorem~\ref{thm:existence}, and Condition~(i) must also hold true. Furthermore, the first part of the latter lemma actually limits our options for candidate Gibbs states down to one of the following type.

\begin{lemma} \label{lem:hierarchical}
	Let $\widehat \rho:\mathbb B\to [0,1]$. There is a uniquely defined probability measure $\mathbb H_{\widehat \rho}$ such that 
	\begin{equation} \label{eq:hiero-subsets}
		\mathbb H_{\widehat \rho} \bigl( \omega \supset \mathcal B\bigr) = \1_{\Delta}(\mathcal B) \,  \widehat \rho^\mathcal B \, (1- \widehat \rho)^{\mathbb A_\mathcal B^*}
	\end{equation}
	for all finite sets $\mathcal B\subset \mathbb B$.
\end{lemma}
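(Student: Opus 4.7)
The plan is to split the proof into uniqueness and existence. For uniqueness, I would invoke the standard $\pi$--$\lambda$ argument: the collection $\{\{\omega \supset \mathcal B\} \mid \mathcal B \subset \mathbb B \text{ finite}\}$ is a $\pi$-system, since $\{\omega \supset \mathcal B\} \cap \{\omega \supset \mathcal B'\} = \{\omega \supset \mathcal B \cup \mathcal B'\}$, and it generates the $\sigma$-algebra on $\Omega$ because it already contains all generating events $\{\omega \ni B\} = \{\omega \supset \{B\}\}$. Dynkin's theorem then yields that at most one probability measure can satisfy~\eqref{eq:hiero-subsets}.

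For existence, I would construct $\mathbb H_{\widehat \rho}$ explicitly from a family of independent Bernoulli variables. On some auxiliary probability space, let $(X_B)_{B \in \mathbb B}$ be independent $\{0,1\}$-valued random variables with $\P(X_B = 1) = \widehat \rho(B)$, and set
\[
	\omega := \bigl\{ B \in \mathbb B \mid X_B = 1 \text{ and } X_{B'} = 0 \text{ for every } B' \in \mathbb A_B^* \bigr\}.
\]
Define $\mathbb H_{\widehat \rho}$ as the distribution of $\omega$, viewed as a random element of $\Omega$. The hierarchical structure automatically places $\omega$ in $\Delta$ almost surely: two distinct admissible blocks can overlap only if one is a strict ancestor of the other, but if $B \in \omega$, then no $B' \in \mathbb A_B^*$ is selected and thus none lies in $\omega$.

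It remains to verify~\eqref{eq:hiero-subsets}. For finite $\mathcal B \subset \mathbb B$,
\[
	\{\omega \supset \mathcal B\} = \bigl\{ X_B = 1 \ \forall\, B \in \mathcal B \bigr\} \cap \bigl\{ X_{B'} = 0 \ \forall\, B' \in \mathbb A_\mathcal B^* \bigr\}.
\]
If $\mathcal B \in \Delta$, then $\mathcal B \cap \mathbb A_\mathcal B^* = \varnothing$ (no element of $\mathcal B$ can be a strict ancestor of another without producing overlap), so the two sets of constraints involve disjoint indices and independence of the $X_B$ delivers $\P(\omega \supset \mathcal B) = \widehat \rho^\mathcal B (1 - \widehat \rho)^{\mathbb A_\mathcal B^*}$. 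If $\mathcal B \notin \Delta$, some $B' \in \mathcal B$ is a strict ancestor of another element of $\mathcal B$, placing $B'$ in $\mathcal B \cap \mathbb A_\mathcal B^*$ and imposing the incompatible demands $X_{B'} = 1$ and $X_{B'} = 0$, so $\P(\omega \supset \mathcal B) = 0$, matching the factor $\1_\Delta(\mathcal B)$.

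No serious obstacle is anticipated. The only fine points worth noting are that the infinite product $(1 - \widehat \rho)^{\mathbb A_\mathcal B^*}$ is automatically well defined as the decreasing limit of finite sub-products in $[0,1]$, and that $\omega$ is a measurable $\mathcal P(\mathbb B)$-valued random element because each $\{\omega \ni B\}$ already lies in the product $\sigma$-algebra generated by the $X_B$'s.
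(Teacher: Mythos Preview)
Your proof is correct and follows essentially the same approach as the paper: uniqueness via the $\pi$--$\lambda$ theorem on the generating system $\{\omega \supset \mathcal B\}$, and existence by pushing forward a product Bernoulli measure along the map that keeps only those selected blocks with no selected strict ancestor. The paper phrases this pushforward via the explicit operator $\max(\mathcal B) = \mathcal B \setminus \mathbb B_\mathcal B^*$, whereas you write out the defining condition for membership directly in terms of the $X_B$; the verification of~\eqref{eq:hiero-subsets} is then identical.
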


\noindent We call the measure $\mathbb H_{\widehat \rho}$ the \emph{hierarchical measure} associated with $\widehat \rho$. 

\begin{proof} 
	First off, Eq.~\eqref{eq:hiero-subsets} determines the probabilities of the events of the form $\{\omega \supset \mathcal B\}$ where $\mathcal B$ runs through the finite subsets of $\mathbb B$. These events form a generating $\pi$-system of the $\sigma$-field of $\Omega$; the $\pi$--$\lambda$ theorem implies uniqueness.

	Let $\P_{\widehat \rho}^\mathrm{Ber}$ be the probability measure on $\Omega$ under which the occupation numbers $\mathcal B \mapsto \1_{\{\mathcal B \ni B\}}$ are independent Bernoulli random variables with 
	\[
		\P_{\widehat \rho}^\mathrm{Ber}(\omega\ni B) = \widehat \rho(B). 
	\] 
	The Bernoulli measure $\P_{\widehat \rho}^\mathrm{Ber}$ allows for configurations with overlapping blocks but we keep only some of them: For $\mathcal B \subset \mathbb B$, let
	\[
		\max(\mathcal B) := \{B \in \mathcal B \mid \forall B' \in \mathcal B: B' \supset B \Leftrightarrow B' = B\} = \mathcal B \setminus \mathbb B_\mathcal B^* \in \Delta
	\]
	be the set of cubes that are maximal in $\mathcal B$ with respect to the partial order of inclusion. Then
	\[
		\P_{\widehat \rho}^\mathrm{Ber} \bigl( \max(\omega) \supset \mathcal B \bigr) = \P_{\widehat \rho}^\mathrm{Ber} \bigl( \omega \supset \mathcal B, \, \omega \cap \mathbb A_\mathcal B^* = \varnothing \bigr) = \1_\Delta(\mathcal B) \, \widehat \rho^\mathcal B \, (1 - \widehat \rho)^{\mathbb A_\mathcal B^*} 
	\]
	for all finite $\mathcal B \subset \mathbb B$. Therefore the image of $\P_{\widehat \rho}^\mathrm{Ber}$ under the map $\mathcal B \mapsto \max(\mathcal B)$ is the unique choice for $\mathbb H_{\widehat \rho}$ that satisfies Eq.~\eqref{eq:hiero-subsets}.
\end{proof}

\begin{remark}
	Note that the measure $\P_p$ from Prop.~\ref{prop:mandelbrot} is just the hierarchical measure associated with $\widehat \rho = p \, \1_{\mathbb B_{[0, 1)^d}}$.
\end{remark}

\noindent If any measure satisfies the top-down condition, it is the hierarchical one.

\begin{lemma} \label{lem:topdown-sol}
	Let $\widehat \rho: \mathbb B \to [0,1]$ with $\sum_{B' \in \mathbb A_B} \widehat \rho(B') < \infty$ for all $B \in \mathbb B$. Then $\mathbb H_{\widehat \rho}$ is the unique solution to the top-down equations for $\widehat \rho$. 
\end{lemma}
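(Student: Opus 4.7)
The plan is to establish the two directions separately. For \emph{existence}, I verify directly that $\mathbb H_{\widehat \rho}$ satisfies the top-down equation by reusing the Bernoulli representation from the proof of Lemma~\ref{lem:hierarchical}: sample $\omega' \sim \P_{\widehat \rho}^\mathrm{Ber}$, so that $\omega := \max(\omega') \sim \mathbb H_{\widehat \rho}$. Two observations drive the argument. First, the summability $\sum_{B' \in \mathbb A_B} \widehat \rho(B') < \infty$ combined with Borel--Cantelli makes $\omega' \cap \mathbb A_B$ a.s.\ finite, so $\max(\omega') \cap \mathbb A_B^* = \varnothing$ is $\P_{\widehat \rho}^\mathrm{Ber}$-a.s.\ equivalent to $\omega' \cap \mathbb A_B^* = \varnothing$. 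Second, on this event every block in $\omega' \setminus \mathbb B_B$ is disjoint from $B$, which forces $\max(\omega') \setminus \mathbb B_B = \max(\omega' \setminus \mathbb B_B)$. Separating now $\{\omega' \ni B\}$ (an event on $\omega' \cap \mathbb B_B$) from $\{\omega' \cap \mathbb A_B^* = \varnothing,\, \max(\omega' \setminus \mathbb B_B) \in \mathcal A\}$ (an event on $\omega' \cap (\mathbb B \setminus \mathbb B_B)$) via the product structure of $\P_{\widehat \rho}^\mathrm{Ber}$, both sides of \eqref{eq:topdown} collapse to the same quantity $\widehat \rho(B) \, \P_{\widehat \rho}^\mathrm{Ber}(\omega' \cap \mathbb A_B^* = \varnothing,\, \max(\omega' \setminus \mathbb B_B) \in \mathcal A)$.

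For \emph{uniqueness}, let $\P$ satisfy the top-down equation. The events $\{\omega \supset \mathcal B\}$ over finite $\mathcal B \subset \mathbb B$ form a $\pi$-system generating the $\sigma$-field on $\Omega$, so it is enough to pin down these probabilities. A short top-down argument with $\mathcal A = \{\omega \cap \mathbb A_{B'}^* \neq \varnothing\}$ yields $\P(\omega \ni B', \omega \cap \mathbb A_{B'}^* \neq \varnothing) = 0$ for every $B'$; hence $\P(\Delta) = 1$ and only the case $\mathcal B \in \Delta$ is non-trivial. For $\mathcal B \in \Delta$ enumerated as $\{B_1, \ldots, B_n\}$ in ascending scale order, I iterate \eqref{eq:topdown}, peeling off $B_k$ at step $k$ and absorbing the growing void $\omega \cap (\mathbb A_{B_1}^* \cup \cdots \cup \mathbb A_{B_{k-1}}^*) = \varnothing$ into $\mathcal A$. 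At each step, $\mathcal B \in \Delta$ guarantees that neither the unpeeled $B_\ell$, $\ell > k$, nor any strict ancestor of the already-peeled $B_1, \ldots, B_{k-1}$, lies in $\mathbb B_{B_k}$; indeed a chain $B_\ell \subsetneq B' \subseteq B_k$ would violate hierarchical compatibility of $\mathcal B$. After $n$ applications,
\[
	\P(\omega \supset \mathcal B) = \widehat \rho^\mathcal B \, \P(\omega \cap \mathbb A_\mathcal B^* = \varnothing).
\]

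The residual void probability is identified via Lemma~\ref{lem:rhohat-summable}: writing $B^+$ for the (unique) parent block of $B \in \mathbb B$ and $\mathcal B^+ := \{B^+ \mid B \in \mathcal B\}$, the identity $\mathbb A_B^* = \mathbb A_{B^+}$ yields $\mathbb A_\mathcal B^* = \mathbb A_{\mathcal B^+}$ and hence
\[
	\P(\omega \cap \mathbb A_\mathcal B^* = \varnothing) = \P(\omega \cap \mathbb A_{\mathcal B^+} = \varnothing) = (1 - \widehat \rho)^{\mathbb A_\mathcal B^*}.
\]
Substitution delivers $\P(\omega \supset \mathcal B) = \widehat \rho^\mathcal B (1 - \widehat \rho)^{\mathbb A_\mathcal B^*}$, matching Eq.~\eqref{eq:hiero-subsets}. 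I expect the main obstacle to be neither computational nor conceptual but purely bookkeeping: each application of \eqref{eq:topdown} during the peeling comes with a measurability-footprint check, so one has to consistently confirm that every event carried along in $\mathcal A$ depends on $\omega$ only outside $\mathbb B_{B_k}$. The binary-tree structure of $\mathbb B$ together with $\mathcal B \in \Delta$ keeps these checks elementary, but skipping any single one of them would leave a gap.
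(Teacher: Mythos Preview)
Your proof is correct and follows essentially the same route as the paper: existence via the Bernoulli representation of $\mathbb H_{\widehat\rho}$ together with Borel--Cantelli, and uniqueness by iterating Eq.~\eqref{eq:topdown} to reduce $\P(\omega\supset\mathcal B)$ to a void probability identified through Lemma~\ref{lem:rhohat-summable}. The only cosmetic differences are that the paper peels off an \emph{arbitrary} $B\in\mathcal B$ at each step (the non-overlap $\mathcal B\in\Delta$ already suffices, without your ascending-scale ordering), and that it invokes Lemma~\ref{lem:rhohat-summable} directly rather than via your parent trick $\mathbb A_\mathcal B^*=\mathbb A_{\mathcal B^+}$; neither changes the substance of the argument.
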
 

\begin{proof}
	We begin by showing that $\mathbb H_{\widehat \rho}$, by its construction in the proof of Lemma~\ref{lem:hierarchical}, indeed satisfies the top-down condition. For fixed $B \in \mathbb B$ and $\mathcal A \subset \Omega$ measurable, we first note that, for all $\mathcal B \subset \mathbb B$, the set
	\[
		\max(\mathcal B) \setminus \mathbb B_B = \mathcal B \setminus \bigl( \mathbb B_\mathcal B^* \cup \mathbb B_B \bigr) = \bigl( \mathcal B \setminus \{B\} \bigr) \setminus \bigl( \mathbb B_{\mathcal B \setminus \{B\}}^* \cup \mathbb B_B^* \bigr)
	\]
	is unaffected by whether or not $B \in \mathcal B$. Therefore, the left-hand side of Eq.~\eqref{eq:topdown} for $\P = \mathbb H_{\widehat \rho}$ reads
	\begin{align*}
		\mathbb H_{\widehat \rho}(\omega \ni B, \, \omega \setminus \mathbb B_B \in \mathcal A) & = \P_{\widehat \rho}^\mathrm{Ber}(\omega \ni B, \, \omega \cap \mathbb A_B^* = \varnothing, \, \max(\omega) \setminus \mathbb B_B \in \mathcal A) \\
		& = \widehat \rho(B) \, P_{\widehat \rho}^\mathrm{Ber}(\omega \cap \mathbb A_B^* = \varnothing, \, \max(\omega) \setminus \mathbb B_B \in \mathcal A).
	\end{align*}
	On the other hand, for any $\mathcal B \subset \mathbb B$, we have $\max(\mathcal B) \cap \mathbb A_B^* = \varnothing$ if and only if $\mathcal B \cap \mathbb A_B^*$ is either empty or infinite. Since $\sum_{B' \in \mathbb A_B^*} \widehat \rho(B') < \infty$, the Borel-Cantelli lemma yields
	\[
		\mathbb H_{\widehat \rho}(\omega \cap \mathbb A_B^* = \varnothing, \, \omega \setminus \mathbb B_B \in \mathcal A) = \P_{\widehat \rho}^\mathrm{Ber}(\omega \cap \mathbb A_B^* = \varnothing, \, \max(\omega) \setminus \mathbb B_B \in \mathcal A).
	\]
	Hence, $\P = \mathbb H_{\widehat \rho}$ satisfies Eq.~\eqref{eq:topdown}.
	
	Conversely, let $\P$ solve the top-down equations for $\widehat \rho$. As noted in the proof of Lemma~\ref{lem:rhohat-summable}, $\P(\Delta) = 1$ and, for every finite $\mathcal B \in \Delta$ and any $B \in \mathcal B$,
	\[
		\P(\omega \supset \mathcal B) = \widehat \rho(B) \, \P(\omega \cap \mathbb A_B^* = \varnothing, \, \omega \supset \mathcal B \setminus \{B\}).
	\]
	By induction,
	\[
		\P(\omega \supset \mathcal B) = \1_\Delta(\mathcal B) \, \widehat \rho^\mathcal B \, \P(\omega \cap \mathbb A_\mathcal B^* = \varnothing)
	\]
	for all finite $\mathcal B \subset \mathbb B$ and, by Lemma~\ref{lem:rhohat-summable}, the latter expression equals the right-hand side of Eq.~\eqref{eq:hiero-subsets}. Lemma~\ref{lem:hierarchical} then yields $\P = \mathbb H_{\widehat \rho}$. 
\end{proof} 

\begin{remark}
	If $\sum_{B' \in \mathbb A_B} \widehat \rho(B') = \infty$ for some (and hence all choices of the) block $B \in \mathbb B$, the hierarchical measure is a Dirac on the empty configuration, $\mathbb H_{\widehat \rho} =\delta_{\varnothing}$, and it does not satisfy the top-down equations. Indeed, for $\P = \delta_{\varnothing}$ the top-down equations with $\mathcal A = \Omega$ yield $0 = \widehat \rho(B) \cdot 1$ for all $B \in \mathbb B$, contradicting the initial assumption.
\end{remark} 

\noindent By now, we know that, if $\mathcal G(z) \neq \varnothing$, then both Conditions~(i) and (ii) in Theorem~\ref{thm:existence} must be satisfied and $\mathcal G(z) = \{\mathbb H_{\widehat \rho_z}\}$. We now turn to the question of when a hierarchical measure is actually Gibbsian.

\begin{lemma} \label{lem:rhohat-to-z}
	Let $\widehat \rho: \mathbb B \to [0,1]$ with $\sum_{B' \in \mathbb A_B} \widehat \rho(B') < \infty$ for all $B \in \mathbb B$. Then $\mathbb H_{\widehat \rho} \in \mathcal G(z)$ if and only if
	\begin{equation} \label{eq:rhohat-to-z}
		\widehat \rho(B) = z(B) \, (1 - \widehat \rho)^{\mathbb B_B} 
	\end{equation} 
	for all $B\in \mathbb B$. 
\end{lemma}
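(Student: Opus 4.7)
The plan is to recast the GNZ equation for $\mathbb H_{\widehat \rho}$ using the top-down identity from Lemma~\ref{lem:topdown-sol}, which $\mathbb H_{\widehat \rho}$ is already known to satisfy, and to identify the ``missing'' partition-function factor with $(1-\widehat \rho)^{\mathbb B_B}$. Because $\mathbb H_{\widehat \rho}$ is concentrated on the non-overlap event $\Delta$, both $\omega \ni B$ and $\omega \cap \mathbb I_B = \varnothing$ force $\omega \cap \mathbb B_B^* = \varnothing$ almost surely, so $\omega \setminus \{B\} = \omega \setminus \mathbb B_B$ on either event. Applying the top-down equation to the left-hand side of \eqref{eq:gnz2} thus reduces the GNZ condition for $\mathbb H_{\widehat \rho}$ to the requirement that
\[
	\widehat \rho(B) \, \mathbb H_{\widehat \rho} \bigl( \omega \cap \mathbb A_B^* = \varnothing, \, \omega \setminus \mathbb B_B \in \mathcal A \bigr) = z(B) \, \mathbb H_{\widehat \rho} \bigl( \omega \cap \mathbb I_B = \varnothing, \, \omega \setminus \mathbb B_B \in \mathcal A \bigr)
\]
hold for every $B \in \mathbb B$ and every measurable $\mathcal A$.

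Next I would invoke the Bernoulli representation from the proof of Lemma~\ref{lem:hierarchical}, under which $\mathbb H_{\widehat \rho}$ is the image of $\P_{\widehat \rho}^\mathrm{Ber}$ under $\omega' \mapsto \max(\omega')$. Summability of $\widehat \rho$ on $\mathbb A_B$ combined with Borel--Cantelli makes $\{\omega \cap \mathbb A_B^* = \varnothing\}$ and $\{\omega' \cap \mathbb A_B^* = \varnothing\}$ coincide almost surely, and on this event an inspection of ancestor sets shows that $\max(\omega') \setminus \mathbb B_B$ depends only on $\omega' \cap (\mathbb B \setminus \mathbb I_B)$ while $\max(\omega') \cap \mathbb B_B$ depends only on $\omega' \cap \mathbb B_B$. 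Since the Bernoulli sub-families indexed by $\mathbb A_B^*$, $\mathbb B_B$ and $\mathbb B \setminus \mathbb I_B$ are mutually independent, the events $\{\omega \cap \mathbb B_B = \varnothing\}$ and $\{\omega \setminus \mathbb B_B \in \mathcal A\}$ are conditionally independent given $\{\omega \cap \mathbb A_B^* = \varnothing\}$. Writing $f(B) := \mathbb H_{\widehat \rho}(\omega \cap \mathbb B_B = \varnothing \mid \omega \cap \mathbb A_B^* = \varnothing)$, the displayed identity therefore simplifies to
\[
	\widehat \rho(B) \, \mathbb H_{\widehat \rho} \bigl( \omega \cap \mathbb A_B^* = \varnothing, \, \omega \setminus \mathbb B_B \in \mathcal A \bigr) = z(B) \, f(B) \, \mathbb H_{\widehat \rho} \bigl( \omega \cap \mathbb A_B^* = \varnothing, \, \omega \setminus \mathbb B_B \in \mathcal A \bigr).
\]

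The final step is to compute $f(B) = (1 - \widehat \rho)^{\mathbb B_B}$. Splitting on whether $B \in \omega'$ and using the independence of Bernoullis across the $2^d$ subtrees $\mathbb B_{B_1}, \ldots, \mathbb B_{B_{2^d}}$ rooted at the children of $B$ yields the recursion $f(B) = (1 - \widehat \rho(B)) \, f(B_1) \cdots f(B_{2^d})$. Iterating $n$ times and bounding each $f$-value by $1$ gives $f(B) \leq \prod (1 - \widehat \rho(B'))$ with the product running over all descendants of $B$ within $n$ scales; sending $n \to \infty$ produces the upper bound $f(B) \leq (1 - \widehat \rho)^{\mathbb B_B}$, while the reverse bound is immediate from $\{\omega' \cap \mathbb B_B = \varnothing\} \subset \{\omega \cap \mathbb B_B = \varnothing\}$. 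Specialising the simplified identity to $\mathcal A = \Omega$ and noting that $\mathbb H_{\widehat \rho}(\omega \cap \mathbb A_B^* = \varnothing) = (1 - \widehat \rho)^{\mathbb A_B^*}$ is strictly positive in the relevant regime $\widehat \rho < 1$ (which always holds for $\widehat \rho = \widehat \rho_z$) then yields the claimed equivalence with Eq.~\eqref{eq:rhohat-to-z}. I expect the delicate step to be the computation of $f(B)$: because $\mathbb B_B$ is infinite and $\widehat \rho$ need not be summable on $\mathbb B_B$, the set $\omega' \cap \mathbb B_B$ can fail to admit any maximal element, so a naive Borel--Cantelli reduction to $\{\omega' \cap \mathbb B_B = \varnothing\}$ is unavailable and the recursion must be handled in the $n \to \infty$ limit.
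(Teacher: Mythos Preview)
Your approach is correct and genuinely different from the paper's. The paper computes both sides of the GNZ equation directly from the explicit formula $\mathbb H_{\widehat\rho}(\omega\supset\mathcal B)=\1_\Delta(\mathcal B)\,\widehat\rho^{\mathcal B}(1-\widehat\rho)^{\mathbb A_\mathcal B^*}$, tests against events $\{\omega\supset\mathcal B\}$ for finite $\mathcal B$, cancels the common factor $(1-\widehat\rho)^{\mathbb A_B^*\cup\mathbb A_\mathcal B^*}$, and is left with exactly the identity~\eqref{eq:rhohat-to-z}. Your route instead feeds the top-down identity back into GNZ and then isolates the ``below-$B$'' contribution via conditional independence under the Bernoulli coupling. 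Both arguments land on the same algebraic comparison $\widehat\rho(B)=z(B)\,(1-\widehat\rho)^{\mathbb B_B}$; the paper's is shorter because it bypasses the conditional-independence step entirely, while yours makes the probabilistic structure (independence of the subtree below $B$ from its complement) more visible.

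Two comments. First, the step you flag as delicate is in fact immediate: scales in $\mathbb B_B$ are bounded above by the scale of $B$, so every nonempty subset of $\mathbb B_B$ has a maximal element. Hence, on $\{\omega'\cap\mathbb A_B^*=\varnothing\}$, one has $\max(\omega')\cap\mathbb B_B=\max(\omega'\cap\mathbb B_B)=\varnothing$ if and only if $\omega'\cap\mathbb B_B=\varnothing$, and $f(B)=\P_{\widehat\rho}^{\mathrm{Ber}}(\omega'\cap\mathbb B_B=\varnothing)=(1-\widehat\rho)^{\mathbb B_B}$ without any recursion. Second, your final cancellation requires $(1-\widehat\rho)^{\mathbb A_B^*}>0$, which you justify only under the extra hypothesis $\widehat\rho<1$. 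The lemma as stated allows $\widehat\rho(B')=1$, so for the full equivalence you need the bootstrap the paper uses: Eq.~\eqref{eq:rhohat-to-z} forces $\widehat\rho(B)<1$ (the right-hand side contains the factor $1-\widehat\rho(B)$), and summability on $\mathbb A_B$ guarantees $(1-\widehat\rho)^{\mathbb A_{B'}}>0$ for some ancestor $B'$; induct downward to get $(1-\widehat\rho)^{\mathbb A_B^*}>0$ for every $B$ whenever $\mathbb H_{\widehat\rho}\in\mathcal G(z)$.
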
 

\begin{proof}
	In view of the proofs of Lemmas~\ref{lem:hierarchical}~and~\ref{lem:topdown-sol} and a standard extension argument via the $\pi$--$\lambda$ theorem, $\mathbb H_{\widehat \rho}$ satisfies the GNZ equation if and only if, for all $B \in \mathbb B$ and every finite $\mathcal B \subset \mathbb B \setminus \{B\}$, the terms
	\[
		\mathbb H_{\widehat \rho}(\omega \supset \{B\} \cup \mathcal B) = \1_\Delta(\{B\} \cup \mathcal B) \, \widehat \rho^{\{B\} \cup \mathcal B} \, (1 - \widehat \rho)^{\mathbb A_B^* \cup \mathbb A_\mathcal B^*}
	\]
	and
	\[
		z(B) \, \mathbb H_{\widehat \rho}(\omega \cap \mathbb I_B = \varnothing, \, \omega \supset \mathcal B) = z(B) \, \1_\Delta(\{B\} \cup \mathcal B) \, \widehat \rho^\mathcal B \, (1 - \widehat \rho)^{\mathbb I_B \cup \mathbb A_\mathcal B^*}
	\]
	coincide. Note that the conjunction of $\mathcal B \cap \mathbb I_B = \varnothing$ and $\mathcal B \in \Delta$ is indeed equivalent to $\{B\} \cup \mathcal B \in \Delta$, which also implies $\mathbb A_\mathcal B^* \setminus \mathbb A_B^* = \mathbb A_\mathcal B^* \setminus \mathbb I_B$. It hence suffices to compare the above terms in the case $\mathcal B = \varnothing$, that is, $\mathbb H_{\widehat \rho} \in \mathcal G(z)$ if and only if Eq.~\eqref{eq:rhohat-to-z} holds true for all $B\in \mathbb B$ such that
	\[
		(1 - \widehat \rho)^{\mathbb A_B^*} = \mathbb H_{\widehat \rho}(\omega \cap \mathbb A_B^* = \varnothing) > 0.
	\]
	Of course, for any given $B \in \mathbb B$, Eq.~\eqref{eq:rhohat-to-z} implies $\widehat \rho(B) < 1$ while the summability condition on $\widehat \rho$ yields the existence of some $B' \in \mathbb A_B^*$ with $(1 - \widehat \rho)^{\mathbb A_{B'}} > 0$. Combining these observations, one inductively obtains
	\[
		(1 - \widehat \rho)^{\mathbb A_B^*} = (1 - \widehat \rho)^{\mathbb A_{B'}} (1 - \widehat \rho)^{\mathbb A_B^* \setminus \mathbb A_{B'}} > 0
	\]
	whenever $\mathbb H_{\widehat \rho}$ is a Gibbs measure. The lemma now follows.
\end{proof}

\begin{remark}
	Note that Lemma~\ref{lem:rhohat-to-z} generalises Proposition~\ref{prop:mandelbrot}. The proof of the latter also proceeds by inserting the measure $\P = \P_p$ into the GNZ Eq.~\eqref{eq:gnz2} (with $B = [0, 1)^d$ and $\mathcal A = \Omega$). Since $\P_p = \mathbb H_{\widehat \rho}$ for $\widehat \rho = p \, \1_{\mathbb B_{[0, 1)^d}}$, the identity that fails in said proof is precisely Eq.~\eqref{eq:rhohat-to-z}.
\end{remark}

\noindent By plugging $\widehat \rho_z$ into Lemma~\ref{lem:rhohat-to-z}, we are set on the path to deriving the sufficiency of Conditions~(i) and (ii) in Theorem~\ref{thm:existence}.

\begin{lemma}\label{lem:pf-as-product}
	The map $\widehat \rho = \widehat \rho_z: \mathbb B \to [0,1]$ satisfies Eq.~\eqref{eq:rhohat-to-z} for all $B \in \mathbb B$ if and only if 
	\begin{equation}	\label{eq:pf-as-product}
		\Xi_B(z) = (1 + \widehat z)^{\mathbb B_B}
	\end{equation}
	for all $B\in \mathbb B$. 
\end{lemma}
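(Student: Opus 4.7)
My plan is to reduce Eq.~\eqref{eq:rhohat-to-z} to a simple algebraic identity between partition functions and then address the residual $z(B)=0$ case by a contradiction argument. First I would substitute $1-\widehat\rho_z(B) = 1/(1+\widehat z(B))$, an immediate consequence of Eqs.~\eqref{eq:pfr} and~\eqref{eq:rhohat2}, into the product $(1-\widehat\rho_z)^{\mathbb B_B}$. Combined with $\widehat\rho_z(B) = z(B)/\Xi_B(z)$, this recasts Eq.~\eqref{eq:rhohat-to-z} as the single identity
\[
	\frac{z(B)}{\Xi_B(z)} = \frac{z(B)}{(1+\widehat z)^{\mathbb B_B}}.
\]
The direction $(\Leftarrow)$ then follows immediately from the partition function formula. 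In the direction $(\Rightarrow)$, the identity at any $B$ with $z(B)>0$ forces the reciprocals $1/\Xi_B(z)$ and $1/(1+\widehat z)^{\mathbb B_B}$ to coincide, hence $\Xi_B(z) = (1+\widehat z)^{\mathbb B_B}$ as elements of $[1,\infty]$. The delicate case is $z(B)=0$, where the identity degenerates to $0=0$.

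To deal with $z(B) = 0$, I would telescope the recurrence~\eqref{eq:pfr} to write
\[
	\Xi_B(z) = \prod_{k=0}^n \prod_{\substack{B'\in \mathbb B_{j-k}: \\ B'\subset B}}(1+\widehat z(B')) \cdot \prod_{\substack{B'\in \mathbb B_{j-n-1}: \\ B'\subset B}}\Xi_{B'}(z)
\]
for every $n$. Sending $n\to\infty$ yields the unconditional bound $\Xi_B(z)\geq (1+\widehat z)^{\mathbb B_B}$, and when $\Xi_B(z)<\infty$, equality would be automatic: the bound $\Xi_{B'}(z)\leq \prod_{B''\subset B'}(1+z(B''))$ from~\eqref{eq:xifi-zfi} renders the logarithm of the second factor above a tail of the convergent series $\sum_{B''\subset B}\log(1+z(B''))\leq \sum_{B''\subset B}z(B'')<\infty$, hence vanishing as $n\to\infty$.

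The main obstacle will be the remaining case $\Xi_B(z)=\infty$, which I plan to tackle by contradiction. Supposing $(1+\widehat z)^{\mathbb B_B} =: K < \infty$, the inclusion $\mathbb B_{B'}\subset \mathbb B_B$ together with every factor being $\geq 1$ forces $(1+\widehat z)^{\mathbb B_{B'}}\leq K$ for every $B'\subset B$. The recast identity at blocks $B'$ with $z(B')>0$ then gives $\widehat\rho_z(B')=z(B')/(1+\widehat z)^{\mathbb B_{B'}}\geq z(B')/K$. Combining $\widehat\rho_z\leq \widehat z$ with the implication $\sum_{B'\subset B}z(B')\geq \Xi_B(z)-1=\infty$ from~\eqref{eq:xifi-zfi}, I would deduce $\sum_{B'\subset B}\widehat z(B')\geq K^{-1}\sum_{B'\subset B}z(B')=\infty$, which contradicts $(1+\widehat z)^{\mathbb B_B}=K<\infty$ and closes the argument.
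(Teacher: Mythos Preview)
Your argument is correct. The recast identity, the $(\Leftarrow)$ direction, and the case $z(B)>0$ match the paper exactly. The telescope identity and the tail argument for $\Xi_B(z)<\infty$ are valid (and in fact constitute an alternative proof of Lemma~\ref{lem:finite-pf}), and your contradiction for the infinite case is sound: under the hypothesis the recast identity yields $\widehat\rho_z(B')=z(B')/(1+\widehat z)^{\mathbb B_{B'}}\geq z(B')/K$ for every $B'\subset B$, and then $\sum_{B'\subset B}\widehat z(B')\geq \sum_{B'\subset B}\widehat\rho_z(B')\geq K^{-1}\sum_{B'\subset B} z(B')=\infty$ contradicts the standard equivalence $\prod(1+a_i)<\infty\Leftrightarrow\sum a_i<\infty$ for nonnegative $a_i$.

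The paper's $(\Rightarrow)$ argument is genuinely different. Instead of treating the cases $z(B)>0$, $\Xi_B(z)<\infty$, $\Xi_B(z)=\infty$ by hand, it truncates: it observes that Eq.~\eqref{eq:rhohat-to-z} is inherited by $z_\Lambda=z\,\1_{\mathbb B_\Lambda}$, $\widehat z_\Lambda=\widehat z\,\1_{\mathbb B_\Lambda}$, $\widehat\rho_{z_\Lambda}=\widehat\rho_z\,\1_{\mathbb B_\Lambda}$, so that the summability hypothesis of Lemma~\ref{lem:rhohat-to-z} becomes trivially satisfied and $\mathbb H_{\widehat\rho_{z_\Lambda}}\in\mathcal G(z_\Lambda)$; then Eq.~\eqref{eq:h1} from the proof of Lemma~\ref{lem:topdown} immediately gives $1/(1+\widehat z)^{\mathbb B_B}=1/\Xi_B(z)$. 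This is shorter but leans on the measure-theoretic machinery of Lemmas~\ref{lem:topdown} and~\ref{lem:rhohat-to-z}. Your route is more self-contained and purely combinatorial; it also foreshadows the arguments of Lemmas~\ref{lem:finite-pf} and~\ref{lem:infinite-pf}, which the paper proves separately afterwards.
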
 

\begin{proof}
	By definition, $1 - \widehat \rho_z = 1 / (1 + \widehat z)$. Given any $B \in \mathbb B$, applying the former together with Eq.~\eqref{eq:rhohat2} turns Eq.~\eqref{eq:rhohat-to-z} into
	\[
		 \frac{z(B)}{\Xi_B(z)} = \frac{z(B)}{(1 + \widehat z)^{\mathbb B_B}},
	\]
	which certainly holds true whenever Eq.~\eqref{eq:pf-as-product} does.
	
	Conversely, suppose that Eq.~\eqref{eq:rhohat-to-z} holds true for all $B \in \mathbb B$. Fix some $\Lambda \in \mathbb B$ and observe that Eq.~\eqref{eq:rhohat-to-z} still holds true for all $B \in \mathbb B$ when $z$, $\widehat z$ and $\widehat \rho = \widehat \rho_z$ are replaced by
	\[
		z_\Lambda = z \, \1_{\mathbb B_\Lambda}, \quad \widehat{(z_\Lambda)} = \widehat z \, \1_{\mathbb B_\Lambda} \quad \text{and} \quad \widehat \rho_{z_\Lambda} = \widehat \rho_z \, \1_{\mathbb B_\Lambda},
	\]
	respectively. Hence, given any $B \in \mathbb B$, we may assume without loss that $\sum_{B' \in \mathbb A_B^*} \widehat \rho_z(B') = 0$ and that, by Lemma~\ref{lem:rhohat-to-z}, $\mathbb H_{\widehat \rho_z}$ is a Gibbs measure for which Eq.~\eqref{eq:h1} with
	$\mathcal A = \Omega$ reads
	\[
		\frac{1}{(1 + \widehat z)^{\mathbb B_B}} = \frac{1}{\Xi_B(z)}
	\]
	since $\widehat z$ and $\widehat \rho_z$ are assumed to vanish on $\mathbb A_B^*$. The latter identity is obviously just Eq.~\eqref{eq:pf-as-product}.
\end{proof}

\noindent Cubes with finite partition function always satisfy Condition~\eqref{eq:pf-as-product}. 

\begin{lemma} \label{lem:finite-pf}
	For all $\Lambda \in \mathbb B$ with $\Xi_\Lambda(z) < \infty$, one has $\Xi_\Lambda(z) = (1 + \widehat z)^{\mathbb B_\Lambda}$.
\end{lemma}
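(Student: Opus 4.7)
The plan is to iterate the recurrence relation~\eqref{eq:pfr} through all scales below $\Lambda$ and show that the resulting ``bottom-scale'' partition functions contribute only a trivial factor of $1$ in the limit, leveraging the assumption $\Xi_\Lambda(z)<\infty$.

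Concretely, suppose $\Lambda \in \mathbb B_j$. Applying~\eqref{eq:pfr} once gives
\[
\Xi_\Lambda(z) = \bigl(1+\widehat z(\Lambda)\bigr) \prod_{\substack{\Lambda' \in \mathbb B_{j-1}: \\ \Lambda' \subset \Lambda}} \Xi_{\Lambda'}(z),
\]
and iterating this identity $n$ times yields
\[
\Xi_\Lambda(z) = \Biggl(\prod_{\substack{B \in \mathbb B_\Lambda: \\ j-n+1 \leq \mathrm{scale}(B) \leq j}} \!\!\!(1+\widehat z(B))\Biggr) \,\cdot\, \prod_{\substack{\Lambda' \in \mathbb B_{j-n}: \\ \Lambda' \subset \Lambda}} \Xi_{\Lambda'}(z).
\]
Since every factor $1+\widehat z(B) \geq 1$, the first product is increasing in $n$, hence converges to $(1+\widehat z)^{\mathbb B_\Lambda}\in [1,\infty]$; correspondingly, the second product is decreasing and converges in $[1,\infty)$. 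The claim $\Xi_\Lambda(z) = (1+\widehat z)^{\mathbb B_\Lambda}$ will follow once we show that the tail product of $\Xi_{\Lambda'}(z)$ over scale $j-n$ subblocks tends to $1$.

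For this, I would use the elementary bound from~\eqref{eq:xifi-zfi}, namely $\Xi_{\Lambda'}(z) \leq \prod_{B \subset \Lambda'}(1+z(B))$, so that $\log \Xi_{\Lambda'}(z) \leq \sum_{B \subset \Lambda'} z(B)$. Summing over disjoint subblocks $\Lambda' \in \mathbb B_{j-n}$ contained in $\Lambda$ yields
\[
0 \leq \sum_{\substack{\Lambda' \in \mathbb B_{j-n}: \\ \Lambda' \subset \Lambda}} \log \Xi_{\Lambda'}(z) \leq \sum_{\substack{B \subset \Lambda: \\ \mathrm{scale}(B) \leq j-n}} z(B).
\]
Because $\Xi_\Lambda(z)<\infty$ forces $\sum_{B \subset \Lambda} z(B) < \infty$ by~\eqref{eq:xifi-zfi}, the right-hand side is a tail of a convergent series and tends to $0$ as $n\to\infty$. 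Hence $\prod_{\Lambda' \in \mathbb B_{j-n},\, \Lambda' \subset \Lambda}\Xi_{\Lambda'}(z) \to 1$.

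Passing to the limit $n\to\infty$ in the iterated recurrence then gives $\Xi_\Lambda(z) = (1+\widehat z)^{\mathbb B_\Lambda}$, as desired. The only delicate point is justifying that the limits of the two products can be taken separately, which is immediate because both exist in $[1,\infty]$ and at least one of them (the tail product) is finite and bounded away from zero (it lies in $[1,\infty)$ and converges to $1$), while their product is $\Xi_\Lambda(z)<\infty$; this pins down the first product as finite and equal to $(1+\widehat z)^{\mathbb B_\Lambda} = \Xi_\Lambda(z)$.
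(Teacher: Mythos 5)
Your proof is correct. It starts from the same iterated recurrence identity as the paper, namely $\Xi_\Lambda(z) = (1+\widehat z)^{\mathbb B_\Lambda^{(n)}}\prod_{\Lambda'}\Xi_{\Lambda'}(z)$ with the product over the bottom-scale subblocks, but you finish differently: you keep the \emph{untruncated} effective activities in the partial product and kill the remainder directly, via the tail estimate $\sum_{\Lambda'}\log\Xi_{\Lambda'}(z)\le\sum_{B\subset\Lambda,\ \mathrm{scale}(B)\le j-n}z(B)\to 0$. The paper instead passes to the truncated activity $z^{(n)}$, uses the exact finite-product identity $\Xi_\Lambda(z^{(n)})=(1+\widehat z^{(n)})^{\mathbb B_\Lambda}$, sends the left side up by monotone convergence, and handles the right side by dominated convergence because the effective activities $\widehat z^{(n)}(B)$ themselves vary with $n$. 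Your route avoids introducing $\widehat z^{(n)}$ and the dominated-convergence step altogether, at the cost of the (easy) argument that the two limits may be taken separately; both arguments ultimately rest on the same input $\sum_{B\subset\Lambda}z(B)\le\Xi_\Lambda(z)<\infty$ from Eq.~\eqref{eq:xifi-zfi}, used by the paper as the dominating bound and by you as the tail bound. One point worth making explicit in your write-up is that $\Xi_\Lambda(z)<\infty$ forces $\Xi_{\Lambda'}(z)<\infty$ for every subblock $\Lambda'\subset\Lambda$ (immediate from Eq.~\eqref{eq:pfr} and $\Xi\ge 1$), so that all quantities in the iterated identity are finite and the manipulations are legitimate.
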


\begin{proof}
	For $n \in \Z$, set
	\[
	\mathbb B^{(n)} : = \bigcup_{j \geq -n} \mathbb B_j, \quad \mathbb B_\Lambda^{(n)} : = \mathbb B_\Lambda \cap \mathbb B^{(n)}.
	\]
	By the recurrence relation \eqref{eq:pfr} for partition functions, for every $n \in \Z$ with $-n-1$ at most the scale of $\Lambda$, 
	\[
		\Xi_\Lambda(z) = (1 + \widehat z)^{\mathbb B_\Lambda^{(n)}}  \prod_{B \in \mathbb B_\Lambda \cap \mathbb B_{-n-1}} \Xi_B(z). 
	\] 	
	When $z$ vanishes for blocks at scale below $-n$, the partition functions on the right side are equal to $1$ and can be omitted from the previous equation. Thus, let us consider the truncated activity 
	\[
		z^{(n)}(B):= \begin{cases} 
						z(B) & \text{if } B \in \mathbb B^{(n)}, \\
						0 & \text{else},
				\end{cases}
	\] 	
	and write $\widehat z^{(n)}(B)$ for the associated effective activity. Then 
	\begin{equation} \label{eq:xifi}
		\Xi_\Lambda \bigl( z^{(n)} \bigr) = \bigl( 1 + \widehat z^{(n)} \bigr)^{\mathbb B_\Lambda}
	\end{equation}
	and only finitely many terms on the right side, namely those for $B \in \mathbb B_\Lambda^{(n)}$, may differ from $1$. The left side converges monotonically to $\Xi_\Lambda(z)$ as $n\to \infty$. For the right side, we note that $\widehat z^{(n)}(B) \to \widehat z(B)$ for each $B \in \mathbb B$ and 
	$\widehat z^{(n)}(B)  \leq z^{(n)}(B) \leq z(B)$ with
	\[
		\sum_{B \in \mathbb B_\Lambda} \log \bigl( 1 + z(B) \bigr) \leq \sum_{B \in \mathbb B_\Lambda} z(B) \leq \Xi_\Lambda(z) < \infty. 
	\] 
	Dominated convergence allows us to pass to the limit on the right side of Eq.~\eqref{eq:xifi}.
\end{proof}

\noindent This leaves blocks with infinite partition function, for which Eq.~\eqref{eq:pf-as-product} turns out to be equivalent to Condition~(i) in Theorem~\ref{thm:existence}.

\begin{lemma} \label{lem:infinite-pf}
	If $\Lambda \in \mathbb B$ such that $\Xi_\Lambda(z) = \infty$, then $\Xi_\Lambda(z) = (1 + \widehat z)^{\mathbb B_\Lambda}$ holds true if and only if $\sum_{B \in \mathbb B_\Lambda: \Xi_B(z) < \infty} z(B) = \infty$. 	
\end{lemma}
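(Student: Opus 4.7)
The plan is to split $\mathbb B_\Lambda$ according to whether $\Xi_B(z)$ is finite and show that all the content of the identity $\Xi_\Lambda(z) = (1+\widehat z)^{\mathbb B_\Lambda}$ sits in the finite-partition-function part. Set $\mathcal F := \{B \in \mathbb B_\Lambda : \Xi_B(z) < \infty\}$ and $\mathcal I := \mathbb B_\Lambda \setminus \mathcal F$. First I would record two structural observations from the recurrence \eqref{eq:pfr}: since $z$ is finite-valued and each $\Xi_{B'}(z) \geq 1$, every $B \in \mathcal I$ must have at least one immediate subblock in $\mathcal I$, which (by the convention on fractions with infinite denominator) forces $\widehat z(B) = 0$ throughout $\mathcal I$; dually, the inequality $\Xi_B(z) \geq \Xi_{B'}(z)$ for $B' \subset B$ makes $\mathcal F$ closed under taking subblocks. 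Since $\Lambda \in \mathcal I$ by hypothesis, $\mathcal F$ then admits a disjoint decomposition $\mathcal F = \bigsqcup_{B \in \mathcal F_{\max}} \mathbb B_B$ over the inclusion-maximal blocks $\mathcal F_{\max} := \{B \in \mathcal F : B\text{'s immediate parent lies in } \mathcal I\}$.

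Next, applying Lemma~\ref{lem:finite-pf} block-by-block along this partition gives
\[
	(1 + \widehat z)^{\mathbb B_\Lambda} = \prod_{B \in \mathcal F_{\max}} \prod_{B' \in \mathbb B_B} \bigl( 1 + \widehat z(B') \bigr) = \prod_{B \in \mathcal F_{\max}} \Xi_B(z),
\]
since $\widehat z$ vanishes off $\mathcal F$. As $\Xi_\Lambda(z) = \infty$, the identity \eqref{eq:pf-as-product} is equivalent to divergence of this product, so the claim reduces to showing $\prod_{B \in \mathcal F_{\max}} \Xi_B(z) = \infty$ iff $\sum_{B \in \mathcal F} z(B) = \infty$.

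This final step is a short analytic argument based on \eqref{eq:xifi-zfi}. Setting $a_B := \sum_{B' \in \mathbb B_B} z(B')$ (finite for every $B \in \mathcal F_{\max}$), \eqref{eq:xifi-zfi} together with $\log(1+x) \leq x$ yields the sandwich $\log(1 + a_B) \leq \log \Xi_B(z) \leq a_B$. Summing over $\mathcal F_{\max}$ and invoking the partition of $\mathcal F$ gives $\sum_{B \in \mathcal F_{\max}} a_B = \sum_{B \in \mathcal F} z(B)$, so the upper bound settles the direction $\sum z(B) < \infty \Rightarrow \prod \Xi_B(z) < \infty$ immediately. For the converse, when $\sum_{B \in \mathcal F_{\max}} a_B = \infty$, I would split the sum according to whether $a_B \geq 1$ or $a_B < 1$ and use $\log(1+a_B) \geq \log 2$ in the first case and $\log(1+a_B) \geq a_B/2$ in the second to conclude $\sum \log(1+a_B) = \infty$. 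The one point needing care is exactly this case distinction, since the product can diverge either through infinitely many large terms or infinitely many small ones; once that is settled, the rest is just bookkeeping with the tree structure of $\mathcal F$ and the bounds \eqref{eq:xifi-zfi}.
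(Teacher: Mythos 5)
Your proposal is correct and takes essentially the same route as the paper: both decompose the finite-partition-function blocks into the downward-closed sets $\mathbb B_M$ over the inclusion-maximal such blocks $M$, apply Lemma~\ref{lem:finite-pf} to each to identify $(1+\widehat z)^{\mathbb B_\Lambda}$ with $\prod_M \Xi_M(z)$, and then compare this product with $\sum_{B} z(B)$ via Eq.~\eqref{eq:xifi-zfi}. The only cosmetic difference is in the divergence step, where the paper avoids your case split on $a_B\gtrless 1$ by simply expanding $\prod_M(1+a_M)\geq 1+\sum_M a_M$.
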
 

\begin{proof}
 If $\Xi_\Lambda(z) = (1 + \widehat z)^{\mathbb B_\Lambda} = \infty$, then we must have 
	\[
		\sum_{\substack{ B \in \mathbb B_\Lambda: \\ \Xi_B(z) < \infty }} z(B) \geq \sum_{B \in \mathbb B_\Lambda} \widehat z(B) = \infty.  
	\] 
	Conversely, suppose that $\sum_{B \in \mathbb B_\Lambda: \Xi_B(z) < \infty} z(B) = \infty$ holds true. Let 
	\[
		\mathbb F_\Lambda(z) := \{B \in \mathbb B_\Lambda \mid \Xi_B(z) < \infty\} \quad \text{and} \quad \mathbb M_\Lambda(z) := \max(\mathbb F_\Lambda(z))
	\] 	
	 be the relevant set of blocks below $\Lambda$ with finite partition function and the set of its maximal elements with respect to inclusion, respectively, cf.\ the proof of Lemma~\ref{lem:hierarchical}. Writing
	 \[
	 	\mathbb F_\Lambda(z) = \mathbb B_{\mathbb M_\Lambda(z)} = \bigcup_{M \in \mathbb M_\Lambda(z)} \mathbb B_M, 
	 \] 
	 where the last union is over mutually disjoint sets of cubes, we obtain
	 \[
	 	(1 + \widehat z)^{\mathbb B_\Lambda} = (1 + \widehat z)^{\mathbb F_\Lambda(z)} = \prod_{M \in \mathbb M_\Lambda(z)} (1 + \widehat z)^{\mathbb B_M} = \prod_{M \in \mathbb M_\Lambda(z)} \Xi_M(z). 
	 \] 
	 In the last line we have applied Lemma~\ref{lem:finite-pf} to the blocks $M \in \mathbb M_\Lambda(z)$. The product of partition functions is bounded from below by 
	 \[
	 	\prod_{M \in \mathbb M_\Lambda(z)} \Bigl( 1 + \sum_{B  \in \mathbb B_M} z(B) \Bigr) \geq 1 + \sum_{B\in \mathbb F_\Lambda(z)} z(B) = \infty. 
	 \]  
	 Thus $\prod_{M \in \mathbb M_\Lambda(z)} \Xi_M(z)$ is infinite and so is $(1+\widehat z)^{\mathbb B_\Lambda}$.
\end{proof}

\noindent At this point, the proof of our main result on the existence and uniqueness of Gibbs measures comes down to collecting the partial results derived so far.

\begin{proof}[Proof of Theorem~\ref{thm:existence}]
	By Lemmas~\ref{lem:topdown},~\ref{lem:rhohat-summable} and~\ref{lem:topdown-sol}, any $\P \in \mathcal G(z)$ satisfies the top-down condition for $\widehat \rho_z$ and is hence equal to $\mathbb H_{\widehat \rho_z}$ with
	\[
		\sum_{B' \in \mathbb A_B} \widehat \rho_z(B') < \infty
	\]
	for all $B \in \mathbb B$. This covers both the uniqueness of the Gibbs measure as well as the necessity of Condition~(ii) in Theorem~\ref{thm:existence}, the latter being equivalent to the above summability assertion for $\widehat \rho_z = \widehat z / (1 + \widehat z)$.
	
	As previously noted, the necessity of Condition~(i) in Theorem~\ref{thm:existence} also follows from the arguments proving Lemmas~\ref{lem:topdown} and \ref{lem:rhohat-summable}. Nevertheless, it now suffices to prove the following: given that the above summability condition on $\widehat \rho_z$ (i.e.\ Condition~(ii) in Theorem~\ref{thm:existence}) holds true, one has $\mathbb H_{\widehat \rho_z} \in \mathcal G(z)$ if and only if Condition~(i) in Theorem~\ref{thm:existence} is met. But this is essentially the content of Lemmas~\ref{lem:rhohat-to-z} through \ref{lem:infinite-pf}.
\end{proof}

\noindent This concludes the proof of our main result. The explicit representation of Gibbs states as hierarchical measures is of interest in its own right and very useful for subsequent proofs.

\subsection{Proofs of Propositions~\ref{prop:fragmentation} and \ref{prop:condensation}} \label{subsec:proofs-cond-frag}

Recall that, given an activity $z: \mathbb B \to \R_+$ and $\Lambda \in \mathbb B$, we denote by $z_\Lambda = z \, \1_{\mathbb B_\Lambda}$ the finite-volume restriction of $z$ to blocks below $\Lambda$ and by $\P_\Lambda \in \mathcal G(z_\Lambda)$ the corresponding (unique) finite-volume Gibbs-measure whenever it exists. If it does exist, then, by Lemmas~\ref{lem:topdown}, \ref{lem:rhohat-summable} and \ref{lem:topdown-sol}, it is just $\mathbb H_{\widehat \rho_{z_\Lambda}}$, the hierarchical measure associated with $\widehat \rho_{z_\Lambda} = \frac{\widehat z}{1 + \widehat z} \, \1_{\mathbb B_\Lambda}$, cf.\ also the proof of Lemma~\ref{lem:pf-as-product}.

For $n \in \Z$, recall the further restriction $z_\Lambda^{(n)} = z \, \1_{\mathbb B_\Lambda^{(n)}}$ of the activity to the finite set
\[
	\mathbb B_\Lambda^{(n)} := \mathbb B_\Lambda \cap \bigcup_{j \geq -n} \mathbb B_j,
\]
see also the proof of Lemma~\ref{lem:finite-pf}. Adding Lemmas~\ref{lem:rhohat-to-z} and \ref{lem:pf-as-product} to the mix, the corresponding Gibbs-measure $\P_\Lambda^{(n)}$ for $z_\Lambda^{(n)}$ exists unconditionally and is, of course, nothing but the hierarchical measure for $\widehat \rho_{z_\Lambda^{(n)}}$.

\begin{lemma} \label{lem:convergence-void-weak}
	One has
	\[
		\lim_{n \to \infty} \P_\Lambda^{(n)}(\omega = \varnothing) = \frac{1}{\Xi_\Lambda(z)} \quad \text{and} \quad \lim_{n \to \infty} \P_\Lambda^{(n)}(\omega \supset \mathcal B) = \mathbb H_{\widehat \rho_{z_\Lambda}}(\omega \supset \mathcal B)
	\]
	for all finite $\mathcal B \in \mathbb B$.
\end{lemma}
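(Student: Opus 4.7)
The plan is to identify both the prelimit and limit measures as hierarchical measures and then reduce the claimed convergences to monotone convergence of partition functions. Since $\widehat \rho_{z_\Lambda^{(n)}} = \widehat{z^{(n)}}/(1 + \widehat{z^{(n)}})$ is supported on the \emph{finite} set $\mathbb B_\Lambda^{(n)}$, the summability hypothesis of Lemma~\ref{lem:topdown-sol} is trivial, so combining Lemmas~\ref{lem:topdown}, \ref{lem:rhohat-summable} and \ref{lem:topdown-sol} (together with Lemmas~\ref{lem:rhohat-to-z}, \ref{lem:pf-as-product} and \ref{lem:finite-pf}, all applicable because $\Xi_\Lambda(z^{(n)}) < \infty$) identifies $\P_\Lambda^{(n)} = \mathbb H_{\widehat \rho_{z_\Lambda^{(n)}}}$. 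The target measure $\mathbb H_{\widehat \rho_{z_\Lambda}}$ is well-defined by Lemma~\ref{lem:hierarchical}.

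For the first limit, the explicit formula for $\P_\Lambda^{(n)}$ recalled at the start of Subsection~\ref{subsec:proofs-cond-frag} gives directly
\[
	\P_\Lambda^{(n)}(\omega = \varnothing) = \frac{1}{\Xi_\Lambda(z^{(n)})}.
\]
Since $z^{(n)}(B) \uparrow z(B)$ monotonically for every $B \in \mathbb B$, the monotone convergence theorem applied to the sum over finite non-overlapping configurations defining the partition function yields $\Xi_\Lambda(z^{(n)}) \uparrow \Xi_\Lambda(z)$, which proves the first claim (including the case $\Xi_\Lambda(z) = \infty$, using the convention that $1/\infty = 0$).

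For the second limit, let $\mathcal B \subset \mathbb B$ be finite. If $\mathcal B \notin \Delta$ or some $B \in \mathcal B$ satisfies $B \not\subset \Lambda$, then both sides of the claimed identity vanish since $\widehat \rho_{z_\Lambda^{(n)}}$ and $\widehat \rho_{z_\Lambda}$ are supported in $\mathbb B_\Lambda$. Suppose therefore that $\mathcal B \subset \mathbb B_\Lambda$ and $\mathcal B \in \Delta$. By Lemma~\ref{lem:hierarchical},
\[
	\P_\Lambda^{(n)}(\omega \supset \mathcal B) = \widehat \rho_{z_\Lambda^{(n)}}^{\mathcal B} \, (1 - \widehat \rho_{z_\Lambda^{(n)}})^{\mathbb A_\mathcal B^*},
\]
and similarly for $\mathbb H_{\widehat \rho_{z_\Lambda}}(\omega \supset \mathcal B)$. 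Because each $B' \in \mathbb A_\mathcal B^*$ either contains $\Lambda$ (where both $\widehat \rho_{z_\Lambda^{(n)}}$ and $\widehat \rho_{z_\Lambda}$ vanish, contributing a factor $1$) or lies in $\mathbb B_\Lambda$, both products reduce to products over the \emph{finite} set $\mathbb A_\mathcal B^* \cap \mathbb B_\Lambda$ (containing at most one cube per scale per element of $\mathcal B$ sandwiched between $\mathcal B$ and $\Lambda$). The second limit therefore reduces to the pointwise convergence $\widehat \rho_{z_\Lambda^{(n)}}(B') \to \widehat \rho_{z_\Lambda}(B')$ for each $B' \in \mathbb B_\Lambda$.

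The pointwise convergence is the only subtle point, and I would handle it via Eq.~\eqref{eq:rhohat2}. Monotone convergence (as in the first step) yields $\Xi_{B'}(z_\Lambda^{(n)}) \uparrow \Xi_{B'}(z_\Lambda)$. If $\Xi_{B'}(z_\Lambda) < \infty$, then $\widehat \rho_{z_\Lambda^{(n)}}(B') = z(B')/\Xi_{B'}(z_\Lambda^{(n)}) \to z(B')/\Xi_{B'}(z_\Lambda) = \widehat \rho_{z_\Lambda}(B')$; if $\Xi_{B'}(z_\Lambda) = \infty$, the convention on infinite denominators makes $\widehat \rho_{z_\Lambda}(B') = 0$, and the same convergence holds since $\Xi_{B'}(z_\Lambda^{(n)}) \to \infty$ while $z(B')$ is fixed. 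The main obstacle is conceptual rather than technical: one must realise that both hierarchical products involve infinite index sets but, thanks to the vanishing of the effective activities outside $\mathbb B_\Lambda$, reduce to finite products for which pointwise convergence suffices.
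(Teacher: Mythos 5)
Your proposal is correct and follows essentially the same route as the paper: identify $\P_\Lambda^{(n)}$ as the hierarchical measure $\mathbb H_{\widehat \rho_{z_\Lambda^{(n)}}}$, obtain the first limit from monotone convergence of $\Xi_\Lambda(z^{(n)}) \uparrow \Xi_\Lambda(z)$, and reduce the second limit to pointwise convergence $\widehat \rho_{z_\Lambda^{(n)}}(B') \to \widehat \rho_{z_\Lambda}(B')$ via Eq.~\eqref{eq:rhohat2} together with the observation that only the finitely many blocks in $\mathbb A_\mathcal B^* \cap \mathbb B_\Lambda$ contribute to the hierarchical products. The only cosmetic difference is that the paper deduces the pointwise convergence of $\widehat\rho$ directly from the first statement applied to $B$ in place of $\Lambda$, whereas you rerun the monotone-convergence argument for $\Xi_{B'}(z_\Lambda^{(n)})$ — the same computation either way.
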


\begin{proof}
	Since, by the preceding discussion,
	\[
		\P_\Lambda^{(n)}(\omega = \varnothing) = \P_\Lambda^{(n)} \bigl( \omega \cap \mathbb B_\Lambda^{(n)} = \varnothing \bigr) = \frac{1}{\bigl( 1 + \widehat z_\Lambda^{(n)} \bigr)^{\mathbb B_\Lambda^{(n)}}} = \frac{1}{\Xi_\Lambda \bigl( z_\Lambda^{(n)} \bigr)}
	\]
	for all $n \in \Z$, the first statement follows by monotone convergence. As an immediate consequence,
	\[
		\lim_{n \to \infty} \widehat \rho_{z_\Lambda^{(n)}}(B) = \lim_{n \to \infty} \frac{z_\Lambda^{(n)}(B)}{\Xi_B \bigl( z_\Lambda^{(n)} \bigr)} = \frac{z_\Lambda(B)}{\Xi_B(z_\Lambda)} = \widehat \rho_{z_\Lambda}
	\]
	for all $B \in \mathbb B$ so we also get
	\begin{align*}
		\lim_{n \to \infty} \P_\Lambda^{(n)}(\omega \supset \mathcal B) & = \lim_{n \to \infty} \1_\Delta(\mathcal B) \, \widehat \rho_{z_\Lambda^{(n)}}^\mathcal B \, (1 - \widehat \rho_{z_\Lambda^{(n)}})^{\mathbb A_\mathcal B^*} \\
		& = \1_\Delta(\mathcal B) \, \widehat \rho_{z_\Lambda}^\mathcal B \, (1 - \widehat \rho_{z_\Lambda})^{\mathbb A_\mathcal B^*} \\
		& = \mathbb H_{\widehat \rho_{z_\Lambda}}(\omega \supset \mathcal B).
	\end{align*}
	for all finite $\mathcal B \in \mathbb B$ where we note that effectively only finitely many blocks, namely those in $\mathbb A_\mathcal B \cap \mathbb B_\Lambda$, contribute to the above products.
\end{proof}

\begin{remark}
	When equipping $\Omega \cong \{0, 1\}^\mathbb B$ with the product topology, with each factor viewed as discrete, the second part of Lemma~\ref{lem:convergence-void-weak} obtains a new meaning: it says that $\mathbb H_{\widehat \rho_{z_\Lambda}}$, the unique candidate for the finite-volume Gibbs measure $\P_\Lambda$, is always the weak limit of the truncated finite-volume Gibbs measures $\P_\Lambda^{(n)}$.
\end{remark}

\noindent Next, recall the set $\mathbb F_\Lambda(z) = \{B \in \mathbb B_\Lambda \mid \Xi_B(z) < \infty\}$ from the proof of Lemma~\ref{lem:infinite-pf}. The implications of the representation
\begin{equation} \label{eq:F-M-structure}
	\mathbb F_\Lambda(z) = \mathbb B_{\mathbb M_\Lambda(z)} = \bigcup_{M \in \mathbb M_\Lambda(z)} \mathbb B_M
\end{equation}
with $\mathbb M_\Lambda(z) = \max(\mathbb F_\Lambda(z))$ are particularly relevant to the last parts of the proof of the next lemma.

\begin{lemma} \label{lem:convergence-F-void}
	One has
	\[
		\lim_{n \to \infty} \P_\Lambda^{(n)}(\omega \cap \mathbb F_\Lambda(z) = \varnothing) = \frac{1}{(1 + \widehat z)^{\mathbb B_\Lambda}}.
	\]
	If $\Xi_\Lambda(z) = \infty$ and $\sum_{B \in \mathbb F_\Lambda(z)} z(B) < \infty$, one additionally has
	\[
		\lim_{n \to \infty} \P_\Lambda^{(n)}(\omega \setminus \mathbb F_\Lambda(z) = \varnothing) = 0.
	\]
\end{lemma}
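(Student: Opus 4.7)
The plan is to treat the two limits separately.

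The second limit is the easier one: $\{\omega \setminus \mathbb F_\Lambda(z) = \varnothing\} = \{\omega \subset \mathbb F_\Lambda(z)\}$, and the disjoint decomposition $\mathbb F_\Lambda(z) = \bigsqcup_{M \in \mathbb M_\Lambda(z)} \mathbb B_M$ from Eq.~\eqref{eq:F-M-structure} makes the corresponding partition-function sum factorise, giving
\[
	\P_\Lambda^{(n)}(\omega \setminus \mathbb F_\Lambda(z) = \varnothing) = \frac{\prod_{M \in \mathbb M_\Lambda(z)} \Xi_M(z^{(n)})}{\Xi_\Lambda(z^{(n)})}.
\]
Monotone convergence combined with Lemma~\ref{lem:finite-pf} applied to each $M$ drives the numerator to $(1+\widehat z)^{\mathbb F_\Lambda(z)} \leq \exp(\sum_{B \in \mathbb F_\Lambda(z)} z(B)) < \infty$, while the denominator diverges to $\Xi_\Lambda(z) = \infty$ by assumption, so the ratio vanishes.

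For the first limit, the case $\Xi_\Lambda(z) < \infty$ is immediate: then $\mathbb F_\Lambda(z) = \mathbb B_\Lambda$, the event is $\{\omega = \varnothing\}$, and Lemmas~\ref{lem:convergence-void-weak}~and~\ref{lem:finite-pf} close this case. For $\Xi_\Lambda(z) = \infty$, the pivotal observation is that $\widehat z(B) = 0$ whenever $\Xi_B(z) = \infty$ (by the infinite-denominator convention, since then $\prod \Xi_{B'}(z) = \infty$ while $z(B) < \infty$), so $\widehat \rho_{z_\Lambda}$ is supported within $\mathbb F_\Lambda(z)$; hence $\omega \subset \mathbb F_\Lambda(z)$ almost surely under $\mathbb H_{\widehat \rho_{z_\Lambda}}$ and $\mathbb H_{\widehat \rho_{z_\Lambda}}(\omega \cap \mathbb F_\Lambda(z) = \varnothing) = \mathbb H_{\widehat \rho_{z_\Lambda}}(\omega = \varnothing) = 1/(1+\widehat z)^{\mathbb B_\Lambda}$ via the Bernoulli construction of Lemma~\ref{lem:hierarchical}. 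I would then sandwich the sought limit between matching bounds equal to $1/(1+\widehat z)^{\mathbb B_\Lambda}$: for the upper bound, the event $\{\omega \cap (\mathbb F_\Lambda(z) \cap \mathbb B_\Lambda^{(k)}) = \varnothing\}$ involves only finitely many cylinders, so inclusion-exclusion on Lemma~\ref{lem:convergence-void-weak} gives, for each $k$, $\lim_n \P_\Lambda^{(n)}(\cdot) = \mathbb H_{\widehat \rho_{z_\Lambda}}(\cdot)$; letting $k \to \infty$ monotonically delivers $\limsup_n \P_\Lambda^{(n)}(\omega \cap \mathbb F_\Lambda(z) = \varnothing) \leq 1/(1+\widehat z)^{\mathbb B_\Lambda}$. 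For the lower bound, the Bernoulli realisation $\P_\Lambda^{(n)} = \mathbb H_{\widehat \rho_{z_\Lambda^{(n)}}}$ from the proof of Lemma~\ref{lem:hierarchical} and the trivial inclusion $\max(\omega_{\mathrm B}) \subset \omega_{\mathrm B}$ yield
\[
	\P_\Lambda^{(n)}(\omega \cap \mathbb F_\Lambda(z) = \varnothing) \geq \prod_{B \in \mathbb F_\Lambda(z) \cap \mathbb B_\Lambda^{(n)}} \bigl(1 - \widehat \rho_{z_\Lambda^{(n)}}(B)\bigr) = \frac{1}{\prod_{M \in \mathbb M_\Lambda(z)} \Xi_M(z^{(n)})},
\]
where the last equality groups the factors by $M$ and applies Lemma~\ref{lem:finite-pf} to each $M$; monotone convergence then brings this lower bound down to $1/(1+\widehat z)^{\mathbb B_\Lambda}$.

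The main obstacle is that $\{\omega \cap \mathbb F_\Lambda(z) = \varnothing\}$ is a closed but not open cylinder event, so weak-convergence arguments from Lemma~\ref{lem:convergence-void-weak} alone yield only the upper bound. Working directly with the Bernoulli realisation of the hierarchical measure supplies the complementary non-asymptotic lower bound.
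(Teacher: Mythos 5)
Your proposal is correct and follows essentially the same route as the paper: the second limit is the ratio $\bigl(1+\widehat z^{(n)}\bigr)^{\mathbb F_\Lambda(z)}/\Xi_\Lambda\bigl(z^{(n)}\bigr)$ with finite numerator limit and divergent denominator, and the first limit is sandwiched between the weak-convergence upper bound on finite scale windows and the explicit lower bound $1/\bigl(1+\widehat z^{(n)}\bigr)^{\mathbb F_\Lambda(z)}$. The only (cosmetic) difference is that you obtain this lower bound by dominating $\max(\omega_{\mathrm B})\subset\omega_{\mathrm B}$ under the Bernoulli realisation, whereas the paper derives the identical inequality by comparing the configuration sums directly; your shortcut is a perfectly valid and arguably cleaner derivation of the same bound.
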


\begin{proof}
	We prove the first statement in two steps. On the one hand, one trivially has
	\[
		\limsup_{n \to \infty} \P_\Lambda^{(n)}(\omega \cap \mathbb F_\Lambda(z) = \varnothing) \leq \inf_{m \in \Z} \limsup_{n \to \infty} \P_\Lambda^{(n)} \bigl( \omega \cap \mathbb F_\Lambda(z) \cap \mathbb B_\Lambda^{(m)} = \varnothing \bigr).
	\]
	In view of Lemma~\ref{lem:convergence-void-weak}, $\mathbb B_\Lambda^{(m)}$ being finite and $\widehat z$ vanishing on $\mathbb B_\Lambda \setminus \mathbb F_\Lambda(z)$, the probability in the upper bound actually converges to
	\[
		\lim_{n \to \infty} \P_\Lambda^{(n)} \bigl( \omega \cap \mathbb B_\Lambda^{(m)} = \varnothing \bigr) = \lim_{n \to \infty} \frac{1}{\bigl( 1 + \widehat z^{(n)} \bigr)^{\mathbb B_\Lambda^{(m)}}} = \frac{1}{(1 + \widehat z)^{\mathbb B_\Lambda^{(m)}}}
	\]
	whose infimum over $m \in \Z$ is just $1/(1 + \widehat z)^{\mathbb B_\Lambda}$.
	
	On the other hand, we obtain a lower bound via
	\begin{align*}
		\P_\Lambda^{(n)}(\omega \cap \mathbb F_\Lambda(z) = \varnothing) & = \sum_{ \substack{ \mathcal B \subset \mathbb B_\Lambda^{(n)} \setminus \mathbb F_\Lambda(z): \\ \mathcal B \in \Delta } } \widehat \rho_{z^{(n)}}^\mathcal B \, \bigl( 1 - \widehat \rho_{z^{(n)}} \bigr)^{\mathbb B_\Lambda^{(n)} \setminus \mathbb B_\mathcal B} \\
		& \geq \sum_{ \substack{ \mathcal B \subset \mathbb B_\Lambda^{(n)} \setminus \mathbb F_\Lambda(z): \\ \mathcal B \in \Delta } } \frac{\widehat \rho_{z^{(n)}}^\mathcal B \, \bigl( 1 - \widehat \rho_{z^{(n)}} \bigr)^{\mathbb B_\Lambda^{(n)} \setminus ( \mathbb B_\mathcal B \cup \mathbb F_\Lambda(z) )}}{\bigl( 1 + \widehat z^{(n)} \bigr)^{\mathbb F_\Lambda(z)}} \\
		& = \frac{1}{\bigl( 1 + \widehat z^{(n)} \bigr)^{\mathbb F_\Lambda(z)}}.
	\end{align*}
	While the summands in the first sum are hierarchical probabilities of the events $\{\omega \cap \mathbb B_\Lambda^{(n)} = \mathcal B\}$, the last equality comes about because, by Eq.~\eqref{eq:F-M-structure}, the numerators in the second sum are the corresponding hierarchical probabilities of the events $\{\omega \cap \mathbb B_\Lambda^{(n)} \setminus \mathbb F_\Lambda(z) = \mathcal B\}$. Lemmas~\ref{lem:finite-pf} and \ref{lem:infinite-pf}, together with monotone convergence, yield
	\[
		\bigl( 1 + \widehat z^{(n)} \bigr)^{\mathbb F_\Lambda(z)} = \Xi_\Lambda \bigl( z^{(n)} \, \1_{\mathbb F_\Lambda(z)} \bigr) \to \Xi_\Lambda \bigl( z \, \1_{\mathbb F_\Lambda(z)} \bigr) = (1 + \widehat z)^{\mathbb F_\Lambda(z)}
	\]
	as $n \to \infty$ (even if the limiting terms on the right are infinite) so we now observe
	\[
		\liminf_{n \to \infty} \P_\Lambda^{(n)}(\omega \cap \mathbb F_\Lambda(z) = \varnothing) \geq \frac{1}{(1 + \widehat z)^{\mathbb F_\Lambda(z)}} = \frac{1}{(1 + \widehat z)^{\mathbb B_\Lambda}},
	\]
	the term on the right coinciding with our upper bound.
	
	For the additional claim of the lemma, we write
	\[
		\P_\Lambda^{(n)}(\omega \setminus \mathbb F_\Lambda(z) = \varnothing) = \frac{1}{\bigl( 1 + \widehat z^{(n)} \bigr)^{\mathbb B_\Lambda \setminus \mathbb F_\Lambda(z)}} = \frac{\bigl( 1 + \widehat z^{(n)} \bigr)^{\mathbb F_\Lambda(z)}}{\bigl( 1 + \widehat z^{(n)} \bigr)^{\mathbb B_\Lambda}},
	\]
	cf.\ again Eq.~\eqref{eq:F-M-structure}. Building on previous arguments, the numerator and denominator on the right respectively converge to $(1 + \widehat z)^{\mathbb F_\Lambda(z)}$ and $\Xi_\Lambda(z)$ and the additional claim follows.
\end{proof}

\noindent Having completed our preparation, Proposition~\ref{prop:fragmentation} is easy to deal with.

\begin{proof}[Proof of Proposition~\ref{prop:fragmentation}]
	Unsurprisingly, we choose
	\begin{align*}
		\mathbb V(z) & = \left\{ \Lambda \in \mathbb B \, \Bigg\vert \, \Xi_\Lambda(z) = \infty, \sum_{B \in \mathbb F_\Lambda(z)} z(B) < \infty \right\} \\
		& = \bigl\{ \Lambda \in \mathbb B \mid \Xi_\Lambda(z) = \infty, \, (1 + \widehat z)^{\mathbb B_\Lambda(z)} < \infty \bigr\}
	\end{align*}
	and assume Condition~(i) in Theorem~\ref{thm:existence} to fail, i.e.\ $\mathbb V(z) \neq \varnothing$. Fix $\Lambda \in \mathbb V(z)$, observe that
	\[
		\mathbb B_\Lambda \cap \mathbb V(z) = \{B \in \mathbb B_\Lambda \mid \Xi_B(z) = \infty\} = \mathbb B_\Lambda \setminus \mathbb F_\Lambda(z)
	\]
	and that this intersection must be infinite by virtue of Eq.~\eqref{eq:pfr}, cf.\ also the beginning of the proof of Lemma~\ref{lem:finite-pf}.
	
	From Lemmas~\ref{lem:convergence-void-weak} and \ref{lem:convergence-F-void}, we already obtain
	\[
		\lim_{n \to \infty} \P_\Lambda^{(n)}(\omega \ni B) = \mathbb H_{\widehat \rho_{z_\Lambda}}(\omega \ni B) \leq \widehat \rho_z(B) = \frac{z(B)}{\Xi_B(z)} = 0
	\]
	for all $B \in \mathbb B_\Lambda \cap \mathbb V(z)$ and, specifically for $B = \Lambda$,
	\[
		\lim_{n \to \infty} \P_\Lambda^{(n)}(\omega \cap \mathbb B_B \neq \varnothing) = 1 - \frac{1}{\Xi_\Lambda(z)} = 1 = \lim_{n \to \infty} \P_\Lambda^{(n)}(\omega \cap \mathbb B_B \cap \mathbb V(z) \neq \varnothing)
	\]
	as well as
	\[
		\lim_{n \to \infty} \P_\Lambda^{(n)}(\omega \cap \mathbb B_B \setminus \mathbb V(z) \neq \varnothing) = 1 - \frac{1}{(1 + \widehat z)^{\mathbb B_\Lambda}} < 1.
	\]
	The latter restriction is easily overcome by writing
	\[
		\P_\Lambda^{(n)}(\omega \cap \mathbb B_B \cap \mathbb X \neq \varnothing) = \bigl( 1 - \widehat \rho_{z^{(n)}} \bigr)^{\mathbb A_B^* \cap \mathbb B_\Lambda} \, \P_B^{(n)}(\omega \cap \mathbb B_B \cap \mathbb X \neq \varnothing)
	\]
	with $\mathbb X \in \{\mathbb B, \mathbb V(z), \mathbb B \setminus \mathbb V(z)\}$ and noticing that the product over the finite set $\mathbb A_B^* \cap \mathbb B_\Lambda$ converges to $1$ whenever $B \in \mathbb B_\Lambda \cap \mathbb V(z)$.
\end{proof}

\noindent With fragmentation out of the way, we turn to condensation. There is no additional preparation needed but we reiterate that taking limits as $\Lambda \uparrow \R_+^d$ simply means taking limits as $\Lambda$ runs through an increasing sequence $\Lambda_1 \subset \Lambda_2 \subset \ldots$ of cubes in $\mathbb B$ with $\bigcup_{n \in \N} \Lambda = \R_+^d$.

\begin{proof}[Proof of Proposition~\ref{prop:condensation}]
	Suppose that $z$ satisfies Condition~(i) in Theorem~\ref{thm:existence}. Then, for any given $\Lambda \in \mathbb B$, $\P_\Lambda$ exists and is just the hierarchical measure associated with $\widehat \rho_{z_\Lambda} = \widehat \rho_z \, \1_{\mathbb B_\Lambda}$. In particular, one has
	\[
		\P_\Lambda(\omega \ni B) = \1_{\mathbb B_\Lambda}(B) \, \widehat \rho_z(B) \, (1 - \widehat \rho_z)^{\mathbb A_B^* \cap \mathbb B_\Lambda} = \1_{\mathbb B_\Lambda}(B) \, \frac{\widehat z(B)}{(1 + \widehat z)^{\mathbb A_B \cap \mathbb B_\Lambda}}
	\]
	and
	\[
		\P_\Lambda(\omega \cap \mathbb A_B \neq \varnothing) = 1 - (1 - \widehat \rho_z)^{\mathbb A_B \cap \mathbb B_\Lambda} = 1 - \frac{1}{(1 + \widehat z)^{\mathbb A_B \cap \mathbb B_\Lambda}}
	\]
	for all $\Lambda, B \in \mathbb B$, cf.\ also Lemma~\ref{lem:rhohat-summable}. Since Condition~(ii) in Theorem~\ref{thm:existence} can be equivalently phrased as
	\[
		\frac{1}{(1 + \widehat z)^{\mathbb A_B}} = \lim_{\Lambda \uparrow \R_+^d} \frac{1}{(1 + \widehat z)^{\mathbb A_B \cap \mathbb B_\Lambda}} > 0
	\]
	for all (or, equivalently, some) $B \in \mathbb B$, the proposition follows.
\end{proof}

\begin{remark}
	In analogy with the remark following the proof of Lemma~\ref{lem:convergence-void-weak}, it is easy to show that, provided their existence, the finite-volume Gibbs measures $\P_\Lambda$ always converge weakly to the unique candidate Gibbs state $\mathbb H_{\widehat \rho_z}$ as $\Lambda \uparrow \R_+^d$. The crux of the issue is yet again the extension of this convergence to appropriate void probabilities.
\end{remark}

\subsection{Proofs of Theorems~\ref{thm:decay-hom-general}~and~\ref{thm:decay-hom-parametric}} \label{subsec:proofs-decay}

Finally, let us tackle the exponential decay of block covariances. Although we formulated these results for homogeneous activities, the factorisation they are based on holds for all our Gibbs measures.

\begin{lemma} \label{lem:covariance-explicit}
	Suppose that $z$ admits a Gibbs measure $\mathbb P$. Then, for all distinct $B, B' \in \mathbb B$,
	\[
		\mathbb P(\omega \supset \{B, B'\}) = \1_\Delta(\{B, B'\}) \, \mathbb P(\omega \ni B) \, \mathbb P(\omega \ni B') \, (1 + R(B''))
	\]
	where $\{B''\} = \min(\mathbb A_B^* \cap \mathbb A_{B'}^*)$ and $R(B'') = (1 + \widehat z)^{\mathbb A_{B''}} - 1 \downarrow 0$ as $B'' \uparrow \mathbb R_+^d$.
\end{lemma}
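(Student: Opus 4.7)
\emph{Plan.} The strategy is to transport the question into the hierarchical-measure representation provided by Theorem~\ref{thm:existence} and Lemma~\ref{lem:hierarchical}, reducing the asserted identity to a combinatorial manipulation of ancestor-set exponents. First I would invoke Theorem~\ref{thm:existence}, which forces $\P = \mathbb H_{\widehat \rho_z}$, and dispose of the case $\{B,B'\} \notin \Delta$: the Gibbs measure is supported on $\Delta$, so $\P(\omega \supset \{B,B'\}) = 0$, matching the $\1_\Delta$-factor on the right. For disjoint distinct $B,B'$, Lemma~\ref{lem:hierarchical} yields
\[
\P(\omega \ni B) = \widehat \rho_z(B)(1-\widehat \rho_z)^{\mathbb A_B^*}, \quad \P(\omega \ni B') = \widehat \rho_z(B')(1-\widehat \rho_z)^{\mathbb A_{B'}^*},
\]
and $\P(\omega \supset \{B,B'\}) = \widehat \rho_z(B)\widehat \rho_z(B')(1-\widehat \rho_z)^{\mathbb A_B^*\cup \mathbb A_{B'}^*}$. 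The $\widehat \rho_z$-prefactors line up, so the task reduces to computing the ratio $(1-\widehat \rho_z)^{\mathbb A_B^*}(1-\widehat \rho_z)^{\mathbb A_{B'}^*}\,/\,(1-\widehat \rho_z)^{\mathbb A_B^*\cup \mathbb A_{B'}^*}$.

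Next I would exploit the hierarchical-tree structure of $\mathbb B$. The ancestors of a single block form a strictly increasing chain, and since every finite region of $\R_+^d$ is contained in some large-scale cube, the intersection $\mathbb A_B^*\cap \mathbb A_{B'}^*$ is a non-empty chain with a unique minimum $B''$; above $B''$ the two ancestor-chains coincide, so $\mathbb A_B^*\cap \mathbb A_{B'}^* = \mathbb A_{B''}$. Multiplicative inclusion--exclusion on the exponent (valid because $\widehat \rho_z < 1$ everywhere, hence no zero factors) then gives
\[
(1-\widehat \rho_z)^{\mathbb A_B^* \cup \mathbb A_{B'}^*} = \frac{(1-\widehat \rho_z)^{\mathbb A_B^*}(1-\widehat \rho_z)^{\mathbb A_{B'}^*}}{(1-\widehat \rho_z)^{\mathbb A_{B''}}},
\]
and substituting $1-\widehat \rho_z = 1/(1+\widehat z)$ turns the correction factor into $(1+\widehat z)^{\mathbb A_{B''}} = 1 + R(B'')$, as claimed.

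Finally, for the monotone decay $R(B'') \downarrow 0$ as $B'' \uparrow \R_+^d$: monotonicity is immediate since $\mathbb A_{B''}$ shrinks along any increasing chain. For convergence to zero, Condition~(ii) of Theorem~\ref{thm:existence} (which must hold since $\mathcal G(z) \neq \varnothing$) provides $\sum_{B'\in \mathbb A_B}\widehat z(B') <\infty$ for some (hence every) $B$; once the chain contains a fixed base block, the tail sums $\sum_{B'\in \mathbb A_{B''}}\widehat z(B')$ become tails of this convergent series and vanish, forcing $(1+\widehat z)^{\mathbb A_{B''}} \to 1$. The only genuinely non-routine step is the geometric identification $\mathbb A_B^*\cap \mathbb A_{B'}^* = \mathbb A_{B''}$; the rest is algebra together with the summability already in hand from Condition~(ii).
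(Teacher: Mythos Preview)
Your proposal is correct and follows essentially the same route as the paper: identify $\P$ with the hierarchical measure $\mathbb H_{\widehat\rho_z}$, write out the three probabilities via Eq.~\eqref{eq:hiero-subsets}, use $\mathbb A_B^*\cap\mathbb A_{B'}^*=\mathbb A_{B''}$ and $1-\widehat\rho_z=1/(1+\widehat z)$ to obtain the correction factor, and invoke Condition~(ii) for the decay $R(B'')\downarrow 0$. The paper's proof is terser but the argument is the same; your remark that $\widehat\rho_z<1$ (so no zero factors) together with Condition~(ii) is what makes the infinite-product manipulations legitimate, which the paper leaves implicit.
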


\begin{proof}
	Fix two blocks $B, B' \in \mathbb B$. By the results of Subsection~\ref{subsec:proofs-existence}, $\P = \mathbb{H}_{\widehat \rho_z}$ and the relevant probabilities are given by
	\[
		\P(\omega \supset \{B, B'\}) = \1_\Delta(\{B, B'\}) \, \widehat \rho_z(B) \, \widehat \rho_z(B') \, (1 - \widehat \rho_z)^{\mathbb A_{\{B, B'\}}^*}
	\]
	as well as
	\[
		\P(\omega \ni \mathcal B) = \widehat \rho_z(B) \, (1 - \widehat \rho_z)^{\mathbb A_B^*} \quad \text{and} \quad \P(\omega \ni B') = \widehat \rho_z(B') \, (1 - \widehat \rho_z)^{\mathbb A_{B'}^*}.
	\]
	The result follows by inserting
	\[
		(1 - \widehat \rho_z)^\mathbb X = \frac{1}{(1 + \widehat z)^\mathbb X}, \quad \mathbb X \in \{\mathbb A_B^*, \, \mathbb A_{B'}^*, \, \mathbb A_{\{B, B'\}}^* = \mathbb A_B^* \cup \mathbb A_{B'}^*\},
	\]
	and observing that Condition~(ii) in Theorem~\ref{thm:existence} yields $(1 + \widehat z)^{\mathbb A_B} \downarrow 1$ as $B \uparrow \R_+^d$.
\end{proof}

\noindent What is left in the proofs of Theorems~\ref{thm:decay-hom-general} and \ref{thm:decay-hom-parametric} is mainly the asymptotic analysis of the factor $R(B)$ as $B \uparrow \mathbb R_+^d$. Clearly, $R(B) = (1 + \widehat z)^{\mathbb A_B} - 1$ has the same (multiplicative) asymptotics as $\sum_{B' \in \mathbb A_B} \widehat z(B')$ whenever either expression is finite but we can be far more precise than that. The following simple analytical observation will prove helpful.

\begin{lemma} \label{lem:gamma-series-summand-equivalence}
	Let $r > 0$ and $b > 1$. Then
	\[
	\e^{- r \, b^j} \leq \sum_{k \geq j} \e^{- r \, b^k} \leq \e^{- r \, b^j} \left( 1 + \frac{1}{r \, \log(b) \, b^j} \right)
	\]
	for all $j \in \mathbb Z$.
\end{lemma}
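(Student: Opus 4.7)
The lower bound is immediate: the sum $\sum_{k \geq j} \e^{-rb^k}$ contains the single term $\e^{-rb^j}$ and all remaining terms are non-negative. So the plan reduces to establishing the upper bound, which is equivalent to showing
\[
	\sum_{k \geq j+1} \e^{-rb^k} \leq \frac{\e^{-rb^j}}{r \log(b) \, b^j}.
\]

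The approach is the standard integral comparison. Since $b > 1$, the function $x \mapsto \e^{-rb^x}$ is strictly decreasing on $\R$, so $\e^{-rb^k} \leq \int_{k-1}^k \e^{-rb^x} \, \dd x$ for each $k \geq j+1$; summing telescopes to
\[
	\sum_{k \geq j+1} \e^{-rb^k} \leq \int_j^\infty \e^{-rb^x} \, \dd x.
\]
Next I would apply the substitution $u = r b^x$, giving $\dd u = u \log(b) \, \dd x$, to rewrite the right-hand side as $\frac{1}{\log(b)} \int_{rb^j}^\infty \frac{\e^{-u}}{u} \, \dd u$. This is a tail of the exponential integral and is crudely bounded by pulling the factor $\frac{1}{u} \leq \frac{1}{rb^j}$ out of the integral, yielding
\[
	\int_j^\infty \e^{-rb^x} \, \dd x \leq \frac{1}{\log(b)} \cdot \frac{1}{rb^j} \int_{rb^j}^\infty \e^{-u} \, \dd u = \frac{\e^{-rb^j}}{r \log(b) \, b^j}.
\]
Adding back the $k=j$ term to factor out $\e^{-rb^j}$ then yields the claimed upper bound.

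There is no real obstacle here: the estimate is a routine comparison between a super-geometric series and an exponential-integral tail, and the only care required is to keep the constants in the $u$-substitution straight so that $\log(b)$ appears in the denominator. The bound is of course not tight, but it is exactly the form needed in the subsequent asymptotic analysis of $R(B) = (1 + \widehat z)^{\mathbb A_B} - 1$ in the proofs of Theorems~\ref{thm:decay-hom-general} and~\ref{thm:decay-hom-parametric}, where it ensures that the prefactor $1 + \tfrac{1}{r\log(b) b^j}$ tends to $1$ and therefore contributes nothing at the level of logarithmic asymptotics.
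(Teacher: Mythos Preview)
Your proof is correct and is essentially identical to the paper's: both treat the lower bound as trivial, bound the tail $\sum_{k\geq j+1}\e^{-rb^k}$ by the integral $\int_j^\infty \e^{-rb^x}\,\dd x$ via monotonicity, substitute $u=rb^x$ (the paper writes $y$), and then estimate $1/u\leq 1/(rb^j)$ to obtain $\e^{-rb^j}/(r\log(b)\,b^j)$.
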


\begin{proof}
	Fix some arbitrary $j \in \mathbb Z$. Then
	\[
		\sum_{k \geq j + 1} \e^{- r \, b^k} \leq \int_j^\infty \frac{\dd x}{\e^{r \, b^x}} = \int_{r \, b^j}^\infty \frac{\dd y}{\log(b) \, y \, \e^y} \leq \frac{\e^{- r \, b^j}}{r \, \log(b) \, b^j}
	\]
	and the lemma follows.
\end{proof}

\noindent Recall that we are considering homogeneous activities and may write
\[
	R(B) = R_j = \prod_{k \geq j} (1 + \widehat z_k) - 1
\]
for $B \in \mathbb B_j$, $j \in \mathbb Z$. Note also that we may write
\[
	\widehat z_j = z_j \, \e^{- 2^{dj} \, p_{j - 1}} \quad \text{with} \quad p_j = \sum_{k \leq j} 2^{- dk} \log(1 + \widehat z_k)
\]
for all $j \in \mathbb Z$, cf.\ \cite[Theorem~3.1]{jansen2020hierarchical}.

\begin{lemma} \label{lem:R-asymptotics-general}
	Suppose that $z$ admits a Gibbs measure $\mathbb P$. Then $\theta^* \leq p < \infty$, $\lim_{j \to \infty} 2^{dj} \, (p - p_{j - 1}) = 0$ and
	\[
		\limsup_{j \to \infty} 2^{- dj} \, \log(R_j) = \limsup_{j \to \infty} 2^{- dj} \, \log(\widehat z_j) = \theta^* - p \in [- \infty, 0].
	\]
\end{lemma}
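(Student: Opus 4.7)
The plan is to break the proof into four steps, using the identity $\widehat z_j = z_j \, \e^{-2^{dj} p_{j-1}}$ stated just before the lemma and the tail estimate in Lemma~\ref{lem:gamma-series-summand-equivalence}.

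First, to show $p < \infty$ and $\theta^* \leq p$, I would split the pressure series $p = \sum_{j \in \Z} 2^{-dj} \log(1+\widehat z_j)$ at $j = 0$, bound $\log(1 + \widehat z_k) \leq \widehat z_k \leq z_k$, and invoke the summability conditions of Theorem~\ref{thm:existence-hom} (available because $\mathcal G(z)\neq\varnothing$): the negative-scale tail is dominated by $\sum_{j \geq 0} 2^{dj} z_{-j} < \infty$, and the positive-scale tail by $\sum_{j \geq 0} \widehat z_j < \infty$. Taking logarithms in the identity yields $2^{-dj} \log \widehat z_j = 2^{-dj} \log z_j - p_{j-1}$; since $p_{j-1} \to p$, passing to the $\limsup$ gives $\limsup_{j \to \infty} 2^{-dj} \log \widehat z_j = \theta^* - p$. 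Summability of $(\widehat z_j)_{j \geq 0}$ forces $\widehat z_j \to 0$, hence this $\limsup$ is $\leq 0$ and $\theta^* \leq p$.

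Next, the refined statement $2^{dj}(p - p_{j-1}) \to 0$ is a direct consequence of the bound
\[
	2^{dj}(p - p_{j-1}) = \sum_{k \geq j} 2^{-d(k-j)} \log(1 + \widehat z_k) \leq \sum_{k \geq j} \widehat z_k,
\]
a vanishing tail of a convergent series.

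Finally, for the $R_j$-asymptotic, the lower bound $R_j \geq \widehat z_j$ (keep only the $k=j$ factor in the product defining $R_j + 1$) combined with the second equality already established gives $\limsup 2^{-dj} \log R_j \geq \theta^* - p$. The main obstacle is the matching upper bound. I would start from $R_j \leq \exp(\sum_{k \geq j} \log(1+\widehat z_k)) - 1 \leq \e^{\sum_{k \geq j} \widehat z_k} - 1 \leq 2 \sum_{k \geq j} \widehat z_k$ (valid for $j$ large enough that the exponent is at most $1$, using $\e^x - 1 \leq 2x$ on $[0, 1]$), reducing the task to controlling the tail sum. For any $\eps > 0$ with $\theta^* - p + \eps < 0$, the second equality implies $\widehat z_k \leq \exp(2^{dk}(\theta^* - p + \eps))$ for all sufficiently large $k$, so Lemma~\ref{lem:gamma-series-summand-equivalence} with $r = p - \theta^* - \eps > 0$ and $b = 2^d$ yields $\sum_{k \geq j} \widehat z_k \leq \e^{-r \, 2^{dj}} (1 + O(2^{-dj}))$. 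Taking $2^{-dj} \log$, sending $j \to \infty$, and then $\eps \downarrow 0$ gives $\limsup 2^{-dj} \log R_j \leq \theta^* - p$. The degenerate cases are handled separately: if $\theta^* = p$, then $R_j \to 0$ by Condition~(ii) of Theorem~\ref{thm:existence}, so $2^{-dj} \log R_j \leq 0$ eventually, matching $\theta^* - p = 0$; if $\theta^* = -\infty$, the same Lemma~\ref{lem:gamma-series-summand-equivalence} argument works with $r > 0$ arbitrarily large, giving $\limsup = -\infty$.
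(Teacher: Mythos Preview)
Your proposal is correct and follows essentially the same approach as the paper. The paper's proof is more terse: it packages the upper and lower bounds into the single chain $\widehat z_j \leq \sum_{k\geq j}\widehat z_k \leq R_j \leq (1+R_j)\sum_{k\geq j}\widehat z_k$ and invokes Lemma~\ref{lem:gamma-series-summand-equivalence} without spelling out the $\varepsilon$-argument or the degenerate cases $\theta^*=p$ and $\theta^*=-\infty$, whereas you make all of these explicit; the only cosmetic difference is your use of $\e^x-1\leq 2x$ in place of the paper's $R_j/(1+R_j)\leq\sum_{k\geq j}\widehat z_k$.
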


\begin{proof}
	Theorem~\ref{thm:existence-hom} readily implies
	\[
		0 \leq p = \sum_{j \in \mathbb Z} 2^{- dj} \, \log(1 + \widehat z_j) \leq \sum_{j \in \mathbb N} 2^{dj} \, z_{-j} + \sum_{j \in \mathbb N_0} \widehat z_j < \infty
	\]
	as well as
	\[
		0 \leq 2^{dj} \, (p - p_{j - 1}) = \sum_{k \geq j} 2^{d (j - k)} \, \log(1 + \widehat z_k) \leq \sum_{k \geq j} \widehat z_k \searrow 0
	\]
	as $j \uparrow \infty$. Furthermore,
	\begin{align*}
		- \infty \leq \theta^* & = \limsup_{j \to \infty} 2^{- dj} \, \log(z_j) \\
		& \leq \limsup_{j \to \infty} 2^{- dj} \, \log \left( \Xi_{[0, 2^j)^d}(z) \right) \\
		& = p,
	\end{align*}
	see also Lemma~\ref{lem:finite-pf}. Thus, we have
	\[
		\limsup_{j \to \infty} 2^{- dj} \, \log(\widehat z_j) = \limsup_{j \to \infty} 2^{- dj} \, \log(z_j) - p = \theta^* - p \in [- \infty, 0].
	\]
	Finally, observe that $\lim_{j \to \infty} R_j = 0$ clearly entails
	\[
		\limsup_{j \to \infty} 2^{- dj} \, \log(R_j) \leq 0
	\]
	and that the latter superior limit is equal to $\limsup_{j \to \infty} 2^{- dj} \, \log(\widehat z_j)$ by applying Lemma~\ref{lem:gamma-series-summand-equivalence} with $b = 2^d$ and the bounds
	\[
		\widehat z_j \leq \sum_{k \geq j} \widehat z_k \leq R_j = \prod_{k \geq j} (1 + \widehat z_k) - 1 \leq (1 + R_j) \, \sum_{k \geq j} \widehat z_k,
	\]
	valid for all $j \in \mathbb Z$.
\end{proof}

\noindent This concludes the proof of our first decorrelation theorem.

\begin{proof}[Proof of Theorem~\ref{thm:decay-hom-general}]
	Simply combine Lemmas~\ref{lem:covariance-explicit} and \ref{lem:R-asymptotics-general}.
\end{proof}

\noindent The proof of our second decorrelation theorem is essentially an extension of the final arguments above, using the specific choice of $z$.

\begin{proof}[Proof of Theorem~\ref{thm:decay-hom-parametric}]
	For what remains to be shown, we fix $\mu \in \mathbb R$, $J > 0$ and $\alpha \in (0, 1)$ such that $\sum_{j \in \mathbb N_0} \widehat z_j < \infty$ with
	\[
		\widehat z_j = \widehat z_j(\mu, J, \alpha) = \exp \left( 2^{dj} \, (\mu - p_{j - 1}) - 2^{\alpha dj} \, J \right) \, \1_{\{j \geq 0\}}
	\]
	for all $j \in \mathbb Z$. Recall that $\theta^* = \mu$ in the present setting. Adapting the arguments from the end of the proof of Lemma~\ref{lem:R-asymptotics-general}, we have
	\[
		\widehat z_j \, \e^{2^{dj} \, (p - \theta^*)} \leq R_j \, \e^{2^{dj} \, (p - \theta^*)} \leq (1 + R_j) \, \sum_{k \geq j} \widehat z_k \, \e^{2^{dk} \, (p - \theta^*)},
	\]
	where we used $p - \theta^* \geq 0$ in the second bound. Note that, for all $j \in \mathbb Z$,
	\[
		\widehat z_j \, \e^{2^{dj} \, (p - \theta^*)} = \exp \left( 2^{dj} \, (p - p_{j - 1}) - 2^{\alpha dj} \, J \right)
	\]
	and that, again, Lemmas~\ref{lem:covariance-explicit} and \ref{lem:R-asymptotics-general} respectively yield
	\[
		\lim_{j \to \infty} R_j = 0 \quad \text{and} \quad \lim_{j \to \infty} 2^{dj} \, (p - p_{j - 1}) = 0.
	\]
	In conclusion, we obtain
	\[
		\lim_{j \to \infty} \left( \log(R_j) + 2^{dj} \, (p - \theta^*) + 2^{\alpha dj} \, J \right) = \lim_{j \to \infty} 2^{dj} \, (p - p_{j - 1}) = 0
	\]
	from another application of Lemma~\ref{lem:gamma-series-summand-equivalence} with $r = J$ and $b = 2^{\alpha d}$.
\end{proof}

\noindent At the very end, let us briefly indicate how to extend the exponential decay from covariances of blocks to covariances of (finite sub-)configurations. Some readers might expect an exponential mixing statement resembling \cite[Corollary~7.9, Theorem~7.12, Definition~9.5]{georgiihaeggstroemmaes} but we stick with a formulation that is more in line with the theory of Gibbsian point processes.

Most of Lemma~\ref{lem:covariance-explicit} is easily generalised to the following statement: For all $\P \in \mathcal G(z)$ and all disjoint finite subsets $\mathcal B, \mathcal B' \subset \mathbb B$,
\[
	\P(\omega \supset \mathcal B \cup \mathcal B') = \1_\Delta(\mathcal B \cup \mathcal B') \, \P(\omega \supset \mathcal B) \, \P(\omega \supset \mathcal B') \, (1 + R(\mathcal B''))
\]
where $\mathcal B'' = \min(\mathbb A_\mathcal B^* \cap \mathbb A_{\mathcal B'}^*)$ and $R(\mathcal B'') = (1 + \widehat z)^{\mathbb A_{\mathcal B''}} - 1$ with
\[
	\sum_{B \in \mathbb A_{\mathcal B''}} \widehat z(B) \leq R(\mathcal B'') \leq (1 + R(\mathcal B'')) \, \sum_{B \in \mathbb A_{\mathcal B''}} \widehat z(B),
\]
\[
	\sum_{B \in \mathbb A_{\mathcal B''}} \widehat z(B) \leq |\mathcal B''| \max_{B \in \mathcal B''} \sum_{B' \in \mathbb A_B} \widehat z(B'), \quad 1 \leq |\mathcal B''| \leq \min\{|\mathcal B|, \, |\mathcal B'|\},
\]
and
\[
	1 \leq (1 + R(\mathcal B'')) \leq \prod_{B \in \mathcal B''} (1 + R(B)) \leq \max_{B \in \mathcal B''} (1 + R(B))^{|\mathcal B''|}.
\]
Assuming homogeneous $z$ and $\mathcal B \cup \mathcal B' \in \Delta$, the analysis yielding the proof of Theorem~\ref{thm:decay-hom-general} then yields
\begin{align*}
	& |\P(\omega \supset \mathcal B \cup \mathcal B') - \P(\omega \supset \mathcal B) \, \P(\omega \supset \mathcal B')| \leq \\
	& \qquad \P(\omega \supset \mathcal B) \, \P(\omega \supset \mathcal B') \, \min\{|\mathcal B|, \, |\mathcal B'|\} \, c_{\mathcal B, \mathcal B'}(\theta) \, \e^{- (p - \theta) \, D(\mathcal B, \mathcal B')}
\end{align*}
for all finite $\theta \in [\theta^*, p)$ where, writing $\mathrm{lcs}(\mathcal B, \mathcal B') = \min_{(B, B') \in \mathcal B \times \mathcal B'} \mathrm{lcs}(B, B')$,
\[
	D(\mathcal B, \mathcal B') = 2^{d \, \mathrm{lcs}(\mathcal B, \mathcal B')} \, \1_{\{\mathcal B \cap \mathcal B' = \varnothing\}} = \min_{(B, B') \in \mathcal B \times \mathcal B'} D(B, B')
\]
is just the hierarchical distance between the sets $\mathcal B$ and $\mathcal B'$ and
\[
	c_{\mathcal B, \mathcal B'}(\theta) \leq (1 + R_{\mathrm{lcs}(\mathcal B, \mathcal B')})^{\min\{|\mathcal B|, \, |\mathcal B'|\} - 1} \, c_{\mathrm{lcs}(\mathcal B, \mathcal B')}(\theta)
\]
with $\limsup_{j \to \infty} 2^{- dj} \, \log(c_j(\theta)) = \theta^* - \theta \in [- \infty, 0]$.

The above decorrelation result should be seen as an analogue to \cite[Theorem~2]{benesetal2020decorrelation}. The rigorous link consists in the fact that, in our setting of simple point processes on the discrete space $\mathbb B$, implicitly equipped with the counting measure, the sequence of point process theoretic factorial moment measures $(\alpha_p)_{p \in \mathbb N_0}$ and their densities $(\rho_p)_{p \in \mathbb N_0}$, customarily called correlation functions, can be identified with
\[
	\sum_{\substack{ \mathcal B \subset \mathbb B: \\ |\mathcal B| < \infty }} \P(\omega \supset \mathcal B) \, \delta_\mathcal B \quad \text{and} \quad \mathcal B \mapsto \P(\omega \supset \mathcal B),
\]
respectively. In our model, the latter objects are supported on finite configurations in $\Delta$.

\subsubsection*{Statement on data availability and no conflict of interest} 
Data sharing is not applicable. We do not analyse or generate any datasets, because our work proceeds within a theoretical and mathematical approach. 

The authors have no competing interests to declare that are relevant to the content of this article

\subsubsection*{Acknowledgement.} This research has been funded by the Deutsche Forschungsgemeinschaft (DFG) by grant SPP 2265 ``Random Geometric Systems'', Project P13.

\end{document}